\pdfoutput=1
\documentclass[
	bigMargin = false,
	final,
	font = palatino
	]{MyClass}

\usepackage{myMathHeader}
\usepackage{autonum}

\crefname{subsection}{Subsection}{Subsections}

\addbibresource{Literature.bib}

\let\ConnSpace\relax
\let\GauGroup\relax
\let\GauAlgebra\relax
\let\SymTensorFieldSpace\relax
\let\MetricSpace\relax
\let\DiffGroup\relax
\let\VolSpace\relax
\MyNewMathOperator{\ConnSpace}		{command={\mathbrush{C}}, sort={C}, display={$\ConnSpace(P)$}, description={Space of connections of the principal bundle $P$}}
\MyNewMathOperator{\GauGroup}			{command={\mathbrush{G}}, sort={G}, display={$\GauGroup(P)$}, description={Group of gauge transformation of the (principal) bundle $P$}}
\MyNewMathOperator{\GauAlgebra}		{command={\mathrm{L}\mathbrush{G}}, sort={LG}, display={$\GauAlgebra(P)$}, description={Lie algebra of infinitesimal gauge transformation of the principal bundle $P$}}
\MyNewMathOperator{\SymTensorFieldSpace}		{command={\mathrm{S}}}
\MyNewMathOperator{\MetricSpace}		{command={\mathbrush{M}}}
\MyNewMathOperator{\DiffGroup}		{command={\mathbrush{D}}}
\MyNewMathOperator{\VolSpace}		{command={\mathbrush{V}}}

\RenewDocumentCommand{\wedgeDual}{s m m}{#2 \mathbin{\dot{\wedge}} #3}


\Title{Clebsch-Lagrange variational principle and geometric constraint analysis of relativistic field theories}

\Abstract{
	Inspired by the Clebsch optimal control problem, we introduce a new variational principle that is suitable for capturing the geometry of relativistic field theories with constraints related to a gauge symmetry.
	Its special feature is that the Lagrange multipliers couple to the configuration variables via the symmetry group action.
	The resulting constraints are formulated as a condition on the momentum map of the gauge group action on the phase space of the system.
	We discuss the Hamiltonian picture and the reduction of the gauge symmetry by stages in this geometric setting.
	We show that the Yang--Mills--Higgs action and the Einstein--Hilbert action fit into this new framework after a \( (1+3) \)-splitting.
	Moreover, we recover the Gauß constraint of Yang--Mills--Higgs theory and the diffeomorphism constraint of general relativity as momentum map constraints.
}

\Keywords{Dirac's theory of constraints, relativistic field theory, Yang--Mills theory, general relativity, Clebsch optimal control problem}

\MSC{
	37K05, 
	(%
	53D20, 
	58E30, 
	70H30, 
	70H45, 
	70S05, 
	70S10, 
	70S15, 
	83C05
	)}

\Institution{itp}{
	Institut für Theoretische Physik, 
	Universität Leipzig, 
	04009 Leipzig, Germany	}
\Institution{mpiitp}{
	Max Planck Institute for Mathematics in the Sciences,
	04103 Leipzig, Germany
	and
	Institut für Theoretische Physik, 
	Universität Leipzig, 
	04009 Leipzig, Germany	}

\Author[
	affiliation = mpiitp,
	email = tobias.diez@mis.mpg.de,
	]{diez}{Tobias Diez}
\Author[
	affiliation = itp,
	email = {rudolph@rz.uni-leipzig.de}
	]{rudolph}{Gerd Rudolph}

\begin{document}

\MakeTitle

\tableofcontents

\listoftodos

\section{Introduction}

Relativistic field theories of gauge symmetry type can be formulated as constrained Hamiltonian systems.
Using a \( (1 + 3) \)-decomposition of spacetime, the four dimensional field equations usually split into hyperbolic evolution equations and elliptic constraint equations for the Cauchy data.
A remarkable observation is that the constraints often can be formulated in terms of the momentum map associated to the action of the symmetry group of the theory.
This is usually verified by a fairly routine calculation in each example separately (see, \eg, \parencite{ArmsMarsdenEtAl1981,Arms1981} for pure Yang--Mills theory and \parencite{FischerMarsden1979,ArmsFischerMarsden1975} for general relativity).
Going beyond a case-by-case study, the general philosophy that the subset of the phase space cut out by the constraints can be identified with the zero set of a momentum map seems to be true in a large number of models.
In this paper, we will argue that the relationship between constraints and momentum maps is not a lucky coincidence but is rooted in a special feature of the four dimensional physical action.
This special feature is the main subject of the present paper.

Inspired by the Clebsch optimal control problem \parencite{Gay-BalmazRatiu2011}, we study a variational principle associated to a class of degenerate Lagrangians, whose degeneracy results from the action of the symmetry group.
The defining feature of this principle is that the Lagrange multipliers are elements of the Lie algebra of the symmetry group and that they couple to the configuration variables via a given Lie algebra action.
The resulting equations of motion, which we call the Clebsch--Euler--Lagrange equations, decompose into an evolution and a constraint equation.
Next, we define a Clebsch--Legendre transformation similar to the ordinary Legendre transformation leading to what we call the Clebsch--Hamiltonian picture.
As we will show, the constraints that arise from the degeneracy of the Lagrangian are phrased as momentum map constraints on the Hamiltonian side.

We then use the geometric formulation of the Dirac--Bergmann algorithm in the formulation of \textcite{GotayNesterHinds1978} to derive the Clebsch--Legendre transformation from an ordinary Legendre transformation of an extended Lagrangian system.
In this process, we isolate two constraint equations, which are later given a geometric interpretation in terms of a symplectic reduction by stages procedure.
This analysis of the constraints extends the well-known results for Yang--Mills theory \parencite{BergveltDeKerf1986,BergveltDeKerf1986a} and general relativity \parencite{Giulini2015}.

Since constraints which allow a reformulation in terms of symmetric Hamiltonian systems are abundant in field theories, this observation suggests that the Clebsch--Lagrange principle is fundamental for Cauchy problems.
The two examples that will be discussed are the Yang--Mills--Higgs equations and the Einstein equation.
In both cases, we will complete the following program:
\begin{enumerate}
	\item 
		Using a \( (1+3)\)-splitting of spacetime, formulate the field equations as a Cauchy problem and determine the constraints on the initial data.
		In particular, find a symplectic manifold on which the dynamics takes place.
	\item
		Identify the Lagrange multipliers as elements of the Lie algebra of the symmetry group and determine their action on the phase space.
	\item
		Calculate the Lagrangian in the \( (1+3) \)-splitting and show that it is of the Clebsch--Lagrange form.
	\item
		Pass to the Clebsch--Hamiltonian picture using the Clebsch--Legendre transformation.
		In particular, phrase the constraints in terms of the momentum map.
	\item
		Relate the Clebsch--Hamiltonian picture to the Hamiltonian system with constraints obtained by the ordinary Legendre transformation. 
	\item
		Discuss the symmetry reduction as a symplectic reduction by stages.
\end{enumerate}
After accomplishing this program, we are left with a singular symplectic cotangent bundle reduction in infinite dimensions, which is studied in detail in a separate paper \parencite{DiezRudolphReduction}.

\paragraph*{Acknowledgments}
We thank the anonymous reviewer for his/her comments and suggestions on an earlier draft, which significantly contributed to improving the quality of the paper.
We gratefully acknowledge support of the Max Planck Institute for Mathematics in the Sciences in Leipzig and of the University of Leipzig.


\section{Clebsch--Lagrange variational principle}
\label{sec:clebschLagrange}


In \parencite{Gay-BalmazRatiu2011}, \citeauthor{Gay-BalmazRatiu2011} study a class of optimal control problems associated to group actions.
The special feature of the \emphDef{Clebsch optimal control problem} is that the control variables are Lie algebra valued and couple to the state variables via the symmetry group action. 
Let \( G \) be a Lie group that acts smoothly on the smooth manifold \( Q \) and let \( \LieA{g} \) be its Lie algebra.
Given a smooth cost function \( l: Q \times \LieA{g} \to \R \), the Clebsch optimal control problem consists in finding curves \( t \mapsto q(t) \) and \( t \mapsto \xi(t) \) such that 
\begin{equation}
	s[q, \xi] = \int_0^T l\bigl(q(t), \xi(t)\bigr) \dif t
\end{equation}
is minimized subject to the optimal control constraint \( \dot q(t) = - \xi(t) \ldot q(t) \) and the endpoint constraints \( q(0) = q_i \) and \( q(T) = q_f \).
Here, \( \xi \ldot q \) is the fundamental vector field at \( q \in Q \) generated by the action of \( \xi \in \LieA{g} \).
The Pontryagin maximum principle shows that the control variable \( \xi \) satisfies the (generalized) Euler-Poincaré equation, see \parencite[Theorem~7.1]{Gay-BalmazRatiu2011}.
This observation allows one to give an optimal control formulation for many systems including the heavy top and the compressible or magnetohydrodynamic fluid flow.


\subsection{Clebsch--Lagrangian picture}


We now describe a novel variational principle which draws its inspiration from the Clebsch optimal control problem and which will turn out to be useful for relativistic field theories with constraints.
The idea is to dismiss the optimal control constraint and to treat, instead, the quantity \( \dot q(t) + \xi(t) \ldot q(t) \) as an effective velocity in the Lagrangian formulation.
As above, let \( Q \) be a smooth manifold and let \( \TBundle Q \) be its tangent bundle.
Assume that a Lie group \( G \) acts smoothly on \( Q \).
The configuration space \( Q \) as well as the Lie group \( G \) are assumed to be (infinite-dimensional) Fréchet manifolds.
We refer to \parencite{Neeb2006,Hamilton1982} for the differential calculus in Fréchet spaces.
In order to stay close to physics notation, we will usually denote points in \( \TBundle Q \) by pairs \( (q, \dot{q}) \), where \( q \in Q \) stands for the base point of the vector \( \dot{q} \in \TBundle_q Q \).
Sometimes, we will also write a point in \( \TBundle Q \) as a pair \( (q, v) \) with \( q \in Q \) and \( v \in \TBundle_q Q \).
Moreover, we use the dot notation \( g \cdot q \) for the action of \( g \in G \) on \( q \in Q \).
For the derivative of the action a similar notation using lower dots is employed, \ie, the fundamental vector field \( \xi_* \) generated by \( \xi \in \LieA{g} \) is written as \( \xi_* (q) = \xi \ldot q \in \TBundle_q Q \) and the lifted \( G \)-action on \( \TBundle Q \) has the form \( g \ldot X_q \in \TBundle_{g \cdot q} Q \) for \( X_q \in \TBundle_q Q \).

Given a smooth function \( L: \TBundle Q \times \LieA{g} \to \R \), we consider the variational principle \( \diF S = 0 \) for curves \( t \mapsto q(t) \in Q \) and \( t \mapsto \xi(t) \in \LieA{g} \), where the physical action \( S \) is of the form
\begin{equation}\label{eq:clebschLagrange:action}
	S[q, \xi] = \int_0^T L\bigl(q(t), \dot q(t) + \xi(t) \ldot q(t), \xi(t) \bigr) \dif t
\end{equation}
and where the variations of \(t \mapsto q(t) \) are restricted to vanish at the endpoints.
Here, \(t \mapsto (q (t), \dot q (t) + \xi(t) \ldot q (t)) \) is viewed as a curve in \( \TBundle Q \).
We will refer to this variational principle as the \emphDef{Clebsch--Lagrange principle} and to \( L \) as the \emphDef{Clebsch--Lagrangian}.

In order to derive the associated equations of motions, we first recall some notions and results from geometric mechanics (see, \eg \parencite{AbrahamMarsdenEtAl1980, RudolphSchmidt2012}).
In particular, we need the theory of linear connections in vector bundles.
A linear connection in a vector bundle \( \pi: E \to M \) is a vector bundle homomorphism \( K: \TBundle E \to E \)  over \( \pi: E \to M \) and over \( \TBundle M \to M \) such that the following digram commutes:
\begin{equationcd}
	\VBundle E \to[r, "K"] 
		& E \to[d, "\id_E"] 
	\\
	E \times_M E \to[r, "\pr_2"] \to[u, "\mathrm{vl}"]
		& E,
\end{equationcd}
where the vertical tangent bundle \( \VBundle E = \ker \tangent \pi \) is identified with \( E \times_M E \) via the linear structure on \( E \), that is,
\begin{equation}
	\mathrm{vl}: E \times_M E \to \VBundle E, \quad (e, v) \mapsto \difFracAt{}{\varepsilon}{0} (e + \varepsilon v) \in \VBundle_e E.
\end{equation}
The kernel \( \HBundle E \equiv \ker K \) of \( K \) is then a vector subbundle of \( \TBundle E \) such that \( \TBundle E = \HBundle E \oplus \VBundle E \) is a fiberwise topological isomorphism.
Accordingly, the connection \( K \) yields a bundle isomorphism
\begin{equation}
	\label{eq:connection:decomposition}
	\TBundle E \to E \times_M \TBundle M \times_M E, \quad (Z_e) \mapsto \bigl(e, \tangent_e \pi(Z_e), K(Z_e)\bigr).
\end{equation}
The inverse is given by
\begin{equation}
	\label{eq:connection:decompositionInverse}
	E \times_M \TBundle M \times_M E \to \TBundle E, \quad (e, X, v) \mapsto \mathrm{vl}(e, v) + X^K_e,
\end{equation}
where \( X^K_e \) denotes the \( K \)-horizontal lift of \( X \) to \( \TBundle_e E \).
Given a smooth section \( \phi \) of \( E \), its covariant derivative \( \nabla \phi \) is defined by
\begin{equation}
	\label{eq:connection:covDeriv}
	\nabla_X \phi = K \bigl(\tangent \phi (X)\bigr)
\end{equation}
for every \( X \in \VectorFieldSpace(M) \).
Using the Leibniz rule, \( \nabla \) is extended to the exterior covariant derivative \( \dif_K: \DiffFormSpace^k(M, E) \to \DiffFormSpace^{k+1}(M, E) \) of \( E \)-valued differential forms.
A soldering form on \( E \) is a vector bundle morphism \( \vartheta: \TBundle M \to E \) (usually, \( \vartheta \) is required to be an isomorphism of vector bundles, but we do not need this assumption).
The tangent bundle \( E = \TBundle M \) comes equipped with a tautological soldering form given by the identity \( \vartheta: \TBundle M \to \TBundle M \).
Given a soldering form \( \vartheta \in \DiffFormSpace^1(M, E) \), the torsion of a connection \( K \) on \( E \) relative to \( \vartheta \) is defined to be \( T = \dif_K \vartheta \in \DiffFormSpace^2(M, E) \).
In particular, we have
\begin{equation}
	T (X, Y)
		= \dif_K \vartheta (X, Y)
		= \nabla_X \bigl(\vartheta (Y)\bigr) - \nabla_Y \bigl(\vartheta (X)\bigr) - \vartheta (\commutator{X}{Y}),
\end{equation}
which, for the tangent bundle \( E = \TBundle M \) with the tautological soldering form, reduces to the usual defining relation for the torsion.
Let \( f: N \to M \) be a smooth map, and let \( E \to M \) be a vector bundle endowed with a connection \( K \).
The pull-back bundle \( f^* E \) is a smooth vector bundle over \( N \).
The connection on \( E \) yields via pull-back a connection in \( f^* E \) and thus induces a covariant derivative
\begin{equation}
 	\nabla^f_Y: \sSectionSpace(f^* E) \to \sSectionSpace(f^* E),
 	\quad \psi \mapsto K\bigl(\tangent \psi (Y)\bigr)
\end{equation} 
for every \( Y \in \VectorFieldSpace(N) \).
In particular, for a curve \( \gamma: \R \supseteq I \to M \), we write
\begin{equation}
	\label{eq:connection:alongCurve}
 	\DifFrac{}{t} \equiv \nabla^\gamma_{\difp_t}: \sSectionSpace(\gamma^* E) \to \sSectionSpace(\gamma^* E)
\end{equation}
for the covariant derivative along \( \gamma \) in the direction of the canonical vector field \( \difp_t \) on \( I \).

Returning to our original setting, let \( Q \) be a smooth manifold and let \( L: \TBundle Q \to \R \) be a smooth Lagrangian.
The \emphDef{fiber or vertical derivative} \( \difFibre L: \TBundle Q \to \TBundle' Q \) of \( L \) is defined by 
\begin{equation}
	\label{FibDer}
	\dualPair{\difFibre L (v)}{w} = \difFracAt{}{\varepsilon}{0} L(v + \varepsilon w)\, ,
\end{equation}
where \( v, w \in \TBundle_q Q \), and \( \TBundle' Q \) denotes the fiberwise dual of the tangent bundle.
In order to stay close to the usual physics notation, the fiber derivative \( \difFibre L: \TBundle Q \to \TBundle' Q \) will be written as \( \difpFrac{L}{\dot q} \).
We stress that the latter symbol will always stand for the mapping given by~\eqref{FibDer},
\begin{equation}
	\dualPair*{\difpFrac{L}{\dot{q}}(q, v)}{w}
		= \difFracAt{}{\varepsilon}{0} L(q, v + \varepsilon w) \,, 
		\qquad w \in \TBundle_q Q.
\end{equation}
In order to define the partial derivative \( \difpFrac{L}{q} \) in an intrinsic way, we need a linear connection in \( \TBundle Q \).
By~\eqref{eq:connection:decomposition}, every connection \( K: \TBundle (\TBundle Q) \to \TBundle Q \) in the bundle \( \TBundle Q \to Q \) yields a vector bundle isomorphism
\begin{equation}
	\label{eq:clebschLagrange:decompositionTTwo}
	\TBundle (\TBundle Q) \isomorph \TBundle Q \times_Q \TBundle Q \times_Q \TBundle Q.
\end{equation}
Accordingly, we will write elements of \( \TBundle (\TBundle Q) \) as tuples \( (q, \dot{q}, \diF q, \diF \dot{q}) \), where \( \dot{q}, \diF q \) and \( \diF \dot{q} \) are elements of \( \TBundle_q Q \).
We emphasize that the isomorphism~\eqref{eq:clebschLagrange:decompositionTTwo} and, in particular, the component \( \diF \dot{q} \) depend on the connection \( K \).
Now, for \( L: \TBundle Q \to \R \), the derivative of \( L \) at \( (q, \dot{q}) \in \TBundle Q \) with respect to the second component in the decomposition~\eqref{eq:clebschLagrange:decompositionTTwo} will be written as \( \difpFrac{L}{q}(q, \dot{q}): \TBundle Q \to \R \).
In contrast to the fiber derivative \( \difpFrac{L}{\dot{q}} \), the partial derivative \( \difpFrac{L}{q} \) depends on the choice of a connection in \( \TBundle Q \).
In summary, we express \( \tangent L: \TBundle (\TBundle Q) \to \R \) as
\begin{equation}
	\label{eq:clebschLagrange:tangentLagrange}
	\tangent L(q, \dot{q}, \diF q, \diF \dot{q}) = \dualPair*{\difpFrac{L}{q}(q, \dot{q})}{\diF q} + \dualPair*{\difpFrac{L}{\dot{q}}(q, \dot{q})}{\diF \dot{q}}.
\end{equation}
This relation is the intrinsic geometric version of the corresponding coordinate expression found in the physics literature.
For \( L: \TBundle Q \times \LieA{g} \to \R \), the derivative of \( L \) at \( (q, \dot q, \xi) \in \TBundle Q \times \LieA{g} \) in the \( \LieA{g} \)-direction yields an element of the topological dual \( \LieA{g}' \) and will be denoted by \( \difpFrac{L}{\xi}(q, \dot q, \xi) \).

In infinite dimensions, one needs to be careful with the notion of a cotangent bundle, because the canonical candidate --- the fiberwise topological dual \( \TBundle' Q \) of the tangent bundle --- fails to be a \emph{smooth} bundle.
Following \parencite{DiezRudolphReduction}, we define the cotangent bundle \( \CotBundle Q \) to be some smooth vector bundle\footnote{
For example, we may take \( \CotBundle Q \) to be \( \TBundle Q \) and the pairing given by a Riemannian structure on \( Q \).
In applications, the fiber of \( \TBundle Q \) is often a space of mappings so that a convenient choice of the cotangent bundle consists of regular distributions inside the space of all distributions.
} over \( Q \) which is fiberwise in duality with the tangent bundle relative to some chosen pairing \( \dualPairDot: \CotBundle Q \times \TBundle Q \to \R \).
In line with our notation for points of \( \TBundle Q \), we will denote points in the cotangent bundle \( \CotBundle Q \) by pairs  \( (q, p) \) with \( q \in Q \) and \( p \in \CotBundle_{q} Q \).
The dual pairing yields an embedding of the cotangent bundle \( \CotBundle Q \) into the topological dual \( \TBundle' Q \) of the tangent bundle.
In the following, we always assume that all occurring dual objects can be represented by points of the cotangent bundle \( \CotBundle Q \).
In particular, the derivatives \( \difpFrac{L}{q} \) and \( \difpFrac{L}{\dot{q}} \) are viewed as maps \( \TBundle Q \to \CotBundle Q \) (and not merely to \( \TBundle' Q \)).
Regularity of \( L \) then means that \( \difpFrac{L}{\dot{q}} \) is a diffeomorphism between \( \TBundle Q \) and \( \CotBundle Q \).
Similarly as for the cotangent bundle, we choose a non-degenerate pairing \( \kappa: \LieA{g}^* \times \LieA{g} \to \R \).
This gives an embedding of \( \LieA{g}^* \) into the topological dual \( \LieA{g}' \).
We silently assume that all occurring dual objects that a priori are only in \( \LieA{g}' \) may actually be represented by elements of \( \LieA{g}^* \).
This remark applies in particular to the partial derivative \( \difpFrac{L}{\xi} \).

In order to formulate the variational principle, we also need the following technical tool.
A \emphDef{local addition} on \( Q \) is a smooth map \( \eta: \TBundle Q \supseteq U \to Q \) defined on an open neighborhood \( U \) of the zero section in \( \TBundle Q \) such that 	the composition of \( \eta \) with the zero section is the identity on \( Q \), \ie, \( \eta(q, 0) = q \), and 	the map \( \pr_Q \times \eta: \TBundle Q \supseteq U \to Q \times Q \) is a diffeomorphism onto an open neighborhood of the diagonal, \cf \parencite[Section~42.4]{KrieglMichor1997}.
	A local addition may be constructed using the exponential map of a Riemannian metric on \( Q \) but it also exists if \( Q \) is an affine space, a Lie group or a space of sections of a fiber bundle.
	
With these preliminaries out of the way, we are now able to state the equations of motion corresponding to the Clebsch--Lagrange variational principle.
\begin{thm}\label{prop:clebschLagrange:clebschEulerLagrangeDirect}
	Let \( Q \) be a smooth Fréchet \( G \)-manifold.
	Assume that \( Q \) can be endowed with a local addition\footnote{More generally, one could give up the assumption of the existence of a local addition and instead work in the diffeological category \parencite{Iglesias-Zemmour2013}. 
	}.
	Moreover, assume that \( \TBundle Q \) and \( \CotBundle Q \) are endowed with dual torsion-free linear connections.
	Then, for a Clebsch--Lagrangian \( L: \TBundle Q \times \LieA{g} \to \R \) the following are equivalent:
	\begin{thmenumerate}
		\item
			The curves \( t \mapsto q(t) \in Q \) and \( t \mapsto \xi(t) \in \LieA{g} \) are solutions of the variational Clebsch--Lagrange problem~\eqref{eq:clebschLagrange:action}.
		\item
			The following equations hold:
			\begin{subequations}\label{eq:clebschLagrange:clebschEulerLagrangeDirect}
			\begin{align+}
				\DifFrac{}{t} \left( \difpFrac{L}{\dot{q}} \right) - \dualPair*{\difpFrac{L}{\dot{q}}}{\nabla \xi_*} - \difpFrac{L}{q} &= 0,
				\label{eq:clebschLagrange:clebschEulerLagrangeDirect:evol}
				\\
				\dualPair*{\difpFrac{L}{\dot{q}}}{\zeta \ldot q} + \kappa\left(\difpFrac{L}{\xi}, \zeta\right) &= 0,
				\label{eq:clebschLagrange:clebschEulerLagrangeDirect:constraint} 
			\end{align+}
			\end{subequations}
			for all \( \zeta \in \LieA{g} \), where evaluation of all derivatives at \( (q, \dot q + \xi \ldot q, \xi) \in \TBundle Q \times \LieA{g} \) is understood.
			Furthermore, the second term in~\eqref{eq:clebschLagrange:clebschEulerLagrangeDirect:evol} denotes the functional \( T_q Q \ni X \mapsto \dualPair[\big]{\difpFrac{L}{\dot{q}}}{\nabla_X \xi_*} \in \R \) viewed as an element of \( \CotBundle Q \).
			\qedhere
	\end{thmenumerate}
\end{thm}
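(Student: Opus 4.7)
The plan is to compute $\diF S$ by taking independent variations of $q$ and $\xi$ and invoking the fundamental lemma of the calculus of variations. Variations of $\xi$ are curves $\delta\xi \colon [0,T] \to \LieA{g}$ without endpoint conditions; variations of $q$ are realized through the local addition via $q_\varepsilon(t) = \eta(q(t), \varepsilon\, \delta q(t))$, where $\delta q$ is a curve along $q$ with values in $\TBundle Q$ that vanishes at $t=0$ and $t=T$. Varying $\xi$ first with $q$ held fixed, both occurrences of $\xi$ in $L(q, \dot q + \xi \ldot q, \xi)$ contribute, and a direct application of~\eqref{eq:clebschLagrange:tangentLagrange} yields
\[
    \diF_\xi S = \int_0^T \left( \dualPair*{\difpFrac{L}{\dot{q}}}{\delta\xi \ldot q} + \kappa\!\left(\difpFrac{L}{\xi},\, \delta\xi\right) \right) \dif t.
\]
The arbitrariness of $\delta\xi$ immediately produces the constraint~\eqref{eq:clebschLagrange:clebschEulerLagrangeDirect:constraint} pointwise in $t$ for every $\zeta \in \LieA{g}$.

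The $q$-variation is the technical heart of the proof. Setting $v_\varepsilon(t) = \dot q_\varepsilon(t) + \xi(t) \ldot q_\varepsilon(t)$, the tangent vector $\difFracAt{}{\varepsilon}{0} (q_\varepsilon(t), v_\varepsilon(t)) \in \TBundle(\TBundle Q)$ decomposes via~\eqref{eq:clebschLagrange:decompositionTTwo} into base pair $(q(t), v(t))$, horizontal component $\delta q(t)$, and a vertical component equal to the covariant $\varepsilon$-derivative of $v_\varepsilon$ at $\varepsilon = 0$. I would split the latter into two pieces: by torsion-freeness of the connection on $\TBundle Q$, the mixed covariant derivative identity gives $\DifFrac{}{t}\delta q$ from the $\dot q_\varepsilon$ summand, while the chain rule applied to the composition $\xi_*(q_\varepsilon(t))$ together with~\eqref{eq:connection:covDeriv} produces the contribution $\nabla_{\delta q}\xi_*$. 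Substituting into~\eqref{eq:clebschLagrange:tangentLagrange} yields
\[
    \diF_q S = \int_0^T \left( \dualPair*{\difpFrac{L}{q}}{\delta q} + \dualPair*{\difpFrac{L}{\dot{q}}}{\DifFrac{}{t}\delta q + \nabla_{\delta q}\xi_*} \right) \dif t,
\]
with all derivatives evaluated at $(q(t), \dot q(t) + \xi(t) \ldot q(t), \xi(t))$. An integration by parts on the $\DifFrac{}{t}\delta q$ term, performed with the dual torsion-free connection on $\CotBundle Q$, annihilates the boundary contribution thanks to $\delta q(0) = \delta q(T) = 0$; the fundamental lemma applied to the resulting integrand then delivers the evolution equation~\eqref{eq:clebschLagrange:clebschEulerLagrangeDirect:evol}, with the functional $X \mapsto \dualPair*{\difpFrac{L}{\dot{q}}}{\nabla_X \xi_*}$ being exactly the cotangent element described in the statement.

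The main difficulties are foundational rather than computational: justifying differentiation under the integral for smooth paths in a Fréchet manifold, verifying that the integration by parts is legitimate under the chosen pairing between $\TBundle Q$ and $\CotBundle Q$ (this is where the mutual duality of the two connections is used), and ensuring the applicability of the fundamental lemma in the Fréchet category — here the local addition guarantees a rich enough supply of compactly supported, localized variations to force the Euler--Lagrange integrands to vanish pointwise. The torsion-free hypothesis is essential in identifying the mixed $\varepsilon t$-covariant derivative with $\DifFrac{}{t}\delta q$; without it, an unwanted torsion term would pollute the evolution equation.
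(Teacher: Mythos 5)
Your proposal is correct and follows essentially the same route as the paper: variations of the path generated via the local addition, the connection-induced decomposition~\eqref{eq:clebschLagrange:decompositionTTwo} of the varied effective velocity, torsion-freeness to identify the mixed $\varepsilon t$-covariant derivative with $\DifFrac{}{t}\diF q$ (the paper carries this out explicitly through the tautological soldering form and $\phi^* T(\difp_t,\difp_\varepsilon)=0$), the term $\nabla_{\diF q}\xi_*$ from differentiating $\xi_*(q_\varepsilon)$, integration by parts using the dual connections, and a separate, purely fiberwise $\xi$-variation yielding the constraint. The only difference is presentational: you invoke the symmetry of mixed covariant derivatives as a known lemma where the paper derives it on the spot.
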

The equations~\eqref{eq:clebschLagrange:clebschEulerLagrangeDirect} will be referred to as the \emphDef{Clebsch--Euler--Lagrange equations}.

\begin{proof}
	Let \( I = [0, T] \) with \( T > 0 \).
	Since \( Q \) admits a local addition, the space \( \sFunctionSpace_{q_i q_f}(I, Q) \) of paths in \( Q \) with given endpoints \( q_i \) and \( q_f \) is a smooth infinite-dimensional manifold, \cf \parencite[Theorem~7.6]{Wockel2014}.
	Similarly, as \( \LieA{g} \) is an affine space, the space \( \sFunctionSpace(I, \LieA{g}) \) of paths in \( \LieA{g} \) is a smooth manifold, too.
	Since the Lagrangian is smooth, the action \( S \) is a smooth function on \( \sFunctionSpace_{q_i q_f}(I, Q) \times \sFunctionSpace(I, \LieA{g}) \).
	The variational principle amounts to looking for extrema of \( S \).

	Let \( \bigl(q_\varepsilon(t), \dot{q}_\varepsilon(t)\bigr) \) be a smooth perturbation of the curve \( \bigl(q(t), \dot{q}(t)\bigr) \) in \( \TBundle Q \).
	In line with usual notation, we write
	\begin{equation}
		\label{eq:clebschLagrange:clebschEulerLagrangeDirect:variation}
		 \difFracAt{}{\varepsilon}{0}\bigl(q_\varepsilon(t), \dot{q}_\varepsilon(t)\bigr) \equiv \bigl(\diF q(t), \diF \dot{q}(t)\bigr) \in \TBundle_{q(t)} Q \times \TBundle_{q(t)} Q
	\end{equation}
	for the corresponding curve in \( \TBundle(\TBundle Q) \) under the isomorphism~\eqref{eq:clebschLagrange:decompositionTTwo}.
	The curve \( t \mapsto \bigl(\diF q(t), \diF \dot{q}(t)\bigr) \) should be considered as a tangent vector field along the curve \( t \mapsto q(t) \), which may be viewed as an tangent vector to \( \sFunctionSpace_{q_i q_f}(I, Q) \) at that curve.
	Let \( \phi: I \times \R \to Q \) be defined by \( \phi(t, \varepsilon) = q_\varepsilon(t) \), and let \( \vartheta \) be the tautological soldering form on \( \TBundle Q \).
	Note that \( (\phi^* \vartheta)_{t, \varepsilon} (\difp_t) = \dot{q}_\varepsilon (t) \) and \( (\phi^* \vartheta)_{t, \varepsilon} (\difp_\varepsilon) = \diF q_\varepsilon (t) \).
	Since the connection \( K \) on \( \TBundle Q \) is torsion-free with respect to \( \vartheta \), using~\eqref{eq:connection:alongCurve} and~\eqref{eq:clebschLagrange:clebschEulerLagrangeDirect:variation}, we obtain
	\begin{equation}\begin{split}
		0 &= \phi^* T (\difp_t, \difp_\varepsilon)
			\\
			&= \dif_{\phi^* K} (\phi^* \vartheta) (\difp_t, \difp_\varepsilon)
			\\
			&= \nabla^\phi_{\difp_t} \bigl(\phi^* \vartheta (\difp_\varepsilon)\bigr) - \nabla^\phi_{\difp_\varepsilon} \bigl(\phi^* \vartheta (\difp_t)\bigr) - \phi^* \vartheta \bigl(\commutator{\difp_t}{\difp_\varepsilon}\bigr)
			\\
			&= \DifFrac{}{t} (\diF q) - \diF \dot{q}(t).
	\end{split}\end{equation}
	That is, \( \diF \dot{q}(t) = \DifFrac{}{t} (\diF q) \).
	Moreover, by~\eqref{eq:connection:covDeriv}, we have 
	\begin{equation}
		K \left(\difFracAt{}{\varepsilon}{0} \xi \ldot q_\varepsilon\right)
			= K \bigl(\tangent \xi_* (\diF q)\bigr)
			= \nabla_{\diF q} \xi_*,
	\end{equation}
	where \( \xi_* \) is viewed as a section \( Q \to \TBundle Q \) and where we suppressed the explicit time dependence.
	In passing, we note that \( \tangent \xi_* (\diF q) \) is the fundamental vector field at \( \diF q \) on \( \TBundle Q \) generated by \( \xi \).
	Thus, under the isomorphism~\eqref{eq:clebschLagrange:decompositionTTwo}, we obtain
	\begin{equation}\begin{split}
		\difFracAt{}{\varepsilon}{0}\bigl(q_\varepsilon(t), \dot{q}_\varepsilon(t) + \xi(t) \ldot q_\varepsilon(t)\bigr)
			&= \Bigl(\diF q(t), \diF \dot{q}(t) + \nabla_{\diF q (t)} \xi(t)_* \Bigr)
			\\
			&= \left(\diF q(t), \DifFrac{}{t} (\diF q) + \nabla_{\diF q (t)} \xi(t)_* \right).
	\end{split}\end{equation}
	Thus, using~\eqref{eq:clebschLagrange:tangentLagrange}, variation of the path in \( Q \) yields
	\begin{align}
		&\int_0^T \left( \dualPair*{\difpFrac{L}{q}}{\diF q} +  \dualPair*{\difpFrac{L}{\dot{q}}}{\DifFrac{}{t} (\diF q)} + \dualPair*{\difpFrac{L}{\dot{q}}}{\nabla_{\diF q} \xi_*} \right) \dif t = 0,
		\label{eq:clebschLagrange:proof:geom}
		\intertext{which after integration by parts on the second term is equivalent to}
		& \int_0^T \left(\dualPair*{\difpFrac{L}{q}}{\diF q} - \dualPair*{\DifFrac{}{t} \left(\difpFrac{L}{\dot{q}}\right)}{\diF q} + \dualPair*{\difpFrac{L}{\dot{q}}}{\nabla_{\diF q} \xi_*} \right)  \dif t = 0.
	\end{align}
	Since the variations \( \diF q \) are arbitrary, we get the evolution equation~\eqref{eq:clebschLagrange:clebschEulerLagrangeDirect:evol}.

	Similarly, let \( t \mapsto \diF \xi(t) \in \LieA{g} \) be a tangent vector to \( t \mapsto \xi(t) \) in \( \sFunctionSpace(I, \LieA{g}) \).
	Variation of \( \xi \) yields
	\begin{align}
		\dualPair*{\difpFrac{L}{\dot{q}}}{\diF \xi \ldot q} + \kappa\left(\difpFrac{L}{\xi}, \diF \xi\right) = 0,
	\end{align}
	which clearly gives~\eqref{eq:clebschLagrange:clebschEulerLagrangeDirect:constraint}.
\end{proof}

\begin{remark}[Clebsch optimal control problem]
	\label{rem:clebschLagrange:optimalControl}
	Suppose the Clebsch--Lagrange problem is supplemented by the optimal control constraint \( \dot q = - \xi \ldot q \).
	Then,~\eqref{eq:clebschLagrange:action} reduces to the variational principle \( \diF s = 0 \) for the action
	\begin{equation}
		s[q, \xi] = \int_0^T l(q, \xi) \dif t,
	\end{equation}
	where the effective Lagrangian is given by \( l(q, \xi) = L(q, 0, \xi) \).
	We hence recover the Clebsch optimal control problem discussed at the beginning of the section.

	For the applications we have in mind, the optimal control constraint is a too strong condition.
	In the context of Yang--Mills theory, it amounts to requiring that the color-electromagnetic field vanishes and, for Einstein's equation, it is equivalent to a zero extrinsic curvature.
\end{remark}


\subsection{Clebsch--Hamilton picture}
\label{MomMapCon}


In this section, we discuss the Hamiltonian counterpart of the Clebsch--Euler--Lagrange equations.
Let \( E: \TBundle Q \times \LieA{g} \to \R \),
\begin{equation}
	E(q, \dot{q}, \xi) \defeq \dualPair*{\difpFrac{L}{\dot{q}}(q, \dot{q} + \xi \ldot q, \xi)}{\dot{q} + \xi \ldot q} - L(q, \dot{q} + \xi \ldot q, \xi),
\end{equation}
be the energy function associated to the Clebsch--Lagrangian \( L \).
The \emphDef{Clebsch--Legendre transformation} is defined by
\begin{equation}
	\label{eq:clebschLagrange:clebschLegendre}
	\mathrm{CL}: \TBundle Q \times \LieA{g} \to \CotBundle Q \times \LieA{g},
		\quad
		(q, \dot{q}, \xi) \mapsto \left(q, \difpFrac{L}{\dot{q}}(q, \dot{q} + \xi \ldot q, \xi), \xi \right).
\end{equation}
We say that \( L \) is regular if \( \mathrm{CL} \) is a diffeomorphism.
For a regular Clebsch--Lagrangian \( L \), the associated \emphDef{Clebsch--Hamiltonian} \( H: \CotBundle Q \times \LieA{g} \to \R \) is defined by \( H = E \circ \mathrm{CL}^{-1} \).
\begin{remark}
	Clearly \( \mathrm{CL} \) is a diffeomorphism if and only if the fiber derivative
	\begin{equation}
	 	\TBundle Q \to \CotBundle Q,
	 	\quad 
	 	(q, v) \mapsto \left(q, \difpFrac{L}{\dot{q}}(q, v, \xi)\right)
	 \end{equation}
	is a diffeomorphism for every \( \xi \in \LieA{g} \).
	Moreover, the Clebsch--Hamiltonian \( H \) coincides with the Hamiltonian corresponding to the Clebsch--Lagrangian \( L \) via the ordinary Legendre transformation with \( \xi \) viewed as a parameter, that is, 
	\begin{equation}\label{eq:hamiltonian}
		H \left(q, p , \xi \right) = \dualPair*{p}{v} - L(q, v, \xi)\, ,
	\end{equation}
	where the relation \( p = \difpFrac{L}{\dot{q}}(q, v, \xi) \) defines \( v \) as a function of \( q, p \) and \( \xi \).
\end{remark}
Let \( K \) be a torsion-free connection in \( \TBundle Q \) and let \( \bar{K} \) be the dual connection in \( \CotBundle Q \).
By~\eqref{eq:connection:decomposition}, \( \bar{K} \) induces a decomposition
\begin{equation}
	\label{eq:clebschLagrange:hamilton:decompCotBundle}
	\TBundle (\CotBundle Q) \isomorph \CotBundle Q \times_Q \TBundle Q \times_Q \CotBundle Q
\end{equation}
and we will write points in \( \TBundle (\CotBundle Q) \) as tuples \( (q, p, \delta q, \delta p) \) with \( p, \delta p \in \CotBundle_q Q \) and \( \delta q \in \TBundle_q Q \).
Note that the component \( \delta p \) depends on the choice of the connection \( \bar{K} \).
Given a smooth function \( H: \CotBundle Q \times \LieA{g} \to \R \) (not necessarily obtained via a Clebsch--Legendre transformation), we decompose the derivative \( \tangent H: \TBundle(\CotBundle Q) \to \R \) relative to~\eqref{eq:clebschLagrange:hamilton:decompCotBundle} as follows:
\begin{equation}
	\label{eq:clebschLagrange:hamilton:decompTHamilton}
	\tangent H (q, p, \delta q, \delta p) 
		= \dualPair*{\difpFrac{H}{q}(q, p)}{\delta q} + \dualPair*{\delta p}{\difpFrac{H}{p}(q, p)},
\end{equation}
where we assume that the partial derivatives are maps \( \difpFrac{H}{q}: \CotBundle Q \to \CotBundle Q \) and \( \difpFrac{H}{p}: \CotBundle Q \to \TBundle Q \).
We emphasize that, according to~\eqref{eq:connection:decompositionInverse}, \( \difpFrac{H}{p} \) is the intrinsically defined fiber derivative of \( H \) while \( \difpFrac{H}{q} \) depends on the choice of the connection \( \bar{K} \).
\begin{prop}[Equations under Clebsch--Legendre transformation]
	\label{prop:clebschLagrange:legendreDirect}
	For a regular Clebsch--Lagrangian \( L: \TBundle Q \times \LieA{g} \to \R \) with associated Clebsch--Hamiltonian \( H: \CotBundle Q \times \LieA{g} \to \R \),
	the curves \( t \mapsto q(t) \in Q \) and \( t \mapsto \xi(t) \in \LieA{g} \) are solutions of the Clebsch--Lagrange variational problem~\eqref{eq:clebschLagrange:action} if and only if the curves
	\begin{equation}
		t \mapsto (q(t), p(t), \xi(t)) = \mathrm{CL}\bigl(q(t), \dot{q}(t), \xi(t)\bigr)
	\end{equation}
	satisfy the following equations
	\begin{subequations}\label{eq:clebschLagrange:hamiltonianDirect}\begin{gather+}
		\label{eq:clebschLagrange:hamiltonianDirect:hamiltonian}
		\difpFrac{H}{q}(q, p, \xi) = - \DifFrac{}{t} p - \bar{K}(\xi \ldot p), \qquad \difpFrac{H}{p}(q, p, \xi) = \dot q + \xi \ldot q,
		\\
		\label{eq:clebschLagrange:hamiltonianDirect:constraint}
		\dualPair*{p}{\zeta \ldot q} = \kappa\left(\difpFrac{H}{\xi}(q, p, \xi), \zeta\right),
	\end{gather+}\end{subequations}
	for all \( \zeta \in \LieA{g} \).
\end{prop}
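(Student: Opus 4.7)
The plan is to translate the Clebsch--Euler--Lagrange equations from Theorem~\ref{prop:clebschLagrange:clebschEulerLagrangeDirect} term by term through the Clebsch--Legendre transformation, thereby establishing equivalence with \eqref{eq:clebschLagrange:hamiltonianDirect}. The first preparatory step is to extract the partial derivatives of the Clebsch--Hamiltonian from formula \eqref{eq:hamiltonian}. Viewing $v = \dot q + \xi \ldot q$ as an implicitly defined function of $(q, p, \xi)$ via $p = \difpFrac{L}{\dot q}(q, v, \xi)$ and using the duality of $K$ and $\bar K$ to make sense of $\difpFrac{H}{q}$, a short chain-rule computation in which the $\difpFrac{v}{q}$ and $\difpFrac{v}{\xi}$ terms cancel gives
\begin{equation*}
  \difpFrac{H}{p}(q, p, \xi) = v = \dot q + \xi \ldot q, \qquad
  \difpFrac{H}{q}(q, p, \xi) = -\difpFrac{L}{q}, \qquad
  \difpFrac{H}{\xi}(q, p, \xi) = -\difpFrac{L}{\xi},
\end{equation*}
where the right-hand sides are evaluated at $(q, \dot q + \xi \ldot q, \xi)$. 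Note that the $\difpFrac{H}{p}$ identity is really just a restatement of the definition of $v$ and yields the second half of \eqref{eq:clebschLagrange:hamiltonianDirect:hamiltonian} automatically.

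Substituting these into the constraint \eqref{eq:clebschLagrange:clebschEulerLagrangeDirect:constraint} immediately reproduces the momentum map constraint \eqref{eq:clebschLagrange:hamiltonianDirect:constraint}. For the evolution equation \eqref{eq:clebschLagrange:clebschEulerLagrangeDirect:evol}, the substitutions $p = \difpFrac{L}{\dot q}$ and $\difpFrac{L}{q} = -\difpFrac{H}{q}$ produce
\begin{equation*}
  \difpFrac{H}{q}(q, p, \xi) = -\DifFrac{p}{t} + \dualPair*{p}{\nabla \xi_*},
\end{equation*}
so the remaining task is to identify $\dualPair{p}{\nabla \xi_*}$ with $-\bar K(\xi \ldot p)$, where $\xi \ldot p \in \TBundle_p(\CotBundle Q)$ is the fundamental vector field of the cotangent-lifted action.

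This last identification is the main technical point. The argument I would use exploits the fact that the cotangent lift preserves the canonical pairing: for any $Y \in \TBundle_q Q$, denoting by $\phi_t$ the flow of $\xi_*$, one has $\dualPair{g_t \ldot p}{\tangent \phi_t (Y)} = \dualPair{p}{Y}$. Differentiating at $t = 0$ and applying the Leibniz rule for the dual pair of connections
\begin{equation*}
  0 = \difFracAt{}{t}{0} \dualPair{g_t \ldot p}{\tangent \phi_t(Y)} = \dualPair*{\bar K(\xi \ldot p)}{Y} + \dualPair*{p}{\nabla_Y \xi_*},
\end{equation*}
where the first term uses \eqref{eq:connection:alongCurve} applied to the curve $t \mapsto g_t \ldot p$, and the second uses \eqref{eq:connection:covDeriv} applied to $\xi_*$ viewed as a section $Q \to \TBundle Q$. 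Since $Y$ is arbitrary, this gives the desired identity and completes the derivation of \eqref{eq:clebschLagrange:hamiltonianDirect:hamiltonian}. The converse direction is immediate by reversing the substitutions, all of which are algebraic consequences of $\mathrm{CL}$ being a diffeomorphism.

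I expect the main obstacle to be keeping track of the various connection-dependent decompositions of $\TBundle(\TBundle Q)$ and $\TBundle(\CotBundle Q)$ when computing $\difpFrac{H}{q}$; the duality of $K$ and $\bar K$ is precisely what one needs to absorb the spurious $\difpFrac{v}{q}$ term, and it is also what drives the cotangent-lift identity in the last step. Everything else is a routine bookkeeping exercise.
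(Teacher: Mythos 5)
Your proposal is correct and follows essentially the same route as the paper's proof: the chain-rule computation of \( \difpFrac{H}{q} \), \( \difpFrac{H}{p} \), \( \difpFrac{H}{\xi} \) from~\eqref{eq:hamiltonian} with cancellation of the implicit \( v \)-derivatives, substitution into the Clebsch--Euler--Lagrange equations, and the identification \( \dualPair{p}{\nabla \xi_*} = - \bar{K}(\xi \ldot p) \) obtained by differentiating the invariance of the pairing along the orbit using the dual connections, exactly as in~\eqref{eq:clebschLagrange:hamiltonian:proof:nablaXi}. No substantive difference to report.
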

\begin{proof}
	The proof is by direct inspection.
	Let the curve \( t \mapsto p(t) \) be defined by \( p(t) \defeq \difpFrac{L}{\dot{q}} \bigl(q, \dot q + \xi \ldot q, \xi \bigr)(t) \).
	Using the Clebsch--Euler--Lagrange equation~\eqref{eq:clebschLagrange:clebschEulerLagrangeDirect:evol}, we have
	\begin{equation}\label{eq:clebschLagrange:hamiltonian:proof:pTime}
		\DifFrac{}{t} p = \dualPair{p}{\nabla \xi_*} + \difpFrac{L}{q}(q, \dot q + \xi \ldot q, \xi). 	
	\end{equation}
	On the other hand, using the identification~\eqref{eq:hamiltonian} of \( H \) as the Legendre transform of \( L \) with \( v = \dot q + \xi \ldot q \), we get
	\begin{equation}
		\difpFrac{H}{q}(q, p, \xi)
			= \dualPair*{p}{\difpFrac{v}{q}} - \dualPair*{\difpFrac{L}{\dot q}}{\difpFrac{v}{q}} - \difpFrac{L}{q}(q, v, \xi)
			= - \difpFrac{L}{q}(q, \dot q + \xi \ldot q, \xi).
	\end{equation}
	Comparing with~\eqref{eq:clebschLagrange:hamiltonian:proof:pTime}, we see that~\eqref{eq:clebschLagrange:clebschEulerLagrangeDirect:evol} is equivalent to
	\begin{equation}
		\difpFrac{H}{q}(q, p, \xi) = - \DifFrac{}{t} p + \dualPair{p}{\nabla \xi_*}.
	\end{equation}
	Let us rewrite the last term on the right-hand side of this equation.
	For \( \xi \in \LieA{g} \), choose a curve \( \varepsilon \mapsto g_\varepsilon \) in \( G \) with \( g_0 = e \) and \( \difFracAt{}{\varepsilon}{0} g_\varepsilon = \xi \).
	Let \( \Upsilon: \R \to Q \), \( \Upsilon(\varepsilon) = g_\varepsilon \cdot q \) be the orbit curve through \( q \in Q \).
	Considering \( \varepsilon \mapsto g_\varepsilon \ldot \dot{q} \) as a section of \( \Upsilon^* \TBundle Q \) for \( \dot{q} \in \TBundle_q Q \), we have
	\begin{equation}
		\nabla_{\difp_\varepsilon}^\Upsilon (g_\varepsilon \ldot \dot{q})|_{\varepsilon = 0}
			= K \left(\difFracAt{}{\varepsilon}{0} g_\varepsilon \ldot \dot{q}\right)
			= K (\xi \ldot \dot{q})
			= K \bigl(\tangent \xi_* (\dot{q})\bigr)
			= \nabla_{\dot{q}} \xi_* \, .
	\end{equation}
	A similar calculation yields \( \bar{\nabla}_{\difp_\varepsilon}^\Upsilon (g_\varepsilon \ldot p)|_{\varepsilon = 0} = \bar{K}(\xi \ldot p) \) for \( p \in \CotBundle_q Q \).
	Thus, taking the derivative of the identity \( \dualPair{p}{\dot{q}} = \dualPair{g_\varepsilon \cdot p}{g_\varepsilon \ldot \dot{q}} \) with respect to \( \varepsilon \) gives
	\begin{equation}\label{eq:clebschLagrange:hamiltonian:proof:nablaXi}\begin{split}
		0 
		&= \difFracAt{}{\varepsilon}{0} \dualPair{g_\varepsilon \cdot p}{g_\varepsilon \ldot \dot{q}}
		\\
		&= \dualPair*{p}{\nabla_{\difp_\varepsilon}^\Upsilon (g_\varepsilon \ldot \dot{q})|_{\varepsilon = 0}} + \dualPair*{\bar{\nabla}_{\difp_\varepsilon}^\Upsilon (g_\varepsilon \ldot p)|_{\varepsilon = 0}}{\dot{q}}
		\\
		&= \dualPair{p}{\nabla_{\dot{q}} \xi_*} + \dualPair{\bar{K}(\xi \ldot p)}{\dot{q}}.
	\end{split}\end{equation}
	That is \( \dualPair{p}{\nabla \xi_*} = - \bar{K}(\xi \ldot p) \).
	Hence, in summary, the first equation in~\eqref{eq:clebschLagrange:hamiltonianDirect:hamiltonian} is equivalent to~\eqref{eq:clebschLagrange:clebschEulerLagrangeDirect:evol}.

	Moreover,~\eqref{eq:hamiltonian} implies
	\begin{equation}
		\difpFrac{H}{p}(q, p, \xi)
			= v + \dualPair*{p}{\difpFrac{v}{p}} - \dualPair*{\difpFrac{L}{\dot q}(q, v, \xi)}{\difpFrac{v}{p}}
			= \dot q + \xi \ldot q,
	\end{equation}
	which yields the second equation in~\eqref{eq:clebschLagrange:hamiltonianDirect:hamiltonian}.

	Similarly, the derivative of \( H \) in the \( \xi \)-direction is given by
	\begin{equation}\begin{split}
		\difpFrac{H}{\xi}(q, p, \xi)
			&= \dualPair*{p}{\difpFrac{v}{\xi}} - \dualPair*{\difpFrac{L}{\dot q}(q, v, \xi)}{\difpFrac{v}{\xi}} - \difpFrac{L}{\xi}(q, v, \xi)
			\\ 
			&= - \difpFrac{L}{\xi}(q, \dot q + \xi \ldot q, \xi).
	\end{split}\end{equation}
	Hence, the constraint~\eqref{eq:clebschLagrange:hamiltonianDirect:constraint} is equivalent to~\eqref{eq:clebschLagrange:clebschEulerLagrangeDirect:constraint}.
\end{proof}
\begin{remark}
	In the proof of \cref{prop:clebschLagrange:legendreDirect}, we have seen that the first equation in~\eqref{eq:clebschLagrange:hamiltonianDirect:hamiltonian} is equivalent to
	\begin{equation+}
		\difpFrac{H}{q}(q, p, \xi) = - \DifFrac{}{t} p + \dualPair{p}{\nabla \xi_*}.
		\qedhere
	\end{equation+}
\end{remark}
The constraint equation~\eqref{eq:clebschLagrange:hamiltonianDirect:constraint} has a natural reformulation in terms of the momentum map of the lifted \( G \)-action on the cotangent bundle \( \CotBundle Q \).
As usual, \( \CotBundle Q \) carries a canonical \( 1 \)-form \( \theta \) defined by 
\begin{equation}
	\label{CanForm}
	\theta_{(q,p)} (X) = \dualPair{p}{\tangent_{(q,p)} \pi (X)}_q,
\end{equation}
where \( X \in \TBundle_{(q,p)} (\CotBundle Q) \) and \( \pi \) is the natural projection of \( \CotBundle Q \).
Correspondingly, the canonical symplectic form is \( \omega = \dif \theta \).
Recall the \( G \)-action on \( \TBundle Q \) written as \( g \ldot \dot{q} \in \TBundle_{g \cdot q} Q \) for \( g \in G \) and \( \dot{q} \in \TBundle_q Q \).
We assume that the relation
\begin{equation}
	\dualPair{g \cdot p}{\dot{q}}
		= \dualPair{p}{g^{-1} \ldot \dot{q}}
\end{equation}
for \( g \in G, \dot{q} \in \TBundle_q Q \) and \( p \in \CotBundle_q Q \), defines a lift to \( \CotBundle Q \) of the \( G \)-action on \( Q \) (which is automatic in finite dimensions).
The lifted action on \( \CotBundle Q \) is symplectic, and the associated \( G \)-equivariant momentum map 
\( J: \CotBundle Q \to \LieA{g}^* \) (if it exists) is defined by
\begin{equation}
	\label{eq:cotangentBundle:momentumMapDef}
	\kappa\left(J(q, p), \xi\right)
		= \dualPair{p}{\xi \ldot q}.
\end{equation}
In infinite dimensions, the momentum map may not exist in pathological cases, \cf \parencite[Example~2.2]{DiezRudolphReduction}, and we thus have to assume its existence in the forthcoming. 
The right-hand side of~\eqref{eq:cotangentBundle:momentumMapDef} is exactly what occurs in the constraint~\eqref{eq:clebschLagrange:hamiltonianDirect:constraint} and we hence arrive at the following reformulation of the Clebsch--Euler--Lagrange equations~\eqref{eq:clebschLagrange:clebschEulerLagrangeDirect}.
\begin{coro}\label{prop:clebschLagrange:legendre}
	In the setting of~\cref{prop:clebschLagrange:legendreDirect}, assume additionally that the lifted \( G \)-action on the cotangent bundle \( \CotBundle Q \) has a momentum map \( J: \CotBundle Q \to \LieA{g}^* \) (which is automatic in finite dimensions).
	Then, the Clebsch--Euler--Lagrange equations are equivalent to the following set of equations:
	\begin{subequations}\label{eq:clebschLagrange:hamiltonian}\begin{gather+}
		\label{eq:clebschLagrange:hamiltonian:hamiltonian}
		\difpFrac{H}{q}(q, p, \xi) = - \DifFrac{}{t} p - \bar{K}(\xi \ldot p), \qquad \difpFrac{H}{p}(q, p, \xi) = \dot q + \xi \ldot q,
		\\
		\label{eq:clebschLagrange:hamiltonian:constraint}
		J(q, p) = \difpFrac{H}{\xi}(q, p, \xi).
	\end{gather+}\end{subequations}
\end{coro}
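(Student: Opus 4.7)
The plan is to reduce everything to \cref{prop:clebschLagrange:legendreDirect} and then recognize the remaining constraint as a momentum map equation by unraveling the definition~\eqref{eq:cotangentBundle:momentumMapDef}. Concretely, \cref{prop:clebschLagrange:legendreDirect} already establishes that the Clebsch--Euler--Lagrange equations are equivalent to the system~\eqref{eq:clebschLagrange:hamiltonianDirect}. Since the two Hamiltonian evolution equations in~\eqref{eq:clebschLagrange:hamiltonian:hamiltonian} are identical to those in~\eqref{eq:clebschLagrange:hamiltonianDirect:hamiltonian}, the only remaining task is to show that the constraint~\eqref{eq:clebschLagrange:hamiltonianDirect:constraint} is equivalent to~\eqref{eq:clebschLagrange:hamiltonian:constraint}.

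For this, I would substitute the defining relation~\eqref{eq:cotangentBundle:momentumMapDef} of the momentum map into the left-hand side of~\eqref{eq:clebschLagrange:hamiltonianDirect:constraint}, so that
\begin{equation}
	\dualPair{p}{\zeta \ldot q} = \kappa\bigl(J(q, p), \zeta\bigr)
\end{equation}
holds by definition. The constraint~\eqref{eq:clebschLagrange:hamiltonianDirect:constraint} then reads
\begin{equation}
	\kappa\bigl(J(q, p), \zeta\bigr) = \kappa\left(\difpFrac{H}{\xi}(q, p, \xi), \zeta\right)
\end{equation}
for all \( \zeta \in \LieA{g} \). Non-degeneracy of the chosen pairing \( \kappa \) (recall that \( \LieA{g}^* \) was defined to embed into \( \LieA{g}' \) via \( \kappa \), and both \( J(q,p) \) and \( \difpFrac{H}{\xi}(q,p,\xi) \) were assumed to lie in \( \LieA{g}^* \)) immediately yields the momentum map equation~\eqref{eq:clebschLagrange:hamiltonian:constraint}, and the converse direction is obtained by pairing~\eqref{eq:clebschLagrange:hamiltonian:constraint} with an arbitrary \( \zeta \in \LieA{g} \).

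There is essentially no obstacle; the only subtlety worth flagging explicitly is the standing assumption that the dual objects arising in the theory actually lie in \( \LieA{g}^* \) (rather than just in the topological dual \( \LieA{g}' \)) so that the passage from a pairing identity ``for all \( \zeta \)'' to an equality in \( \LieA{g}^* \) is legitimate. This is precisely the hypothesis that the momentum map exists as a map \( J: \CotBundle Q \to \LieA{g}^* \), which is automatic in finite dimensions and imposed by assumption in the infinite-dimensional setting, as the statement of the corollary makes explicit.
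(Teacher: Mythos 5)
Your proposal is correct and follows essentially the same route as the paper: the paper derives the corollary directly from \cref{prop:clebschLagrange:legendreDirect} by observing that the right-hand side of the defining relation~\eqref{eq:cotangentBundle:momentumMapDef} is exactly the pairing appearing in the constraint~\eqref{eq:clebschLagrange:hamiltonianDirect:constraint}, so that non-degeneracy of \( \kappa \) turns the ``for all \( \zeta \)'' identity into \( J(q,p) = \difpFrac{H}{\xi}(q,p,\xi) \). Your explicit remark about the standing assumption that the relevant dual objects lie in \( \LieA{g}^* \) rather than merely in \( \LieA{g}' \) is a fair and accurate reading of the paper's conventions.
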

\noindent
The system~\eqref{eq:clebschLagrange:hamiltonian} on \( \CotBundle Q \times \LieA{g} \) will be referred to as the \emphDef{Clebsch--Hamilton equations}.
We will refer to~\eqref{eq:clebschLagrange:hamiltonian:constraint} as the \emphDef{momentum map constraint}.

It turns out that the Clebsch--Hamilton equations can be written in a form where the dynamics is given by a Hamiltonian vector field (relative to the canonical symplectic structure on \( \CotBundle Q \)).
As this needs some additional technical insight, we will discuss this aspect in a separate paper.

\begin{example}
	A special model of a Clebsch--Hamiltonian system is studied in \parencite{GayBalmazHolmRatiu2013}.
	Let \( (Q, h) \) be a Riemannian manifold, which we assume to be finite-dimensional for simplicity.
	Moreover, let \( G \) be a (finite-dimensional) Lie group acting on \( Q \).
	Consider the Lagrangian
	\begin{equation}
		L(q, \dot q, \xi) = \frac{m}{2} \norm{\dot q}^2 - V(q, \xi),
	\end{equation}
	where the norm is taken with respect to the Riemannian metric \( h \).
	Thus, the Clebsch--Lagrange problem~\eqref{eq:clebschLagrange:action} consists in minimizing the action functional
	\begin{equation}
		S[q, \xi] = \int_0^T \left(\frac{m}{2} \norm{\dot q + \xi \ldot q}^2 - V(q, \xi)\right).
	\end{equation}
	This variational problem is investigated in \parencite[Section~3.2]{GayBalmazHolmRatiu2013}.
	By~\eqref{eq:hamiltonian}, the associated Clebsch--Hamiltonian is given by
	\begin{equation}
		H(q, p, \xi) = \frac{1}{2m} \norm{\dot q}^2 + V(q, \xi).
	\end{equation}
	Thus, the Clebsch--Hamilton equations~\eqref{eq:clebschLagrange:hamiltonian} take the following form (with respect to the Levi--Civita connection):
	\begin{subequations}
		\begin{align}
			\DifFrac{}{t} p + \bar{K}(\xi \ldot p) &= - \difpFrac{V}{q},
			\\
			p^\sharp &= m (\dot q + \xi \ldot q),
			\\
			J(q, p) &= \difpFrac{V}{\xi},
		\end{align}
	\end{subequations}
	where \( p^\sharp \in \TBundle_q Q \) denotes the metric-dual of \( p \in \CotBundle_q Q \).
	These dynamical equations have been already derived in~\parencite[Equation~(3.28)]{GayBalmazHolmRatiu2013}.
\end{example}

In the remainder of this section, we discuss the time dependence of the momentum map constraint.
For this purpose, consider the constraint map \( C: \CotBundle Q \times \LieA{g} \to \LieA{g}^* \) defined by
\begin{equation}
	C(q, p, \xi) = J(q, p) - \difpFrac{H}{\xi}(q, p, \xi).
\end{equation}
Clearly, the momentum map constraint is equivalent to \( C = 0 \).
\begin{prop}
	\label{prop:G-inv-H}
	Assume that \( H \) is \( G \)-invariant in the sense that
	\begin{equation}
		\label{eq:G-inv-H}
		H(g \cdot q, g \cdot p, \AdAction_g \xi) = H(q, p, \xi)
	\end{equation}
	for all \( g \in G \).
	Let \( t \mapsto \gamma(t) = \bigl(q(t), p(t), \xi(t) \bigr) \) be a curve in \( \CotBundle Q \times \LieA{g} \) satisfying the evolution equations~\eqref{eq:clebschLagrange:hamiltonian:hamiltonian}.
	Then, 
	\begin{equation}
		\difFrac{}{t} C\bigl(\gamma(t)\bigr) = - \CoadAction_{\xi(t)} C\bigl(\gamma(t)\bigl) - \difFrac{}{t} \left(\difpFrac{H}{\xi}\bigl(\gamma(t)\bigr) \right).
	\end{equation}
	In particular, a solution \( \gamma(t) \) of the evolution equations~\eqref{eq:clebschLagrange:hamiltonian:hamiltonian} with \( C\bigl(\gamma(t_0)\bigr) = 0 \) is tangent\footnote{That is, \( \tangent_{\gamma(t_0)} C \bigl(\dot{\gamma}(t_0)\bigr) = 0 \).} to the constraint set \( C = 0 \) at \( \gamma(t_0) \) if and only if 
	\begin{equation+}
		\difFracAt{}{t}{t_0} \left(\difpFrac{H}{\xi}\bigl(\gamma(t)\bigr)\right) = 0.
		\qedhere
	\end{equation+}
\end{prop}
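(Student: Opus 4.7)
The plan is to expand
\[
\difFrac{}{t} C(\gamma(t)) = \difFrac{}{t} J\bigl(q(t), p(t)\bigr) - \difFrac{}{t}\bigl(\difpFrac{H}{\xi}(\gamma(t))\bigr)
\]
and to prove the Noether-type identity $\difFrac{}{t} J\bigl(q(t), p(t)\bigr) = -\CoadAction_{\xi(t)} C(\gamma(t))$; the displayed formula of the proposition then follows by subtraction, and the tangency statement is immediate because on $C(\gamma(t_0)) = 0$ the coadjoint term vanishes and only $-\difFracAt{}{t}{t_0}\difpFrac{H}{\xi}$ survives. Intuitively, the evolution equations~\eqref{eq:clebschLagrange:hamiltonian:hamiltonian} describe the Hamiltonian flow of $H(\cdot, \cdot, \xi(t))$ shifted by the fundamental vector field $-\xi^{\CotBundle Q}(t)$; the deviation of $J$ from exact conservation originates from both the $\xi$-drift on $\CotBundle Q$ and the fact that $H(\cdot, \cdot, \xi)$ is not strictly $G$-invariant at fixed $\xi$.

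To establish the identity, I would test it against an arbitrary $\zeta \in \LieA{g}$ and differentiate the defining relation $\kappa(J(q, p), \zeta) = \dualPair{p}{\zeta \ldot q}$ in $t$. Using compatibility of the dual connections $K, \bar K$ with the pairing and the standard formula $\DifFrac{}{t}(\zeta_*(q(t))) = \nabla_{\dot q}\zeta_*$, this gives
\[
\kappa\bigl(\difFrac{}{t} J, \zeta\bigr) = \dualPair*{\DifFrac{}{t} p}{\zeta \ldot q} + \dualPair*{p}{\nabla_{\dot q}\zeta_*}.
\]
Substituting the evolution equations splits the right-hand side into four summands. The two summands involving $\bar K(\xi \ldot p)$ and $\nabla_{\xi \ldot q}\zeta_*$ combine via the identity $\dualPair{p}{\nabla_X \xi_*} = -\dualPair{\bar K(\xi \ldot p)}{X}$ from~\eqref{eq:clebschLagrange:hamiltonian:proof:nablaXi}, the torsion-free property $\nabla_{\zeta \ldot q}\xi_* - \nabla_{\xi \ldot q}\zeta_* = [\zeta_*, \xi_*]$, and the Lie (anti-)homomorphism $[\zeta_*, \xi_*] = \pm[\zeta, \xi]_*$, producing a coadjoint term of the form $-\kappa(\CoadAction_\xi J, \zeta)$. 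The remaining two summands $-\dualPair{\difpFrac{H}{q}}{\zeta \ldot q} + \dualPair{p}{\nabla_{\difpFrac{H}{p}}\zeta_*}$ are, by the same identity applied to $\zeta$ and the decomposition~\eqref{eq:clebschLagrange:hamilton:decompTHamilton}, exactly $-\zeta^{\CotBundle Q} \cdot H(\cdot, \cdot, \xi)$. Differentiating the $G$-invariance~\eqref{eq:G-inv-H} at $g = e$ in the direction $\zeta$ yields $\zeta^{\CotBundle Q} \cdot H(\cdot, \cdot, \xi) = -\kappa(\difpFrac{H}{\xi}, [\zeta, \xi])$, converting this contribution into the remaining coadjoint term $+\kappa(\CoadAction_\xi \difpFrac{H}{\xi}, \zeta)$. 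Assembling both coadjoint terms gives $-\kappa(\CoadAction_\xi C, \zeta)$ for every $\zeta$, which is the desired identity.

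The main obstacle I expect is sign and convention bookkeeping: the sign in $[\zeta_*, \xi_*] = \pm[\zeta, \xi]_*$ (which depends on whether the $G$-action is a left or right action and on the definition of the fundamental vector field), the orientation of the infinitesimal coadjoint action $\CoadAction_\xi$, and the duality convention underlying $\bar K(\xi \ldot p)$ relative to $\nabla_X \xi_*$ must all line up so that the two coadjoint contributions add up with matching signs. A secondary care-point is the connection dependence of the partial derivative $\difpFrac{H}{q}$: one has to verify that the specific combination $-\dualPair{\difpFrac{H}{q}}{\zeta \ldot q} + \dualPair{p}{\nabla_{\difpFrac{H}{p}}\zeta_*}$ genuinely reproduces the intrinsic, connection-independent quantity $\zeta^{\CotBundle Q} \cdot H(\cdot, \cdot, \xi)$, so that the $G$-invariance hypothesis can be applied without hidden connection terms.
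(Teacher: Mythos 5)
Your proposal is correct and follows essentially the same route as the paper: differentiate the momentum-map pairing along the flow, substitute the evolution equations~\eqref{eq:clebschLagrange:hamiltonian:hamiltonian}, use the identity~\eqref{eq:clebschLagrange:hamiltonian:proof:nablaXi} to trade \( \dualPair{p}{\nabla \zeta_*} \) for \( \bar{K}(\zeta \ldot p) \)-terms, and apply the differentiated \( G \)-invariance~\eqref{eq:G-inv-H} to generate the \( \kappa\bigl(\difpFrac{H}{\xi}, \adAction_\zeta \xi\bigr) \) contribution, so that the two coadjoint pieces assemble into \( -\CoadAction_\xi C \). The only cosmetic difference is that the paper differentiates \( \kappa(C,\zeta) \) directly and justifies \( \dualPair{\bar{K}(\zeta \ldot p)}{\xi \ldot q} - \dualPair{\bar{K}(\xi \ldot p)}{\zeta \ldot q} = -\kappa(J, \adAction_\zeta \xi) \) by equivariance of \( J \), whereas you differentiate \( \kappa(J,\zeta) \) and obtain the same term from the torsion-free property and the anti-homomorphism of fundamental vector fields, which is legitimate under the standing assumptions.
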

\begin{proof}
	For clarity, we suppress the explicit time dependence in the subsequent calculation.
	By differentiating the \( G \)-invariance identity~\eqref{eq:G-inv-H} at the point \( (q, p, \xi) \) with respect to \( g \in G \), using~\eqref{eq:clebschLagrange:hamilton:decompCotBundle} and~\eqref{eq:clebschLagrange:hamilton:decompTHamilton}, we get
	\begin{equation}\begin{split}
		0 
		&= \dualPair*{\difpFrac{H}{q}}{\zeta \ldot q} + \dualPair*{\bar{K}(\zeta \ldot p)}{\difpFrac{H}{p}}{} + 
		\kappa\left(\difpFrac{H}{\xi}, \adAction_\zeta \xi\right)
	\end{split}\end{equation}
	for all \( \zeta \in \LieA{g} \).
	Here, \( \adAction \) denotes the adjoint action of \( \LieA{g} \), that is, \( \adAction_\zeta \xi = \commutator{\zeta}{\xi} \).
	Using this equation and the evolution equations~\eqref{eq:clebschLagrange:hamiltonian:hamiltonian}, we obtain
	\begin{equation}\begin{split}
		\difFrac{}{t} \kappa\bigl(C \bigl(\gamma(t)\bigr), \zeta\bigr)
			&= \difFrac{}{t} \left(\dualPair{p}{\zeta \ldot q} - \kappa\left(\difpFrac{H}{\xi}, \zeta\right)\right)
			\\
			&= \dualPair*{\DifFrac{}{t} p}{\zeta \ldot q} + \dualPair*{p}{\DifFrac{}{t} (\zeta \ldot q)} - \kappa\left(\difFrac{}{t} \difpFrac{H}{\xi}, \zeta\right)
			\\
			&= \dualPair{\bar{K}(\zeta \ldot p)}{\xi \ldot q} - \dualPair{\bar{K}(\xi \ldot p)}{\zeta \ldot q} + \kappa\left(\difpFrac{H}{\xi}, \adAction_\zeta \xi\right) 
			\\
			&\qquad + \dualPair{\bar{K}(\zeta \ldot p)}{\dot q} + \dualPair*{p}{\nabla_{\dot q} \zeta_*} - \kappa\left(\difFrac{}{t} \difpFrac{H}{\xi}, \zeta\right).
	\end{split}\end{equation}
	By~\eqref{eq:clebschLagrange:hamiltonian:proof:nablaXi}, the sum of the fourth and the fifth term vanishes.
	Moreover, by \( G \)-equivariance of \( J \) and by using~\eqref{eq:clebschLagrange:hamiltonian:proof:nablaXi}, we have
	\begin{equation}
		\dualPair{\bar{K}(\zeta \ldot p)}{\xi \ldot q} - \dualPair{\bar{K}(\xi \ldot p)}{\zeta \ldot q} = - \kappa(J(q, p), \adAction_\zeta \xi).
	\end{equation}
	This yields the assertion.
\end{proof}

A calculation similar to the one in the proof of~\cref{prop:G-inv-H} yields the following.
\begin{prop}
	\label{prop:clebschLagrange:lagrange:constraintConstantOfMotionEulerPoincare}
	Assume, instead, that \( H \) is \( G \)-invariant in the sense that
	\begin{equation}
		H(g \cdot q, g \cdot p, \xi) = H(q, p, \xi)\, ,
	\end{equation}
	for all \( g \in G \).
	Let \( t \mapsto \gamma(t) = \bigl(q(t), p(t), \xi(t) \bigr) \) be a curve in \( \CotBundle Q \times \LieA{g} \) satisfying the Clebsch--Hamilton equations~\eqref{eq:clebschLagrange:hamiltonian}.
	Then the curve \( t \mapsto J\bigl(\gamma(t)\bigr) \in \LieA{g}^* \) satisfies the Euler--Poincaré equation
	\begin{equation+}
		\difFrac{}{t} \left( J\bigl(\gamma(t)\bigr) \right) = - \CoadAction_{\xi(t)} \left( J\bigl(\gamma(t)\bigr) \right).
		\qedhere
	\end{equation+}
\end{prop}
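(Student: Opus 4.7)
The plan is to follow the computation in the proof of~\cref{prop:G-inv-H}, exploiting the fact that the weaker $G$-invariance hypothesis assumed here removes exactly one term from the differentiated invariance identity but otherwise leaves the bookkeeping unchanged.

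First, I will differentiate $H(g \cdot q, g \cdot p, \xi) = H(q, p, \xi)$ at $g = e$ in direction $\zeta \in \LieA{g}$. Using the decomposition~\eqref{eq:clebschLagrange:hamilton:decompTHamilton} and the observation that under~\eqref{eq:clebschLagrange:hamilton:decompCotBundle} the tangent vector $\zeta \ldot p \in \TBundle_p(\CotBundle Q)$ has components $(\zeta \ldot q,\, \bar{K}(\zeta \ldot p))$, this yields the identity
\[
	0 = \dualPair*{\difpFrac{H}{q}}{\zeta \ldot q} + \dualPair*{\bar{K}(\zeta \ldot p)}{\difpFrac{H}{p}}.
\]
Crucially, the term $\kappa(\difpFrac{H}{\xi}, \adAction_\zeta \xi)$ that was present in~\cref{prop:G-inv-H} is now absent, because $\xi$ is not transformed under the action.

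Next, I will compute $\difFrac{}{t} \kappa\bigl(J(\gamma(t)), \zeta\bigr) = \difFrac{}{t} \dualPair{p}{\zeta \ldot q}$ by applying the product rule for the duality pairing with respect to the dual connections $K$ and $\bar{K}$, noting that $\DifFrac{}{t}(\zeta \ldot q) = \nabla_{\dot q} \zeta_*$ since $\zeta$ is time-independent. I will then substitute the first evolution equation in~\eqref{eq:clebschLagrange:hamiltonian:hamiltonian} for $\DifFrac{}{t} p$, apply the $G$-invariance identity to eliminate $\difpFrac{H}{q}$, and invoke the second evolution equation $\difpFrac{H}{p} = \dot q + \xi \ldot q$. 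The $\dot q$-contributions then cancel by~\eqref{eq:clebschLagrange:hamiltonian:proof:nablaXi}, leaving
\[
	\difFrac{}{t} \kappa\bigl(J(\gamma(t)), \zeta\bigr) = \dualPair{\bar{K}(\zeta \ldot p)}{\xi \ldot q} - \dualPair{\bar{K}(\xi \ldot p)}{\zeta \ldot q}.
\]
Finally, the identity $\dualPair{\bar{K}(\zeta \ldot p)}{\xi \ldot q} - \dualPair{\bar{K}(\xi \ldot p)}{\zeta \ldot q} = -\kappa\bigl(J(q,p), \adAction_\zeta \xi\bigr)$, which was established at the end of the proof of~\cref{prop:G-inv-H} using $G$-equivariance of $J$ together with~\eqref{eq:clebschLagrange:hamiltonian:proof:nablaXi}, produces $\difFrac{}{t} \kappa(J, \zeta) = -\kappa(\CoadAction_\xi J, \zeta)$. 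Since $\zeta \in \LieA{g}$ is arbitrary, the Euler--Poincaré equation follows.

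I do not anticipate any serious obstacle beyond careful sign bookkeeping and the compatibility of the duality pairing with covariant differentiation along the curve; structurally, the argument runs in parallel with~\cref{prop:G-inv-H}. It is worth noting that the momentum map constraint~\eqref{eq:clebschLagrange:hamiltonian:constraint} is not actually invoked in the calculation — only the two evolution equations~\eqref{eq:clebschLagrange:hamiltonian:hamiltonian} enter — so the conclusion in fact holds for any curve satisfying just the evolution part of the Clebsch--Hamilton equations.
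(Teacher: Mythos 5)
Your proposal is correct and is precisely the ``calculation similar to the one in the proof of \cref{prop:G-inv-H}'' that the paper invokes without spelling it out: differentiate the weaker invariance identity (dropping the \( \kappa\bigl(\difpFrac{H}{\xi}, \adAction_\zeta \xi\bigr) \) term), substitute the evolution equations, cancel via~\eqref{eq:clebschLagrange:hamiltonian:proof:nablaXi}, and use equivariance of \( J \) to produce the coadjoint term. Your closing observation that only the evolution equations~\eqref{eq:clebschLagrange:hamiltonian:hamiltonian}, and not the momentum map constraint, enter the argument is also accurate and consistent with how \cref{prop:G-inv-H} itself is phrased.
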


\begin{remark}[Clebsch representation]
	Originally, the Clebsch representation refers to a special parameterization of the velocity field of an ideal incompressible fluid.
	\Textcite{MarsdenWeinstein1983} have shown that this classical example from fluid dynamics fits into the following geometric framework:
	Let \( (P, \poissonDot) \) be a Poisson manifold with Hamiltonian \( h: P \to \R \).
	A \emphDef{Clebsch representation} of \( P \) is a pair consisting of a symplectic manifold \( M \) and a Poisson map \( \psi: M \to P \).
	If we let \( H = h \circ \psi \), then \( \psi \) intertwines the Hamiltonian vector fields \( X_H \) and \( X_h \).
	Thus, by introducing possibly redundant variables, the Hamilton--Poisson equations on \( P \) are written in a symplectic Hamiltonian form on \( M \).
	The most important class of examples is obtained when \( P = \LieA{g}^* \) endowed with the Lie--Poisson bracket and when \( \psi = J \) is the momentum map for a symplectic \( G \)-action on \( M \).
	Then, \( J \) gives Clebsch variables in which the Euler--Poincaré equations on \( \LieA{g}^* \) are written in a symplectic Hamiltonian form.

	Now consider the Clebsch--Hamiltonian setup given by a Clebsch--Hamiltonian \( H: \CotBundle Q \times \LieA{g} \to \R \) which is \( G \)-invariant in the sense that \( H(g \cdot q, g \cdot p, \xi) = H(q, p, \xi) \).
	Then, \cref{prop:clebschLagrange:lagrange:constraintConstantOfMotionEulerPoincare} shows that \( \psi = J: \CotBundle Q \to \LieA{g}^* \) intertwines solutions of the Clebsch--Hamilton equations with solutions of the Euler--Poincaré equation on \( \LieA{g}^* \) parametrized by points \( (q,p) \in \CotBundle Q \).
	In other words, by introducing possibly redundant variables, the (generalized) Euler--Poincaré equations are written in a symplectic Clebsch--Hamilton form.

	Many equations (especially from hydrodynamics) can be written as Euler--Poincaré equations on some Lie algebra.
	For these cases, the Clebsch--Hamilton formalism thus provides a framework to construct different Clebsch-like representations.
	This observation might turn out to be especially advantageous for coupled equations, \eg Yang--Mills plasmas or relativistic fluids, for which the additional variables in \( \CotBundle Q \) admit a physical interpretation. 
\end{remark}


\subsection{Relation to the standard Euler-Lagrange problem}
\label{Ham-Pic-gen}
In this section, we derive the Clebsch--Legendre transformation~\eqref{eq:clebschLagrange:clebschLegendre} from an ordinary Legendre transformation of an extended Lagrangian system.
This approach leads to a constraint analysis using the Dirac--Bergmann theory, which for the special cases of Yang--Mills theory and general relativity recovers the discussion of constraints in \parencite{BergveltDeKerf1986,BergveltDeKerf1986a,Giulini2015}.


\subsubsection{The extended phase space}
One can view \( \xi \in \LieA{g} \) also as an ordinary configuration variable and thus formulate the Clebsch--Euler--Lagrange variational principle as an ordinary Euler--Lagrange problem.
For that purpose, we extend the configuration space to 
\begin{equation}
	Q_\ext \defeq Q \times \LieA{g} 
\end{equation}
and define
\begin{equation}
	\label{L-ext}
	L_\ext : \TBundle Q_\ext \to \R \, , \quad  L_\ext(q, \dot q, \xi, \dot \xi) = L(q, \dot q + \xi \ldot q, \xi) \, .
\end{equation}
 
The basic structure for the discussion of the Hamiltonian picture is the cotangent bundle
\begin{equation}
	\CotBundle Q_\ext = \CotBundle Q \times (\LieA{g} \times \LieA{g}^*)
\end{equation}
endowed with the natural product symplectic structure.
We usually denote points of \( \CotBundle Q_\ext \) by tuples \( (q, p, \xi, \nu) \).
First, we observe that the Lie group $\TBundle G$ acts naturally on $\TBundle Q_\ext$.
Under the right trivialization $\tau_\textrm{R}$ of $\TBundle G$,
\begin{equation}
	\tau_\textrm{R}: \LieA{g} \times G \to \TBundle G  \, , \quad (\zeta, g) \mapsto \zeta \ldot g \,, 
\end{equation}
 the group structure of \( \TBundle G \) is that of a semidirect product \( \LieA{g} \rSemiProduct_{\AdAction} G \).
 That is, the group multiplication is given by
\begin{equation}\label{eq:TGactionOnQext}
	(\xi, a) \cdot (\zeta, b) = (\xi + \AdAction_a \zeta, ab)
\end{equation}
for \( \xi, \zeta \in \LieA{g} \) and \( a,b \in G \).
The pair \( (\zeta, g) \in \LieA{g} \rSemiProduct_{\AdAction} G \) acts on \( Q_\ext \) by
\begin{equation}
	(\zeta, g) \cdot (q, \xi) \defeq (g \cdot q, \AdAction_g \xi + \zeta).
\end{equation}
A straightforward calculation shows that the natural lift of the \( \TBundle G \)-action to \( \CotBundle Q_\ext \) has the form
\begin{equation}\label{eq:extendedCotangentBundle:liftedAction}
	(\zeta, g) \cdot (q, p, \xi, \nu) = (g \cdot q, g \cdot p, \AdAction_g \xi + \zeta, \CoAdAction_g \nu).
\end{equation}
\begin{prop}\label{ext-CotB}
	Assume that the lifted \( G \)-action on the cotangent bundle \( \CotBundle Q \) has a momentum map \( J: \CotBundle Q \to \LieA{g}^* \) (which is automatic in finite dimensions).
	Then, the cotangent bundle \( \CotBundle Q_\ext \) carries the structure of a Hamiltonian \( \TBundle G \)-manifold with the equivariant momentum map \( J_\ext: \CotBundle Q_\ext \to \LieA{g}^* \times \LieA{g}^* \) given by 
	\begin{equation+}\label{eq:TGmomentumMap}
		J_\ext (q, p, \xi, \nu) = (\nu, \CoadAction_\xi \nu + J(q,p)).
		\qedhere
	\end{equation+}
\end{prop}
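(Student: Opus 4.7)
The plan is to invoke the general theory of cotangent-lifted group actions. Because the $\TBundle G$-action on $\CotBundle Q_\ext$ given by \eqref{eq:extendedCotangentBundle:liftedAction} is precisely the cotangent lift of its action on $Q_\ext$, it automatically preserves the canonical symplectic structure, and once the momentum map is representable in the chosen smooth dual it is the canonical one determined by $\kappa(J_\ext, X) = \theta_{Q_\ext}(X_*)$, where $\theta_{Q_\ext}$ denotes the Liouville $1$-form and $X_*$ the fundamental vector field of $X$ in the Lie algebra of $\TBundle G$. The proof then reduces to evaluating this pairing explicitly and verifying equivariance.

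First I would compute the fundamental vector field on $Q_\ext = Q \times \LieA{g}$ of an element $(\eta, \chi)$ in $\LieA{g} \oplus \LieA{g}$, the Lie algebra of $\TBundle G \cong \LieA{g} \rSemiProduct_{\AdAction} G$. Using the curve $t \mapsto (t\eta, \exp_G(t\chi))$ through the identity and differentiating the action gives
\begin{equation}
(\eta, \chi)_*(q, \xi) = \bigl(\chi \ldot q,\, \adAction_\chi \xi + \eta\bigr) \in \TBundle_q Q \oplus \LieA{g}.
\end{equation}
Next I would exploit the product splitting $\CotBundle Q_\ext \cong \CotBundle Q \times \LieA{g} \times \LieA{g}^*$, under which $\theta_{Q_\ext}$ decomposes as the sum of the canonical $1$-form on $\CotBundle Q$ and the canonical $1$-form on $\CotBundle \LieA{g}$. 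Evaluating on $(\eta, \chi)_*$ at $(q, p, \xi, \nu)$, the first piece contributes $\dualPair{p}{\chi \ldot q} = \kappa(J(q, p), \chi)$ by the defining relation \eqref{eq:cotangentBundle:momentumMapDef} for $J$, while the second contributes $\kappa(\nu, \adAction_\chi \xi + \eta) = \kappa(\CoadAction_\xi \nu, \chi) + \kappa(\nu, \eta)$ upon unwinding the definition of the coadjoint action. Collecting coefficients of $\eta$ and $\chi$ yields precisely the claimed $J_\ext(q, p, \xi, \nu) = (\nu,\, \CoadAction_\xi \nu + J(q, p))$.

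Finally, equivariance of $J_\ext$ follows from the general fact that cotangent-lifted momentum maps are automatically $\AdAction^*$-equivariant, or alternatively can be verified by a direct calculation against the coadjoint action on $\LieA{g}^* \times \LieA{g}^*$ induced by $\AdAction_{(\eta, g)}(\xi, \chi) = (\AdAction_g \xi - \adAction_{\AdAction_g \chi} \eta,\, \AdAction_g \chi)$ on the semidirect product Lie algebra. The only genuine subtlety is the one the paper already flags in infinite dimensions: one must ensure that $J_\ext$ takes values in the chosen smooth dual $\LieA{g}^* \times \LieA{g}^*$ and not merely the full topological dual. This reduces to the existence of $J$ on $\CotBundle Q$, which is hypothesized; the additional $\LieA{g}^*$-component of $J_\ext$ involves only the preassigned pairing between $\LieA{g}^*$ and $\LieA{g}$, so it introduces no further regularity issue.
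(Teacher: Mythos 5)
Your proposal is correct and follows essentially the same route as the paper's proof: compute the fundamental vector field of \( (\eta,\chi) \in \LieA{g} \oplus \LieA{g} \) on \( Q_\ext \), pair the canonical \( 1 \)-form (equivalently, the defining relation~\eqref{eq:cotangentBundle:momentumMapDef}) against it using the product structure of \( \CotBundle Q_\ext \), and read off \( J_\ext \), with equivariance noted as automatic for cotangent lifts. The only difference is cosmetic notation and your slightly more explicit remark on the infinite-dimensional dual-representability issue, which the paper handles by the same hypothesis on \( J \).
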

\begin{proof}
	Since the \( G \)-action on \( Q \) and the adjoint action are smooth, the \( \TBundle G \)-action on \( Q_\ext \) as defined in~\eqref{eq:TGactionOnQext} is smooth, too.
	The lifted \( \TBundle G \)-action on the cotangent bundle \( \CotBundle Q_\ext \) leaves the tautological \( 1 \)-form invariant and thus it is symplectic.
	Moreover, note that the fundamental vector field on \( Q_\ext \) generated by \( (\sigma, \varrho) \in \LieA{g} \times \LieA{g} \) is given by
	\begin{equation}
		(\sigma, \varrho) \ldot (q, \xi) = (\varrho \ldot q, \commutator{\varrho}{\xi} + \sigma).
	\end{equation}
	Using the defining relation~\eqref{eq:cotangentBundle:momentumMapDef} of the momentum map, we have
	\begin{equation}\begin{split}
		\kappa\left(J_\ext (q, p, \xi, \nu), (\sigma, \varrho) \right)
			&= \dualPair{(p, \nu)}{(\sigma, \varrho) \ldot (q, \xi)}
			\\
			&= \dualPair{p}{\varrho \ldot q} + \kappa(\nu, -\adAction_\xi \varrho + \sigma)
			\\
			&= \kappa(\nu, \sigma) + \kappa(\CoadAction_\xi \nu + J(q,p), \varrho),
	\end{split}\end{equation}
	which verifies the asserted formula~\eqref{eq:TGmomentumMap}.
	The equivariance of \( J_\ext \) is immediate from~\eqref{eq:cotangentBundle:momentumMapDef}.
\end{proof}

\subsubsection{The Dirac--Bergmann theory of constraints}
The extended Lagrangian \( L_\ext \) is never regular, because the velocity vector \( \dot \xi \) does not appear in \( L_\ext \).
In order to handle the non-degeneracy, we employ the Dirac--Bergmann theory of constraints.
Let us shortly review the algorithm in a geometric language as presented in \parencite{GotayNesterHinds1978}.
\begin{enumerate}
	\item
		Given a degenerate Lagrangian \( L: \TBundle Q \to \R \), let \( M_1 \subseteq \CotBundle Q \) be the image of the fiber derivative of \( L \). 
		Assume that \( M_1 \) is a smooth submanifold.
		Then, at least locally, \( M_1 \) is characterized by the vanishing of a collection of functions \( c_i \).
		The equations \( c_i(q, p) = 0 \) are called the \emphDef{primary constraints}.
	\item
		The Hamiltonian \( H \), as a smooth function on \( M_1 \), is defined by the Legendre transformation
		\begin{equation}
			H\left(q, \difpFrac{L}{\dot{q}}(q, \dot q)\right) 
				\defeq \dualPair*{\difpFrac{L}{\dot{q}}(q, \dot q)}{\dot q} - L(q, \dot q).
		\end{equation}
		Let \( M_2 \subseteq M_1 \) be the subset characterized by the \emphDef{secondary constraints} as follows:
		\begin{equation}
			\label{eq:clebsch:hamiltonian:secondaryConstraint}
			m \in M_2 \text{ if and only if } \dualPair{\dif_m H}{\TBundle_m M_1 \intersect (\TBundle_m M_1)^\omega} = 0,
		\end{equation}
		where \( (\TBundle_m M_1)^\omega \) denotes the symplectic orthogonal of \( \TBundle_m M_1 \) in \( \TBundle_m (\CotBundle Q) \) with respect to the canonical symplectic form \( \omega \).
		Assume that \( M_2 \) is a smooth submanifold of \( M_1 \).
	\item
		Iterate the process to get a chain of submanifolds
		\begin{equationcd}
			\ldots \to[r]
				& M_i \to[r]
				& \ldots \to[r]
				& M_3 \to[r]
				& M_2 \to[r]
				& M_1 \to[r]
				& \CotBundle Q, \nonumber
		\end{equationcd}
		defined by the condition that \( m \in M_i \) is a point of \( M_{i+1} \) if and only if \( \dualPair{\dif_m H}{\TBundle_m M_1 \intersect (\TBundle_m M_i)^\omega} = 0 \).
		If the algorithm happens to terminate at a non-empty submanifold \( M_f \), then the dynamics is, by construction, tangent to \( M_f \) and is Hamiltonian with respect to the Hamiltonian \( \restr{H}{M_f} \).
\end{enumerate}
\begin{remark}
	\Textcite{GotayNesterHinds1978} discuss this algorithm in an infinite-dimensional Banach setting but under the rather restrictive assumption that the symplectic form is strongly symplectic.
	This assumption never holds for symplectic Fréchet manifolds.
	Moreover, in \parencite{GotayNesterHinds1978} it is assumed that all subsets \( M_i \) are submanifolds of \( \CotBundle Q \).
	In our setting, \( M_2 \) turns out to be a momentum map level set and thus it is not a smooth manifold unless the action is free.
	In view of these problems, we take the Dirac--Bergmann algorithm only as a guide and verify directly at the end that the Hamiltonian system so obtained is indeed equivalent to the Clebsch--Hamilton equations, which in turn are equivalent to the degenerate Lagrangian system we started with, see \cref{prop:clebschLagrange:legendre}.
\end{remark}
We now apply this algorithm to the degenerate Lagrangian $L_\ext: \TBundle Q_\ext \to \R $ defined in~\eqref{L-ext}.
The fiber derivative of \( L_\ext \) is given by
\begin{equation}
	\left(\difpFrac{L_\ext}{\dot{q}}, \difpFrac{L_\ext}{\dot{\xi}}\right)(q, \dot q, \xi, \dot \xi)
		= \left(\difpFrac{L}{\dot{q}}(q, \dot q + \xi \ldot q, \xi), 0\right).
\end{equation}
Note that this Legendre transformation of the extended system is (essentially) equivalent to the Clebsch--Legendre transformation introduced in~\eqref{eq:clebschLagrange:clebschLegendre}.
In the case when \( L \) is regular, the range of the fiber derivative is thus 
\begin{equation}
 	M_1 \equiv \CotBundle Q \times \LieA{g} \times \set{0} \subseteq \CotBundle Q_\ext = \CotBundle Q \times \LieA{g} \times \LieA{g}^* \, .
\end{equation}
In other words, the first primary constraint entails that the canonical momentum conjugate to \( \xi \) has to vanish, \ie, \( \nu = 0 \).
This should not come as a surprise as \( L_\ext \) does not depend on the velocity vector \( \dot \xi \). 
The Hamiltonian \( H_\ext \) on \( \CotBundle Q \times \LieA{g} \) is defined by the Legendre transformation of \( L_\ext \),
\begin{equation}
\label{H-ext}
	H_\ext(q,p,\xi) \defeq \dualPair*{p}{\dot q} - L_\ext(q, \dot q, \xi, 0) \, ,
\end{equation}
where the relation
\begin{equation}
	p = \difpFrac{L_\ext}{\dot{q}}(q, \dot q, \xi, \dot \xi)
		= \difpFrac{L}{\dot{q}}(q, \dot q + \xi \ldot q, \xi)
\end{equation}
determines $\dot q$ as a function of $q,p$ and $\xi$.
As the next step of the constraint analysis, we have to determine the set of points \( (q, p, \xi) \in M_1 \) for which the linearized Hamiltonian vanishes on the symplectic orthogonal of \( \TBundle_{(q,p,\xi)} M_1 \).
We clearly have
\begin{equation}
	\bigl(\TBundle_{(q,p,\xi)} M_1\bigr)^\omega
		= \bigl(\TBundle_{(q,p)} (\CotBundle Q) \times \LieA{g} \times \set{0}\bigr)^\omega
		= \set{0} \times (\LieA{g} \times \set{0})
		\subseteq \TBundle_{(q,p)} (\CotBundle Q) \times (\LieA{g} \times \LieA{g}^*),
\end{equation}
because the canonical symplectic form on \( \CotBundle Q \) is weakly non-degenerate and \( \LieA{g} \times \set{0} \) is a Lagrangian subspace of \( \LieA{g} \times \LieA{g}^* \).
Thus, the secondary constraint entails that the derivative of \( H_\ext \) in the \( \xi \)-direction has to vanish, \ie
\begin{equation}
\label{constr-H-ext}
	\difpFrac{H_\ext}{\xi}(q,p,\xi) = 0.
\end{equation}
To summarize, after this step of the constraint analysis, we obtain the following system of constrained Hamilton equations on \( \CotBundle Q_\ext \):
\begin{subequations}\label{eq:clebschLagrange:hamiltonianExt}\begin{gather}
	\label{eq:clebschLagrange:hamiltonianExt:hamiltonian}
	\difpFrac{H_\ext}{q}(q, p, \xi) = - \DifFrac{}{t} p, \qquad \difpFrac{H_\ext}{p}(q, p, \xi) = \dot q,
	\\
	\label{eq:clebschLagrange:hamiltonianExt:constraints}
	\nu = 0, \qquad \difpFrac{H_\ext}{\xi}(q,p,\xi) = 0.
\end{gather}\end{subequations}
The constraints~\eqref{eq:clebschLagrange:hamiltonianExt:constraints} are, in general,  not preserved in time and, thus, the Dirac--Bergmann algorithm may yield further constraints.
We will return to this question at the end of the section, but first let us relate the constrained Hamiltonian system~\eqref{eq:clebschLagrange:hamiltonianExt} to the Clebsch--Hamiltonian system~\eqref{eq:clebschLagrange:hamiltonian}.
For this, we need the following.
\begin{lemma}
	The following identity holds:
	\begin{equation+}\label{eq:compositeHamiltonian}
		H_\ext(q, p, \xi) = H(q, p, \xi) - \kappa(J(q, p), \xi),
	\end{equation+}
	where \( H \) is the Clebsch--Hamiltonian obtained via the Clebsch--Legendre transformation, \cf~\eqref{eq:hamiltonian}.
\end{lemma}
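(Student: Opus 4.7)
The identity is essentially a bookkeeping exercise, so my plan is to prove it by direct substitution from the definitions, with the single nontrivial ingredient being the defining relation of the momentum map.

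First, I would unpack both sides. From~\eqref{L-ext} and~\eqref{H-ext},
\begin{equation*}
	H_\ext(q,p,\xi) = \dualPair{p}{\dot q} - L(q, \dot q + \xi \ldot q, \xi),
\end{equation*}
where \( \dot q \) is determined implicitly by
\begin{equation*}
	p = \difpFrac{L_\ext}{\dot{q}}(q, \dot q, \xi, \dot\xi) = \difpFrac{L}{\dot{q}}(q, \dot q + \xi \ldot q, \xi).
\end{equation*}
On the other hand, from~\eqref{eq:hamiltonian} the Clebsch--Hamiltonian reads \( H(q,p,\xi) = \dualPair{p}{v} - L(q,v,\xi) \), with \( v \) determined by \( p = \difpFrac{L}{\dot{q}}(q,v,\xi) \). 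Regularity of \( L \) (i.e.\ of the Clebsch--Legendre transform) guarantees that this implicit equation has a unique solution for \( v \). Comparing the two implicit relations, the substitution \( v = \dot q + \xi \ldot q \) makes them identical, so the same \( v \) appears in both formulas.

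Subtracting then gives
\begin{equation*}
	H(q,p,\xi) - H_\ext(q,p,\xi) = \dualPair{p}{v} - \dualPair{p}{\dot q} = \dualPair{p}{\xi \ldot q},
\end{equation*}
and the defining relation of the momentum map~\eqref{eq:cotangentBundle:momentumMapDef} rewrites the right-hand side as \( \kappa(J(q,p), \xi) \). Rearranging yields the claimed identity.

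There is no real obstacle here; the only point worth double-checking is that the value of \( v \) (equivalently, of \( \dot q \)) as a function of \( (q,p,\xi) \) matches on both sides, which follows from regularity of \( L \) and the identity of the two defining implicit equations after the shift \( v \leftrightarrow \dot q + \xi \ldot q \). No assumption on \( G \)-equivariance or on the connection enters the argument.
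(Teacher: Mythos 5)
Your proposal is correct and follows essentially the same route as the paper: unpack \( H_\ext \) via~\eqref{H-ext} and~\eqref{L-ext}, identify the velocity via \( v = \dot q + \xi \ldot q \) so that the Clebsch--Legendre relation matches, and convert the leftover term \( \dualPair{p}{\xi \ldot q} \) into \( \kappa(J(q,p),\xi) \) using~\eqref{eq:cotangentBundle:momentumMapDef}. The paper simply adds and subtracts \( \dualPair{p}{\xi \ldot q} \) instead of subtracting the two Hamiltonians, which is the same computation; your explicit remark that regularity guarantees the two implicit equations pick out the same \( v \) is a fine (slightly more careful) touch.
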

\begin{proof}
Using~\eqref{H-ext}, the momentum map relation and the definition of \( H \) as the Legendre transform of \( L \) (with \( v = \dot q + \xi \ldot q \) in the above notation), we calculate
\begin{align}
	H_\ext(q, p, \xi)
		&= \dualPair{p}{\dot q} - L_\ext(q, \dot q, \xi, 0) \nonumber
		\\
		&= \dualPair*{p}{\dot q} - L(q, \dot q + \xi \ldot q, \xi)
		\\
		&= \dualPair*{p}{\dot q + \xi \ldot q} - L(q, \dot q + \xi \ldot q, \xi) - \dualPair*{p}{\xi \ldot q}\nonumber
		\\
		&= H \left(q, p, \xi \right) - \kappa(J(q, p), \xi).\nonumber \qedhere
\end{align}
\end{proof}
The identity~\eqref{eq:compositeHamiltonian} allows us to reformulate the equations~\eqref{eq:clebschLagrange:hamiltonianExt} in a form which involves \( H \) only.
Using \( \kappa(J(q, p),\xi) = \dualPair{p}{\xi \ldot q} \), we obtain
\begin{equation}\begin{split}
	\difpFrac{H_\ext}{q}(q, p, \xi) 
		&= \difpFrac{H}{q}(q, p, \xi) + \bar{K}(\xi \ldot p),
	\\
	\difpFrac{H_\ext}{p}(q, p, \xi)
		&= \difpFrac{H}{p}(q, p, \xi) - \xi \ldot q,
\end{split}\end{equation}
so that the evolution equations~\eqref{eq:clebschLagrange:hamiltonianExt:hamiltonian} take the form
\begin{equation}
	\difpFrac{H}{q}(q, p, \xi) = - \DifFrac{}{t} p - \bar{K}(\xi \ldot p), \qquad \difpFrac{H}{p}(q, p, \xi) = \dot q + \xi \ldot q.
\end{equation}
Similarly, for the \( \xi \)-derivative of \( H_\ext \), using~\eqref{eq:compositeHamiltonian}, we find
\begin{equation}
	\difpFrac{H_\ext}{\xi}(q, p, \xi)
		= \difpFrac{H}{\xi}(q, p, \xi) - J(q, p).
\end{equation}
The constraints~\eqref{eq:clebschLagrange:hamiltonianExt:constraints} are, therefore, equivalent to
\begin{equation}
	\label{eq:Constr}
	\nu = 0, \qquad J(q, p) = \difpFrac{H}{\xi}(q, p, \xi).
\end{equation}
In summary, the constrained Hamiltonian system~\eqref{eq:clebschLagrange:hamiltonianExt} obtained by the Dirac-Bergmann analysis of the extended system is equivalent to the Clebsch--Hamilton system~\eqref{eq:clebschLagrange:hamiltonian}.

\begin{remark}
	\label{prop:clebschLagrange:constraintsAsMomentumMapExtConstraint}
	Using \cref{ext-CotB}, we see that the constraints defined by~\eqref{eq:Constr} are equivalent to the momentum map constraint
	\begin{equation+}
		J_\ext(q, p, \xi, \nu) = \left(0, \difpFrac{H}{\xi}(q, p, \xi)\right).
		\qedhere	
	\end{equation+}
\end{remark}
In \cref{prop:G-inv-H}, we have seen that the dynamics is tangent to the momentum map constraint if \( H \) is \( G \)-invariant in the sense of~\eqref{eq:G-inv-H} and if, additionally, it does not explicitly depend on the \( \xi \)-variable.
Thus, in this case, the Dirac--Bergmann algorithm terminates at this stage without introducing further constraints.
Let us spell out the details.
Since \( H \) does not explicitly depend on \( \xi \), we have \( M_2 = J^{-1}(0) \times \LieA{g} \).
Note that \( M_2 \) might have singularities if the action is not free.
The \enquote{tangent space} to \( J^{-1}(0) \) at a point \( (q, p) \) is the kernel of \( \tangent_{q, p} J \), which by the bifurcation lemma equals \( {(\LieA{g} \ldot (q, p))}^\omega \), \cf \parencite[Proposition~4.5.14]{OrtegaRatiu2003} for the finite-dimensional case and \parencite{DiezThesis} for the infinite-dimensional setting.
Thus, for a \( G \)-invariant Clebsch--Hamiltonian \( H \), the condition\footnote{Here, we have used the identity \( (\LieA{g} \ldot (q, p))^{\omega \omega} = \LieA{g} \ldot (q, p) \), which in infinite dimensions requires additional assumptions of functional-analytic nature.
For a precise formulation of the bifurcation lemma in an infinite-dimensional context see \parencite{DiezThesis}.}
\begin{equation}
	0 
		= \dualPair{\dif_{q, p, \xi} H_\ext}{\TBundle_{q, p, \xi} M_1 \intersect (\TBundle_{q, p, \xi} M_2)^\omega}
		= \dualPair{\dif_{q, p} H}{\LieA{g} \ldot (q,p)},
\end{equation}
is automatically satisfied for all \( (q, p, \xi) \in \CotBundle Q \times \LieA{g} \) and the Dirac algorithm terminates at \( M_2 \).


\subsubsection{Reduction by stages}
\label{sec:clebschLagrange:reductionStages}


As we have seen in \cref{prop:clebschLagrange:constraintsAsMomentumMapExtConstraint}, the constraints in the Dirac--Bergmann algorithm can be implemented in terms of the momentum map \( J_\ext \) for the lifted \( \TBundle G \)-action on \( \CotBundle Q_\ext \).
Thus it is natural to think of the symplectically reduced space \( \CotBundle Q_\ext \sslash \TBundle G \) as the true phase space of the theory.
In the sequel, we use the theory of symplectic reduction by stages \parencite{MarsdenMisiolekEtAl2007} for the tangent group \( \TBundle G \) to realize the two steps in the Dirac--Bergmann process as two separate symmetry reductions.

The starting point is the cotangent bundle \( \CotBundle Q_\ext \) of the extended configuration space \( Q_\ext \).
Let us assume that the Clebsch--Hamiltonian $H$ is $G$-invariant in the sense of~\eqref{eq:G-inv-H} and that it does not explicitly depend on \( \xi \).
Then, the constraints~\eqref{eq:Constr} have the form \( \nu = 0 \) and \( J(q, p) = 0 \), or in other words, \( J_\ext(q,p,\xi,\nu) = 0 \).
As we have seen above, the right trivialization \( \tau_\textrm{R} \) identifies \( \TBundle G \) with the semidirect product \( \LieA{g} \rSemiProduct_{\AdAction} G \).
It is hence natural to perform symplectic reduction by stages: first quotient out by the Lie algebra \( \LieA{g} \) and then by the Lie group \( G \).

Due to the particularly simple action of \( \LieA{g} \) on \( Q_\ext \) the reduced phase space at the first stage is symplectomorphic to \( \CotBundle Q \).
Indeed, the momentum map for the \( \LieA{g} \)-action on \( \CotBundle Q_\ext \) is simply given by
\begin{equation}
	J_{\LieA{g}} (q, p, \xi, \nu) = \nu.
\end{equation}
Hence, the condition \( J_{\LieA{g}} = 0 \) cuts out exactly the first constraint submanifold \( M_1 = \CotBundle Q \times \LieA{g} \times \set{0} \subset \CotBundle Q_\ext \).
Moreover, by~\eqref{eq:extendedCotangentBundle:liftedAction}, \( \LieA{g} \) acts on \( \CotBundle Q_\ext \) by translation in the \( \LieA{g} \)-factor and thus the quotient \( J_{\LieA{g}}^{-1}(0) \slash \LieA{g} \) is diffeomorphic to \( \CotBundle Q \).
The regular cotangent bundle reduction theorem \parencite[Theorem~6.6.1]{OrtegaRatiu2003} shows that the reduced symplectic form coincides with the canonical one (the problems coming from the infinite-dimensional setting can easily be handled, because the \( \LieA{g} \)-bundle \( Q_\ext \to Q \) is trivial).
By~\parencite[Lemma~4.2.6]{MarsdenMisiolekEtAl2007}, the momentum map for the residual \( G \)-action on the reduced space \( \CotBundle Q \) is induced by the map
\begin{equation}
	\pr_2 \circ \restr{(J_\ext)}{J_{\LieA{g}}^{-1}(0)} (q, p, \xi) = J(q, p).
\end{equation}
That is, it coincides with the momentum map for the lifted action on \( \CotBundle Q \).
This can be also directly deduced by noting that the residual \( G \)-action coincides with the lifted action on the cotangent bundle.
In particular, the momentum level set \( J^{-1}(0) \) coincides with the secondary constraint set \( M_2 \).

As the momentum maps under consideration are equivariant, the reduction by stages theorem for semidirect product actions~\parencite[Theorem~4.2.2]{MarsdenMisiolekEtAl2007} implies that the two-stage reduced space is symplectomorphic to the all-at-once reduced space\footnote{To be more precise, the reduction by stages theorem~\parencite[Theorem~4.2.2]{MarsdenMisiolekEtAl2007} is formulated for the free and proper action of a finite-dimensional group on a finite-dimensional phase space. Nonetheless, the peculiarities of our infinite-dimensional setting can be handled easily due to the simple form of the first reduction.}.
We thus have shown the following.
\begin{prop}
	\label{prop:clebschLagrange:reductionStages}
  The symplectically reduced space \( \check{M} = J^{-1}(0) \slash G \equiv \CotBundle Q \sslash G \) is symplectomorphic to the symplectic quotient \( J_\ext^{-1}(0) \slash \TBundle G \equiv \CotBundle Q_\ext \sslash \TBundle G \) and we have the following reduction by stages diagram:
	\begin{equation}\begin{tikzcd}[column sep=5.9em, row sep=0.02em]
		\CotBundle Q_\ext
			\ar[r, twoheadrightarrow, "{\sslash \, \LieA{g}}"]
			\ar[rr, twoheadrightarrow, "{\sslash \, \TBundle G}", swap, bend right]
		& \CotBundle Q
			\ar[r,twoheadrightarrow, "{\sslash \, G}"]
		& \check{M}.
	\end{tikzcd}\end{equation} 
\end{prop}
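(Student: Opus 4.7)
The strategy is to realize the semidirect product decomposition \( \TBundle G \cong \LieA{g} \rSemiProduct_{\AdAction} G \) at the level of symplectic reduction, performing first a reduction by the normal subgroup \( \LieA{g} \) and then by the residual group \( G \). All the necessary ingredients have already been assembled in the preceding discussion; what remains is to assemble them using the semidirect-product reduction-by-stages theorem of \textcite{MarsdenMisiolekEtAl2007}.

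First, I would identify the momentum map for the \( \LieA{g} \)-action on \( \CotBundle Q_\ext \) as the projection \( J_{\LieA{g}}(q,p,\xi,\nu) = \nu \), which is immediate from the formula~\eqref{eq:TGmomentumMap} for \( J_\ext \) by restriction to the normal subalgebra. Then the zero level set coincides with \( M_1 = \CotBundle Q \times \LieA{g} \times \set{0} \), and the \( \LieA{g} \)-action on this level set is simple translation in the \( \LieA{g} \)-factor by~\eqref{eq:extendedCotangentBundle:liftedAction}. Consequently, the quotient \( J_{\LieA{g}}^{-1}(0) \slash \LieA{g} \) is diffeomorphic to \( \CotBundle Q \); cotangent bundle reduction at zero for the trivial principal bundle \( Q_\ext \to Q \) shows that the reduced symplectic form is the canonical one on \( \CotBundle Q \).

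Next, I would read off the residual momentum map on the first-stage reduced space from \parencite[Lemma~4.2.6]{MarsdenMisiolekEtAl2007}: it is induced by the second component of \( J_\ext \) restricted to \( J_{\LieA{g}}^{-1}(0) \), which evaluates to \( J(q,p) \), exactly the momentum map of the lifted \( G \)-action on \( \CotBundle Q \). Reducing by \( G \) at level zero therefore yields \( \check{M} = J^{-1}(0) \slash G \). Since both momentum maps are equivariant, the reduction-by-stages theorem \parencite[Theorem~4.2.2]{MarsdenMisiolekEtAl2007} then provides the symplectomorphism with the one-step reduced space \( J_\ext^{-1}(0) \slash \TBundle G \).

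The main obstacle is purely a matter of functional-analytic hygiene: the quoted theorems are formulated in the finite-dimensional free-and-proper category, whereas here \( Q \) and \( G \) are Fréchet and the \( G \)-action is typically not free, so \( \check{M} \) is only a singular symplectic space. However, the first-stage reduction is trivial in a strong sense, since the \( \LieA{g} \)-bundle \( Q_\ext \to Q \) is globally trivial and the action is by translation; this sidesteps the usual difficulties and allows one to carry out stage one without appealing to slice theorems. Only the second-stage reduction by \( G \) is genuinely singular, and since it is by construction a cotangent bundle reduction of \( \CotBundle Q \) at zero momentum, it is exactly the object whose detailed analysis is deferred to \parencite{DiezRudolphReduction}. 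Thus the proposition reduces, modulo these deferred technicalities, to a direct assembly of the observations already recorded above the statement.
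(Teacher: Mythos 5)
Your proposal is correct and follows essentially the same route as the paper: it identifies \( J_{\LieA{g}}(q,p,\xi,\nu)=\nu \), recognizes the first-stage quotient as \( \CotBundle Q \) with its canonical form via cotangent bundle reduction for the trivial bundle \( Q_\ext \to Q \), reads off the residual momentum map \( J \) from \parencite[Lemma~4.2.6]{MarsdenMisiolekEtAl2007}, and invokes the semidirect-product reduction-by-stages theorem, with the same caveats about the finite-dimensional free-and-proper hypotheses being circumvented by the triviality of the first stage. No substantive differences from the paper's argument.
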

By this proposition, the symplectic reduction procedure boils down to the symplectic reduction of $\CotBundle Q$ with respect to $G$.
This singular reduction will be discussed in detail elsewhere \parencite{DiezRudolphReduction}.  

Finally, let us comment on the reduction of dynamics.
We continue to work in the setting where the Clebsch--Hamiltonian \( H \) is \( G \)-invariant in the sense of~\eqref{eq:G-inv-H} and does not explicitly depend on \( \xi \).
In particular, we view \( H \) as a smooth function on \( \CotBundle Q \).
Recall that the Hamiltonian $H_\ext$ is only defined on \( J_{\LieA{g}}^{-1}(0) \) and not on the whole of \( \CotBundle Q_\ext \).
By~\eqref{eq:extendedCotangentBundle:liftedAction}, the $\LieA{g}$-action on \( \CotBundle Q_\ext \) has the form
\begin{equation}
	(\zeta, {\mathbbm 1}) \cdot (q, p, \xi, \nu) = (q, p, \xi + \zeta, \nu) 
\end{equation}
and, thus, $H_\ext$ as given by~\eqref{eq:compositeHamiltonian}  is \emph{not} \( \LieA{g} \)-invariant.
That is, it does not descend to the first reduced space \( J_{\LieA{g}}^{-1}(0) \slash \LieA{g} = \CotBundle Q \).
On the other hand, if also the secondary constraint \( J = 0 \) is imposed, then \( H_\ext \) coincides with $H$ and is \( \TBundle G \)-invariant.
Hence, the Hamiltonian \( H_\ext \) \emph{does} descend to the completely reduced space \( \check{M} \isomorph \CotBundle Q_\ext \sslash \TBundle G \) and the reduction of dynamics boils down to the reduction of $H: \CotBundle Q \to \R$ with respect to $G$.
Let us summarize.
\begin{thm}
	\label{prop:clebschLagrange:equivalenceHamiltonianReductions}
	Let \( H: \CotBundle Q \to \R \) be a \( \xi \)-independent Clebsch--Hamiltonian, which is \( G \)-invariant in the sense of~\eqref{eq:G-inv-H}.
	Assume that the lifted \( G \)-action to \( \CotBundle Q \) has a momentum map \( J \).
	Then, the following systems of equations are equivalent:
	\begin{thmenumerate}
		\item
			The Clebsch--Hamilton equations~\eqref{eq:clebschLagrange:hamiltonian} on \( \CotBundle Q \times \LieA{g} \) with respect to \( H \).
		\item
			The constraint Hamilton equations~\eqref{eq:clebschLagrange:hamiltonianExt} on \( \CotBundle Q_\ext \) with respect to the Hamiltonian \( H_\ext: \CotBundle Q \times \LieA{g} \to \R \) defined by~\eqref{eq:compositeHamiltonian}.
		\item
			The Hamilton equations on \( \check{M} = \CotBundle Q \sslash G \isomorph \CotBundle Q_\ext \sslash \TBundle G  \) with respect to the reduced Hamiltonian \( \check{H}: \check{M} \to \R \) defined by
			\begin{equation}
				\pi^* \check{H} = \restr{H}{J^{-1}(0)},
			\end{equation}
			where \( \pi: J^{-1}(0) \to J^{-1}(0) \slash G = \check{M} \) is the natural projection.
			\qedhere 
	\end{thmenumerate}
\end{thm}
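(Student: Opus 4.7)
The plan is to chain the equivalences (i)$\Leftrightarrow$(ii) and (ii)$\Leftrightarrow$(iii). The first is essentially a bookkeeping exercise using the identity~\eqref{eq:compositeHamiltonian} already derived in the body of the text; the second is a direct application of the reduction by stages theorem of~\cref{prop:clebschLagrange:reductionStages}.

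For (i)$\Leftrightarrow$(ii), I would start from the identity $H_\ext(q,p,\xi) = H(q,p,\xi) - \kappa(J(q,p), \xi)$ of~\eqref{eq:compositeHamiltonian}. Differentiating this relation, using the momentum map identity $\kappa(J(q,p),\xi) = \dualPair{p}{\xi \ldot q}$ together with the computation~\eqref{eq:clebschLagrange:hamiltonian:proof:nablaXi}, yields the three formulas $\difpFrac{H_\ext}{q} = \difpFrac{H}{q} + \bar{K}(\xi \ldot p)$, $\difpFrac{H_\ext}{p} = \difpFrac{H}{p} - \xi \ldot q$ and $\difpFrac{H_\ext}{\xi} = \difpFrac{H}{\xi} - J(q,p)$ which are spelled out just before the statement of the theorem. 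Substituting these into the constraint Hamilton equations~\eqref{eq:clebschLagrange:hamiltonianExt} converts them term by term into the Clebsch--Hamilton equations~\eqref{eq:clebschLagrange:hamiltonian}; the primary constraint $\nu = 0$ decouples trivially because $H_\ext$ does not depend on $\nu$, so any solution of~\eqref{eq:clebschLagrange:hamiltonian} lifts uniquely to a solution of~\eqref{eq:clebschLagrange:hamiltonianExt} by setting $\nu(t) \equiv 0$, and conversely.

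For (ii)$\Leftrightarrow$(iii), I would invoke~\cref{prop:clebschLagrange:reductionStages}, which identifies $\check{M} = \CotBundle Q \sslash G$ with $\CotBundle Q_\ext \sslash \TBundle G$, together with~\cref{prop:clebschLagrange:constraintsAsMomentumMapExtConstraint}, which reinterprets the constraints $\nu = 0$ and $\difpFrac{H_\ext}{\xi} = 0$ as the single momentum-map condition $J_\ext = (0, 0)$ (once the $\xi$-independence of $H$ is used). Under the $G$-invariance hypothesis~\eqref{eq:G-inv-H}, the identity~\eqref{eq:compositeHamiltonian} shows that $H_\ext$ restricted to $J_\ext^{-1}(0)$ coincides with the pullback of $H$ and is therefore $\TBundle G$-invariant there. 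The Marsden--Weinstein reduction paradigm then produces a unique $\check{H}:\check{M}\to\R$ satisfying $\pi^*\check{H} = \restr{H}{J^{-1}(0)}$, and identifies the flow of the constrained extended system with the Hamiltonian flow of $\check{H}$; every integral curve of $X_{\check{H}}$ admits a lift solving~\eqref{eq:clebschLagrange:hamiltonianExt}.

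The main obstacle I anticipate is that none of the third step is literally covered by classical theorems. We work with Fr\'echet manifolds whose canonical symplectic form is only weakly non-degenerate, and, more seriously, the $G$-action on $\CotBundle Q$ is generically not free, so $\check{M}$ is a stratified singular space rather than a smooth symplectic manifold. As the paper already acknowledges, the Gotay--Nester--Hinds algorithm cannot be invoked in its standard form under these hypotheses. Following the paper's own strategy, I would verify the equivalence pointwise on the zero level sets at the level of equations of motion, and defer the full analytic justification --- smoothness of $\check{H}$ on each orbit-type stratum, existence of horizontal lifts of integral curves, and weak non-degeneracy of the reduced form --- to the companion paper \parencite{DiezRudolphReduction} devoted to the singular cotangent bundle reduction in infinite dimensions.
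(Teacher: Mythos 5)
Your proposal is correct and follows essentially the same route as the paper: the equivalence (i)\(\Leftrightarrow\)(ii) via the identity~\eqref{eq:compositeHamiltonian} and the resulting formulas for the partial derivatives of \( H_\ext \), and the equivalence with (iii) via the reduction-by-stages statement of \cref{prop:clebschLagrange:reductionStages} together with the observation that \( H_\ext \) becomes \( \TBundle G \)-invariant, and coincides with the pullback of \( H \), once the constraints \( J_\ext = 0 \) are imposed. The caveats you flag about singular, weakly symplectic, infinite-dimensional reduction are exactly the ones the paper acknowledges and likewise defers to \parencite{DiezRudolphReduction}.
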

Since the Clebsch--Hamiltonian \( H: \CotBundle Q \times \LieA{g} \to \R \) is \( \xi \)-independent, it may be viewed as an ordinary Hamiltonian on \( \CotBundle Q \).
We emphasize, however, that the ordinary Hamilton equation on \( \CotBundle Q \) with respect to \( H \) are \emph{not} equivalent to the Clebsch--Hamilton equations on \( \CotBundle Q \times \LieA{g} \).
This explains to some extend the problem one faces when trying to quantize relativistic field theories: in order to make sense of the dynamics, the Hamiltonian and the momentum constraint have to be quantized simultaneously, or, alternatively, the symmetry has to be reduced completely before quantization.


\section{Yang--Mills--Higgs theory}
\label{sec:yangMillsHiggs}


In this section, we will show how the Yang--Mills--Higgs system fits into the general Clebsch--Lagrange variational framework discussed in \cref{sec:clebschLagrange}.

Let \( (M, \eta) \) be a \( 4 \)-dimensional oriented Lorentzian manifold with signature \( (- + +\, +) \).
The underlying geometry of a Yang--Mills-Higgs field, is that of a principal \( G \)-bundle \( P \to M \), where \( G \) is a connected compact Lie group.
In order to establish the notation, we recall the geometric picture and we refer for details to \parencite{RudolphSchmidt2014}.
A connection in \( P \) is a splitting of the tangent bundle \( \TBundle P = \VBundle P \oplus \HBundle P \) into the canonical vertical distribution \( \VBundle P \) and a horizontal distribution \( \HBundle P \).
Recall that \( \VBundle P \) is spanned by the Killing vector fields \( p \mapsto \xi \ldot p \) for \( \xi \in \LieA{g} \).
Equivalently, a connection is given by a \( G \)-equivariant \( 1 \)-form \( A \in \DiffFormSpace^1(P, \LieA{g}) \).
A bosonic matter field is a section \( \varphi \) of the associated vector bundle \( F = P \times_G \FibreBundleModel{F} \), where the typical fiber \( \FibreBundleModel{F} \) carries a \( G \)-representation.
Thus, the space of configurations of Yang--Mills--Higgs theory consists of pairs \( (A, \varphi) \).
It is obviously the product of the infinite-dimensional affine space \( \ConnSpace \) of connections and the space of sections \( \SectionSpaceAbb{F} \) of \( F \).
On \( \ConnSpace \times \SectionSpaceAbb{F} \) we have a left action of the group \( \GauGroup = \sSectionSpace(P \times_G G) \) of local gauge transformations, 
\begin{equation}
\label{LocGTr}
	A \mapsto \AdAction_{\lambda} A - \dif \lambda \, \lambda^{-1},
	\quad
	\varphi \mapsto \lambda \cdot \varphi,
\end{equation}
for \( \lambda \in \GauGroup \).
Next, recall the notion  of the covariant exterior derivative, which we denote by \( \dif_A \).
Let \( \alpha \in \DiffFormSpace^k(M, F) \) and let \( \tilde{\alpha} \in \DiffFormSpace^k(P, \FibreBundleModel{F}) \) be its associated horizontal form\footnote{This one-to-one correspondence will be used throughout the text without further notice.}.
Then, 
\begin{equation}
	\dif_A \tilde{\alpha} = \dif \tilde{\alpha} + A \wedgeldot \tilde{\alpha},
\end{equation}
where \( \wedgeldot: \DiffFormSpace^r(P, \LieA{g}) \times \DiffFormSpace^k(P, \FibreBundleModel{F}) \to \DiffFormSpace^{r+k}(P, \FibreBundleModel{F}) \) is the natural operation obtained by combining the Lie algebra action \( \LieA{g} \times \FibreBundleModel{F} \to \FibreBundleModel{F}, (\xi, f) \mapsto \xi \ldot f \) with the wedge product operation.
An important special case is provided by the curvature \( F_A = \dif_A A \) of the connection \( A \), which is a horizontal \( 2 \)-form of type \( \AdAction \) on \( P \) or, equivalently, a \( 2 \)-form on \( M \) with values in the adjoint bundle \( \AdBundle P = P \times_G \LieA{g} \).
Next, note that the Lie algebra $\GauAlgebra$ of $\GauGroup$ may be naturally identified with \(  \sSectionSpace(\AdBundle P ) \).
The infinitesimal action of $\xi \in \GauAlgebra$ on \( \ConnSpace(P) \) is given by
\begin{equation}\label{eq:generalGauge:infinitisimalGaugeAction}
	\xi \ldot A = \adAction_\xi A - \dif \xi = - \dif_A \xi.
\end{equation}

Now we can formulate the variational principle for the Yang--Mills--Higgs system on \( \ConnSpace \times \SectionSpaceAbb{F} \).
For that purpose we fix an \( \AdAction_G \)-invariant scalar product on \( \LieA{g} \) and a \( G \)-invariant scalar product on \( \FibreBundleModel{F} \).
The Lagrangian for this model is given by the following top differential form on $M$:
\begin{equation}\label{eq:yangMillsHiggs:action}
	\SectionSpaceAbb{L}_\textrm{YMH}(A, \varphi) = \frac{1}{2} \wedgeDual{F_A}{\hodgeStar F_A} + \frac{1}{2} \wedgeDual{\dif_A \varphi}{\hodgeStar \dif_A \varphi} - V(\varphi) \vol_\eta,
\end{equation}
where \( V: F \to \R \) denotes the Higgs potential induced from a smooth \( G \)-invariant function \( \FibreBundleModel{V}: \FibreBundleModel{F} \to \R \).
In order to underline that the Hodge dual is defined in terms of a linear functional on the space of differential forms, we use the convention that the Hodge dual of a vector-valued differential form \( \alpha \in \DiffFormSpace^k(M, F) \) is the \emph{dual-valued}\footnotemark{} differential form \( \hodgeStar \alpha \in \DiffFormSpace^{4-k}(M, F^*) \).
\footnotetext{Although this convention is a bit non-standard, the consistent use of dual-valued forms has the advantage that dual objects are clearly marked, which will be helpful later to identify them as points in the cotangent bundle (as opposed to elements of the tangent bundle).}%
As we are using \( G \)-invariant scalar products on \( \LieA{g} \) and \( \FibreBundleModel{F} \), we have the equivariance property
\begin{equation}\label{equivariance:hodgeStar}
	\hodgeStar \, (\AdAction_g \alpha) = \CoAdAction_g (\hodgeStar \alpha).
\end{equation}
Moreover, for \( \alpha \in \DiffFormSpace^k(M, F) \) and \( \beta \in \DiffFormSpace^{4-k}(M, F^*) \), we denote by \( \wedgeDual{\alpha}{\beta} \) the real-valued top-form that arises from combining the wedge product with the natural pairing \( \dualPairDot: F \times F^* \to \R \).
By~\eqref{equivariance:hodgeStar}, the Lagrangian \( L \) is gauge invariant.

The Euler--Lagrange equations corresponding to the Lagrangian~\eqref{eq:yangMillsHiggs:action}, called the \emphDef{Yang--Mills--Higgs equations}, read as follows:
\begin{subequations}\label{eq:yangMillsHiggs4d}\begin{align}
	\dif_A \hodgeStar F_A + \varphi \diamond \hodgeStar \dif_A \varphi = 0,
	\label{eq:yangMillsHiggs:4d:ym}
	\\
	\dif_A \hodgeStar \dif_A \varphi + V' (\varphi) \vol_\eta = 0,
	\label{eq:yangMillsHiggs:4d:higgs}
\end{align}\end{subequations}
where the derivative \( V' (\varphi) \) of \( V \) at the point \( \varphi \) is viewed as a fiberwise linear functional on \( F \) and the diamond product\footnote{The diamond operator often occurs in the study of Lie--Poisson systems and this is were we borrowed the notation from. Note that in a purely algebraic setting, the diamond product boils down to a map \( \diamond: F \times F^* \to \LieA{g}^* \) dual to the Lie algebra action.
Hence, it is a momentum map for the \( G \)-action on \( \CotBundle F \).
For example, if we consider the action of \( G = \SOGroup(3) \) on \( F = \R^3 \) and identify \( F^* \) with \( \R^3 \), then the diamond product becomes the classical cross product.}
\begin{equation}
	\diamond: \DiffFormSpace^k(M, F) \times \DiffFormSpace^{\dim M-r-k}(M, F^*) \to \DiffFormSpace^{\dim M-r}(M, \CoAdBundle P)
\end{equation}
is defined by
\begin{equation}
	\label{eq:yangMillsHiggs:defDiamond}
	\wedgeDual{\xi}{(\alpha \diamond \beta)}
		= \wedgeDual{(\xi \wedgeldot \alpha)}{\beta} 
		\in \DiffFormSpace^{\dim M}(M) 
\end{equation}
for all \( \xi \in \DiffFormSpace^r(M, \AdBundle P) \).
The diamond product is equivariant with respect to gauge transformations in the sense that
\begin{equation}
	(\lambda \cdot \alpha) \diamond (\lambda \cdot \beta) = \CoAdAction_\lambda (\alpha \diamond \beta)
\end{equation}
holds for all \( \lambda \in \GauGroup(P) \).
Indeed, we have
\begin{equation}\begin{split}
	\wedgeDual{\xi}{((\lambda \cdot \alpha) \diamond (\lambda \cdot \beta))}
		&= \wedgeDual{(\xi \wedgeldot (\lambda \cdot \alpha))}{(\lambda \cdot \beta)}
		\\
		&= \wedgeDual{\lambda \cdot ((\AdAction_\lambda^{-1}\xi) \wedgeldot \alpha)}{(\lambda \cdot \beta)}
		\\
		&= \wedgeDual{\AdAction_\lambda^{-1}\xi}{(\alpha \diamond \beta)}
		\\
		&= \wedgeDual{\xi}{\CoAdAction_\lambda(\alpha \diamond \beta)} \, ,
\end{split}\end{equation}
for all \( \xi \in \DiffFormSpace^r(M, \AdBundle P) \).

In order to write the Yang--Mills equation~\eqref{eq:yangMillsHiggs4d} as a dynamical system we have to single-out a time direction and to split the equations in time and space directions.
This decomposition is standard in the physics literature and pretty straightforward in local coordinates.
A derivation of the following identities using a geometric, coordinate-independent language can be found in \cref{sec:yangMillsHiggs:decomposition}.
Let us assume that the spacetime \( (M, \eta) \) is globally hyperbolic and time-oriented.
Then, there exists a $3$-dimensional manifold $\Sigma$ and  a diffeomorphism \( \iota: \R \times \Sigma \to M \) such that
\begin{equation}
	\iota^* \eta = - \ell(t)^2 \dif t^2 + g(t),
\end{equation}
where \( \ell \) is a smooth time-dependent positive function on \( \Sigma \) and \( g \) is a time-dependent Riemannian metric on \( \Sigma \), see \parencite[Theorem~1.3.10]{BarGinouxEtAl2007}.
The Hodge operator associated to \( g \) is denoted by \( \hodgeStar_g \).
Moreover, let \( \diamond_\Sigma \) denote the diamond product relative to \( \Sigma \).
For every \( t \in \R \), we denote by \( \iota_t: \Sigma \to M \) the induced embedding and by \( \Sigma_t \) the image of \( \Sigma \) under \( \iota_t \).
The submanifold $\Sigma_t$ is a Cauchy hypersurface at $t$. 
In what follows, we assume $\Sigma$ to be compact and without boundary.
This assumption amounts to requiring that the fields satisfy suitable boundary condition at spacial infinity. 
Following \parencite[Section~6A]{GotayIsenbergMarsden2004}, a \emphDef{slicing} of \( P \) over \( \iota \) is a principal \( G \)-bundle \( \vec{\pi}: \vec{P} \to \Sigma \) and a principal bundle isomorphism \( \hat{\iota} \) fitting into the following diagram:
\begin{equationcd}
	\R \times \vec{P} \to[r, "\hat{\iota}"] \to[d, "\id_\R \times \vec{\pi}", swap]
		& P \to[d, "\pi"]
	\\
	\R \times \Sigma \to[r, "\iota"]
		& M.
\end{equationcd}
By \parencite[Corollary~4.9.7]{Husemoller1966}, such a slicing always exists for small times; which suffices for the study of the initial value problem.
In order to simplify the presentation, let us assume that the slicing exists for all \( t \in \R \).

Relative to the splitting \( \hat{\iota}: \R \times \vec{P} \to P \), the objects living on \( P \) decompose into time-dependents objects living on \( \vec{P} \) and objects normal to \( \vec{P} \).
For example, a function \( f: P \to \R \) on \( P \) yields a time-dependent function \( \vec{f}(t) = f \circ \hat{\iota}(t, \cdot) \) on \( \vec{P} \).
In the sequel, we will usually suppress the diffeomorphisms \( \hat{\iota} \) and \( \iota \) in our notation.
Accordingly, a vector field on \( M \) can be written as \( X = X^0 \, \difp_t + \vec{X} \), where \( X^0(t) \in \sFunctionSpace(\Sigma) \) and \( \vec{X}(t) \in \sSectionSpace(\TBundle \Sigma ) \) for every \( t \).
A connection \( A \) in \( P \) decomposes as
\begin{equation}
	A = A_0 \dif t + \vec{A}
\end{equation}
into \( A_0(t) \in \sSectionSpace(\AdBundle \vec{P}) \) and a time-dependent connection \( \vec{A}(t) \) in \( \vec{P} \).
Moreover, the decomposition of the curvature takes the form
\begin{equation}
 	F_A = E \wedge \dif t + B,
\end{equation}
where, as usual, we have introduced the time-dependent color-electric field \( E \defeq \dif_{\vec{A}} \, A_0 - \dot{\vec{A}} \) and the color-magnetic field \( B \defeq F_{\vec{A}} \).
A section \( \varphi \in \sSectionSpace(F) \) is the same as a time-dependent section of \( \vec{F} \) and its covariant differential reads
\begin{equation}
\dif_A \varphi
		= \left(\partial_t^{A_0}\varphi\right) \dif t + \dif_{\vec{A}} \varphi,  	
\end{equation}
where \( \partial_t^{A_0} \varphi \defeq \dot{\varphi} + A_0 \ldot \varphi \) is the covariant time-derivative.
According to \cref{prop:splitting:yangMillsHiggs}, the Yang--Mills--Higgs equations~\eqref{eq:yangMillsHiggs4d} take the following form:
\begin{subequations}\label{eq:yangMillsHiggs3dEvol}\begin{align+}
	\dif_{\vec{A}} (\ell \hodgeStar_g B) - \partial_t^{A_0} (\ell^{-1} \hodgeStar_{g} E) &= - \ell \, \varphi \diamond_\Sigma (\hodgeStar_g \dif_{\vec{A}} \varphi),
	\label{eq:yangMillsHiggs3d:evol:ampere}
	\\
	\dif_{\vec{A}} (\ell^{-1} \hodgeStar_{g} E) &= \ell^{-1} \, \varphi \diamond_\Sigma (\hodgeStar_g \partial_t^{A_0} \varphi),
	\label{eq:yangMillsHiggs3dConstraint}
	\\
	- \partial_t^{A_0} (\ell^{-1} \hodgeStar_g \partial_t^{A_0} \varphi) + \dif_{\vec{A}} (\ell \hodgeStar_g \dif_{\vec{A}} \varphi) &= \ell \, V' (\varphi)\vol_g .
	\label{eq:yangMillsHiggs3d:evol:higgs}
\end{align+}\qedhere\end{subequations}
Note that~\eqref{eq:yangMillsHiggs3d:evol:ampere} is the non-abelian counterpart of Ampere's law and that~\eqref{eq:yangMillsHiggs3dConstraint} is the Gauß constraint.
Also note that \( A_0 \) is not a dynamical variable.


\subsection{Formulation as a Clebsch--Lagrange system}
\label{sec:yangMillls:asClebschLagrange}


We will now show how the Yang--Mills--Higgs equations can be derived from the Clebsch--Lagrange variational principle.
See forthcoming \cref{table:gaugeTheory:comparisionWithGeneralTheory} for a comparison of the general Clebsch--Lagrange theory and its concrete implementation in the Yang--Mills--Higgs case.
After the \( (1 + 3) \)-splitting, the configuration space of the theory is
\begin{equation}
	\SectionSpaceAbb{Q} = \set*{(\vec{A}, \varphi) \in \ConnSpace(\vec{P}) \times \sSectionSpace(\vec{F})}.
\end{equation}
Since \( \SectionSpaceAbb{Q} \) is an affine space, its tangent bundle is trivial with fiber \( \DiffFormSpace^1(\Sigma, \AdBundle \vec{P}) \times \sSectionSpace(\vec{F}) \).
We will denote points in \( \TBundle \SectionSpaceAbb{Q} \) by tuples \( (\vec{A}, \alpha, \varphi, \zeta) \) with \( \alpha \in \DiffFormSpace^1(\Sigma, \AdBundle \vec{P}) \) and \( \zeta \in \sSectionSpace(\vec{F}) \).

A natural choice for the cotangent bundle $\CotBundle \SectionSpaceAbb{Q}$ is the trivial bundle over $\SectionSpaceAbb{Q}$ with fiber 
\begin{equation}
	\DiffFormSpace^2(\Sigma, \CoAdBundle \vec{P}) \times \DiffFormSpace^3(\Sigma, \vec{F}^*)\, .
\end{equation}
We denote elements of this fiber by pairs \( (D, \pi) \).
The natural pairing with \( \TBundle \SectionSpaceAbb{Q} \) is given by integration over \( \Sigma \), 
\begin{equation}
	\dualPair{(D, \pi)}{(\alpha, \zeta)} = \int_\Sigma (\wedgeDual{D}{\alpha} + \wedgeDual{\pi}{\zeta}) \, .
\end{equation}
As in the finite-dimensional case, the cotangent bundle \( \CotBundle \SectionSpaceAbb{Q} \) carries a natural symplectic structure.
Its canonical $1$-form $\theta$ is given by~\eqref{CanForm}.
In terms of the global coordinates \( (\vec{A}, \varphi, D, \pi) \) on \( \CotBundle \SectionSpaceAbb{Q} \) it reads\footnote{
Equivalently, one can view \( \vec{A} \) as a global coordinate function \( \CotBundle \SectionSpaceAbb{Q} \to \DiffFormSpace^1(\Sigma, \AdBundle \vec{P}) \) and \( \diF \) as the exterior differential on $ \CotBundle \SectionSpaceAbb{Q} $.
In this language, \( \diF \vec{A} \in \DiffFormSpace^1(\CotBundle \SectionSpaceAbb{Q}, \DiffFormSpace^1(\Sigma, \AdBundle \vec{P})) \) and then
\begin{equation}
	\theta = \dualPair{D}{\diF \vec{A}} + \dualPair{\pi}{\diF \varphi},
	\quad
	\Omega = \wedgeDual{\diF D}{\diF \vec{A}} + \wedgeDual{\diF \pi}{\diF \varphi}
\end{equation}
are the field theoretic counterparts of the formulae  \( \theta = p_i \dif q^i \) and \( \Omega = \dif p_i \wedge \dif q^i \) in classical mechanics.
}
\begin{equation}\label{eq:cotangentBundleTautologicalOneForm}
	\theta_{\vec{A}, \varphi, D, \pi}(\diF \vec{A}, \diF \varphi, \diF D, \diF \pi) = \dualPair{D}{\diF \vec{A}} + \dualPair{\pi}{\diF \varphi},
\end{equation}
where \( \diF \vec{A} \in \DiffFormSpace^1(\Sigma, \AdBundle \vec{P}) \), \( \diF \varphi \in \DiffFormSpace^0(\Sigma, \vec{F}) \), \( \diF D \in \DiffFormSpace^2(\Sigma, \CoAdBundle \vec{P}) \) and \( \diF \pi \in \DiffFormSpace^0(\Sigma, \vec{F}^*) \) are viewed as tangent vectors on \( \CotBundle \SectionSpaceAbb{Q} \).
The symplectic form \( \Omega \) is given as the exterior differential of \( \theta \), that is,
\begin{equation}\label{eq:yangMillsHiggs:symplecticForm}\begin{split}
	\Omega_{\vec{A}, \varphi, D, \pi}&\left((\diF \vec{A}_1, \diF \varphi_1, \diF D_1, \diF \pi_1), (\diF \vec{A}_2, \diF \varphi_2, \diF D_2, \diF \pi_2)\right) 
		\\
		&= \dualPair{\diF D_1}{\diF \vec{A}_2} - \dualPair{\diF D_2}{\diF \vec{A}_1} + \dualPair{\diF \pi_1}{\diF \varphi_2} - \dualPair{\diF \pi_2}{\diF \varphi_1}.
\end{split}\end{equation}

The group \( \GauGroup(\vec{P}) \) of local gauge transformations acts naturally on \( \SectionSpaceAbb{Q} \).
Since \( A_0 \in \sSectionSpace(\AdBundle \vec{P}) \) can be viewed as an element of \( \GauAlgebra(\vec{P}) \), it is natural to take it as the \( \xi \)-variable of the general theory.
By~\eqref{eq:generalGauge:infinitisimalGaugeAction}, the Killing vector field generated by \( A_0 \) at \( (\vec{A}, \varphi) \) is given by
\begin{equation}
	A_0 \ldot (\vec{A}, \varphi) = (- \dif_{\vec{A}} A_0, A_0 \ldot \varphi).
\end{equation}
Hence, in this case, the effective velocities \( \dot q + \xi \ldot q \) are 
\begin{equation}
	\dot{\vec{A}} - \dif_{\vec{A}} A_0 = - E
	\quad\text{and}\quad
	\dot \varphi + A_0 \ldot \varphi = \partial_t^{A_0} \varphi.
\end{equation}
The calculation~\eqref{eq:splitting:lagrangian} shows that after the \( (1 + 3) \)-decomposition the Lagrangian defined in~\eqref{eq:yangMillsHiggs:action} is of the form \( \SectionSpaceAbb{L}_\textrm{YMH} = \dif t \wedge \SectionSpaceAbb{L}_\Sigma \) with
\begin{equation}
	\SectionSpaceAbb{L}_\Sigma 
		= \frac{1}{2 \ell} \wedgeDual{E}{\hodgeStar_{g} E} - \frac{\ell}{2}\wedgeDual{B}{\hodgeStar_g B} + \frac{1}{2 \ell} \wedgeDual{\partial_t^{A_0} \varphi}{\hodgeStar_g \partial_t^{A_0} \varphi} - \frac{\ell}{2} \wedgeDual{\dif_{\vec{A}} \varphi}{\hodgeStar_g \dif_{\vec{A}} \varphi} - \ell \, V(\varphi) \vol_g .
\end{equation}
This expression shows that the Yang--Mills--Higgs action is of Clebsch--Lagrange form.
To make this precise, we define the Clebsch--Lagrangian \( \SectionSpaceAbb{L}: \TBundle \SectionSpaceAbb{Q} \times \GauAlgebra(\vec{P}) \to \R \) in the coordinates on $\TBundle \SectionSpaceAbb{Q}$ introduced above by
\begin{equation}\label{calL-YM}\begin{multlined}[c][.85\displaywidth]
	\SectionSpaceAbb{L}(\vec{A}, \alpha, \varphi, \zeta, A_0)
		= \int_\Sigma \bigg(\frac{1}{2 \ell} \wedgeDual{\alpha}{\hodgeStar_g \alpha} - \frac{\ell}{2}\wedgeDual{B}{\hodgeStar_g B} 
		\\
		+ \frac{1}{2 \ell} \wedgeDual{\zeta}{\hodgeStar_g \zeta} - \frac{\ell}{2} \wedgeDual{\dif_{\vec{A}} \varphi}{\hodgeStar_g \dif_{\vec{A}} \varphi}  - \ell \, V(\varphi) \vol_g \bigg).
\end{multlined}\end{equation}
Note that \( \SectionSpaceAbb{L} \) does not depend on the Lie algebra variable \( A_0 \).
Then, the Yang--Mills--Higgs action defined by~\eqref{eq:yangMillsHiggs:action} takes the form:
\begin{equation}\begin{split}
	\SectionSpaceAbb{S}[\vec{A}, \varphi, A_0]
		&= \int_0^T \int_\Sigma \SectionMapAbb{L}_\textrm{YMH}(A, \varphi)
		\\
		&= \int_0^T \SectionSpaceAbb{L}(\vec{A}, -E, \varphi, \partial_t^{A_0} \varphi, A_0) \dif t
		\\
		&= \int_0^T \SectionSpaceAbb{L}(\vec{A}, \dot{\vec{A}} - \dif_{\vec{A}} A_0, \varphi, \dot \varphi + A_0 \ldot \varphi, A_0) \dif t.
\end{split}\end{equation}
Comparing with~\eqref{eq:clebschLagrange:action}, we see that the variational principle associated to the Yang--Mills--Higgs action \( \SectionSpaceAbb{S} \) is a Clebsch--Lagrange principle with respect to the Clebsch--Lagrangian \( \SectionSpaceAbb{L} \).

Since the Yang--Mills--Higgs equations arise from varying the action \( \SectionSpaceAbb{S} \), the general theory in form of \cref{prop:clebschLagrange:clebschEulerLagrangeDirect} implies that the Yang--Mills--Higgs equations in its \( (1 + 3) \)-formulation~\eqref{eq:yangMillsHiggs3dEvol} are the Clebsch--Euler--Lagrange equations associated to \( \SectionSpaceAbb{L} \).
Let us verify this directly.
Since \( \SectionSpaceAbb{L} \) does not depend on \( A_0 \in \GauAlgebra(\vec{P}) \), for the remainder of this subsection we may view it as a function on \( \TBundle \SectionSpaceAbb{Q} \).
First, we have to calculate 
\begin{equation}
	\left(\difpFrac{\SectionSpaceAbb{L}}{\vec {A}}, \difpFrac{\SectionSpaceAbb{L}}{\varphi} \right) : 
	\TBundle \SectionSpaceAbb{Q} \to \R
\end{equation}
and to realize these functionals as elements of $\CotBundle \SectionSpaceAbb{Q}$.
As \( \diF F_A = \dif_A \diF A \) and \( \diF (\dif_A \beta) = \diF A \wedgeldot \beta + \dif_A \diF \beta \) for a vector-valued \( k \)-form \( \beta \), we obtain
\begin{align}
	\difpFrac{\SectionSpaceAbb{L}}{\vec{A}}
		&= - \dif_{\vec{A}} \left(\ell \, \hodgeStar_g B \right) - \ell \, \varphi \diamond_\Sigma \hodgeStar_g \dif_{\vec{A}} \varphi
		\label{eq:yangMills:difLA}
		\\
	\intertext{and}
	\difpFrac{\SectionSpaceAbb{L}}{\varphi}
		&= \dif_{\vec{A}} (\ell \hodgeStar_g \dif_{\vec{A}} \varphi) - \ell \, V'(\varphi) \vol_g \,.
		\label{eq:yangMills:difLPhi}
\end{align}
Next, the fiber derivative of \( \SectionSpaceAbb{L} \), viewed as a mapping $ \TBundle \SectionSpaceAbb{Q}  \to \CotBundle \SectionSpaceAbb{Q} $, is given by
\begin{equation}
	\left(\difpFrac{\SectionSpaceAbb{L}}{\alpha}, \difpFrac{\SectionSpaceAbb{L}}{\zeta} \right) (\vec{A}, \alpha, \varphi, \zeta)
		= \left(\ell^{-1} \hodgeStar_g \alpha, \ell^{-1} \hodgeStar_g \zeta \right).
		\label{eq:yangMills:difLFibre}
\end{equation}
We note, in particular, that \( \SectionSpaceAbb{L} \) is regular.
Evaluating~\eqref{eq:yangMills:difLFibre} at \( (\vec A, \alpha = -E, \varphi, \zeta = \partial_t^{A_0} \varphi )\), we see that the Clebsch--Euler--Lagrange equations~\eqref{eq:clebschLagrange:clebschEulerLagrangeDirect} take the following form here:
\begin{align}
	\label{CL-YM-1}
	\difFrac{}{t}(\ell^{-1} \hodgeStar_{g} E) + A_0 \ldot (\ell^{-1} \hodgeStar_{g} E) - \dif_{\vec{A}} (\ell \hodgeStar_g B) - \ell \, \varphi \diamond_\Sigma \hodgeStar_g \dif_{\vec{A}} \varphi = 0,
	\\
	\label{CL-YM-2}
	\difFrac{}{t}(\ell^{-1} \hodgeStar_g \partial_t^{A_0} \varphi) + A_0 \ldot (\ell^{-1} \hodgeStar_g \partial_t^{A_0} \varphi) - \dif_{\vec{A}} (\ell \hodgeStar_g \dif_{\vec{A}} \varphi) + \ell \, V'(\varphi) \vol_g = 0 \,.
\end{align}
These equations clearly coincide with the Yang--Mills--Higgs equations~\eqref{eq:yangMillsHiggs3d:evol:ampere} and~\eqref{eq:yangMillsHiggs3d:evol:higgs}. 

Now, let us study the momentum map constraint~\eqref{eq:clebschLagrange:hamiltonian:constraint}. 
As usual, the action on the configuration space lifts to the cotangent bundle. 
In the present setting, the \( \GauGroup(\vec{P}) \)-action on the dual variables reads
\begin{equation}
	\lambda \cdot (D, \pi) = (\CoAdAction_\lambda D, \lambda \cdot \pi).
\end{equation}
Thus, the fundamental vector field generated by \( \xi \in \GauAlgebra(\vec{P}) \) on \( \CotBundle \SectionSpaceAbb{Q} \) is given by
\begin{equation}
	\xi \ldot (\vec{A}, D, \varphi, \pi)
		= (- \dif_{\vec{A}} \xi, \CoadAction_\xi D, \xi \ldot \varphi, \xi \ldot \pi).
\end{equation}
Contracting with the canonical $1$-form \( \theta \) yields
\begin{equation}\begin{split}
	\theta (\xi \ldot (\vec{A}, D, \varphi, \pi))
		&= \int_\Sigma \bigl(
			- \wedgeDual{D}{\dif_{\vec{A}} \xi}
			+ \wedgeDual{\pi}{(\xi \ldot \varphi)} \bigr)
		\\
		&= \int_\Sigma \bigl(
			\wedgeDual{\dif_{\vec{A}} D}{\xi}
			+ \wedgeDual{(\varphi \diamond_\Sigma \pi)}{\xi} \bigr),
\end{split}\end{equation}
from which we read-off the momentum map
\begin{equation}\label{eq:yangMills:momentumMap}
	\SectionMapAbb{J}(\vec{A}, D, \varphi, \pi) = \dif_{\vec{A}} D + \varphi \diamond_\Sigma \pi,
\end{equation}
which takes values in \( \GauAlgebra(\vec{P})^* \isomorph \DiffFormSpace^3(\Sigma, \CoAdBundle \vec{P}) \).
As $\SectionSpaceAbb{L}$ does not explicitly depend on $A_0$, the momentum map constraint~\eqref{eq:clebschLagrange:hamiltonian:constraint} takes the form 
\begin{equation}
	\label{Gauss-Constr}
	\SectionMapAbb{J}(\vec{A}, D, \varphi, \pi) = 0.
\end{equation}
By~\eqref{eq:yangMills:difLFibre} the canonically conjugate momenta are
\begin{equation}\label{eq:yangMills:defConjugateMomenta}
	 (D, \pi)
	 	= \left(- \ell^{-1} \hodgeStar_g E, \ell^{-1} \hodgeStar_g \partial_t^{A_0} \varphi \right) \, .
\end{equation}
Inserting this expression into~\eqref{Gauss-Constr}, yields the Gauß constraint~\eqref{eq:yangMillsHiggs3dConstraint}.
To summarize, we have shown the following.
\begin{thm}
	The Yang--Mills--Higgs action is of Clebsch--Lagrange form.
	Moreover, the Yang--Mills--Higgs equations~\eqref{eq:yangMillsHiggs3dEvol} are equivalent to the Clebsch--Lagrange equations on \( \TBundle \SectionSpaceAbb{Q} \times \GauAlgebra(\vec{P}) \) associated to the Clebsch--Lagrangian \( \SectionMapAbb{L} \) defined in~\eqref{calL-YM}.
\end{thm}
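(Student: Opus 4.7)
The proof is essentially a matter of organizing the observations made in the preceding discussion into a coherent derivation. My plan is to proceed in three steps, matching the two assertions of the theorem.

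First, I would verify that the action is of Clebsch--Lagrange form. This amounts to identifying the general data $(Q, G, L)$ of \cref{sec:clebschLagrange} in the present situation: the configuration space is $\SectionSpaceAbb{Q} = \ConnSpace(\vec P) \times \sSectionSpace(\vec F)$, the symmetry group is $\GauGroup(\vec P)$ acting via~\eqref{LocGTr}, and the Lie algebra variable is $\xi = A_0 \in \GauAlgebra(\vec P)$. The key computation is that the fundamental vector field of $A_0$ reads $A_0 \ldot (\vec A, \varphi) = (-\dif_{\vec A} A_0, A_0 \ldot \varphi)$, so that the effective velocities $\dot q + \xi \ldot q$ become exactly $-E$ and $\partial_t^{A_0} \varphi$. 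Combined with the $(1+3)$-decomposition of $\SectionSpaceAbb{L}_\textrm{YMH}$ carried out in \cref{sec:yangMillsHiggs:decomposition}, which gives $\SectionSpaceAbb{L}_\textrm{YMH} = \dif t \wedge \SectionSpaceAbb{L}_\Sigma$ with $\SectionSpaceAbb{L}_\Sigma$ of precisely the form obtained from $\SectionMapAbb{L}$ in~\eqref{calL-YM} by substituting $\alpha = -E$ and $\zeta = \partial_t^{A_0}\varphi$, this establishes $\SectionSpaceAbb{S}[\vec A, \varphi, A_0] = \int_0^T \SectionMapAbb{L}(\vec A, \dot{\vec A} - \dif_{\vec A} A_0, \varphi, \dot\varphi + A_0 \ldot \varphi, A_0)\, \dif t$, which is the Clebsch--Lagrange form~\eqref{eq:clebschLagrange:action}.

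For the second assertion, I would invoke \cref{prop:clebschLagrange:clebschEulerLagrangeDirect} to conclude that the equations of motion derived from the Clebsch--Lagrange principle for $\SectionMapAbb{L}$ coincide with~\eqref{eq:clebschLagrange:clebschEulerLagrangeDirect}. Using the flat connection on the affine space $\SectionSpaceAbb{Q}$ (so that the covariant time derivative reduces to $\difFrac{}{t}$), I then compute the partial derivatives of $\SectionMapAbb{L}$: the derivatives $\difpFrac{\SectionMapAbb{L}}{\vec A}$ and $\difpFrac{\SectionMapAbb{L}}{\varphi}$ are given by~\eqref{eq:yangMills:difLA}--\eqref{eq:yangMills:difLPhi} via integration by parts using $\diF F_A = \dif_A \diF A$, while the fiber derivatives $\difpFrac{\SectionMapAbb{L}}{\alpha}$ and $\difpFrac{\SectionMapAbb{L}}{\zeta}$ are as in~\eqref{eq:yangMills:difLFibre}. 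Substituting $\alpha = -E$ and $\zeta = \partial_t^{A_0}\varphi$ into the evolution equation~\eqref{eq:clebschLagrange:clebschEulerLagrangeDirect:evol} yields \eqref{CL-YM-1}--\eqref{CL-YM-2}, which one checks coincide with~\eqref{eq:yangMillsHiggs3d:evol:ampere} and~\eqref{eq:yangMillsHiggs3d:evol:higgs}.

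It remains to account for the Gauß constraint~\eqref{eq:yangMillsHiggs3dConstraint}. Since $\SectionMapAbb{L}$ does not depend explicitly on $A_0$, the constraint~\eqref{eq:clebschLagrange:clebschEulerLagrangeDirect:constraint} reduces, via \cref{prop:clebschLagrange:legendre}, to the momentum map constraint $\SectionMapAbb{J}(\vec A, D, \varphi, \pi) = 0$ for the lifted $\GauGroup(\vec P)$-action on $\CotBundle \SectionSpaceAbb{Q}$. A short computation using the canonical $1$-form~\eqref{eq:cotangentBundleTautologicalOneForm} and integration by parts identifies $\SectionMapAbb{J}$ as in~\eqref{eq:yangMills:momentumMap}, and inserting the conjugate momenta~\eqref{eq:yangMills:defConjugateMomenta} then recovers precisely~\eqref{eq:yangMillsHiggs3dConstraint}. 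The main obstacle is really a bookkeeping one: keeping track of signs, Hodge duals, and the $\ell$-factors coming from the lapse function throughout the integration-by-parts manipulations, and consistently handling the dual-valued form convention so that the diamond product appearing in the Euler--Lagrange equations matches the one in the four-dimensional equations~\eqref{eq:yangMillsHiggs4d}.
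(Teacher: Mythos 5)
Your proposal is correct and follows essentially the same route as the paper, which states this theorem as a summary of the derivation preceding it: identify the Clebsch--Lagrange data $(\SectionSpaceAbb{Q}, \GauGroup(\vec P), A_0)$ and use the $(1+3)$-decomposition~\eqref{eq:splitting:lagrangian} to exhibit the action in the form~\eqref{eq:clebschLagrange:action}, then obtain the evolution equations from~\eqref{eq:yangMills:difLA}--\eqref{eq:yangMills:difLFibre} evaluated at $(\vec A, -E, \varphi, \partial_t^{A_0}\varphi)$ and the Gauß constraint from the momentum map~\eqref{eq:yangMills:momentumMap} together with the conjugate momenta~\eqref{eq:yangMills:defConjugateMomenta}. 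The only cosmetic difference is that the paper both invokes \cref{prop:clebschLagrange:clebschEulerLagrangeDirect} and spells out the direct verification (noting regularity of $\SectionSpaceAbb{L}$ along the way), exactly as you outline.
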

\begin{table}[tbp]
	\scriptsize
	\centering
	\begin{tabular}{l l l}
		\toprule
			 &
			Clebsch--Lagrange &
			Yang--Mills--Higgs
			\\
		\midrule
			Configuration variables &
			\( q \) &
			\( \vec{A}, \varphi \)
			\\
			Symmetry variables &
			\( \xi \in \LieA{g} \)&
			\( A_0 \in \sSectionSpace(\AdBundle \vec{P}) \)
			\\
			Effective velocities & 
			\( \dot q + \xi \ldot q \) & 
			\( -E, \difp_t^{A_0} \varphi \)
			\\
			Conjugate momenta &
			\( p \) &
			\( D, \pi \)
			\\
			\addlinespace
			\addlinespace
			\parbox[c]{3.6cm}{Clebsch--Euler--Lagrange \\ equation} &
			\( \difFrac{}{t} \left( \difpFrac{L}{\dot{q}} \right) + \xi \ldot \difpFrac{L}{\dot{q}} = \difpFrac{L}{q} \) &
			\(
			\begin{gathered}
				\dif_{\vec{A}} (\ell \hodgeStar_g B) - \partial_t^{A_0} (\ell^{-1} \hodgeStar_{g} E) = - \ell \, \varphi \diamond_\Sigma (\hodgeStar_g \dif_{\vec{A}} \varphi),
				\\
				- \partial_t^{A_0} (\ell^{-1} \hodgeStar_g \partial_t^{A_0} \varphi) + \dif_{\vec{A}} (\ell \hodgeStar_g \dif_{\vec{A}} \varphi) = \ell \, V' (\varphi)\vol_g
			\end{gathered}
			\)
			\\
			\addlinespace
			\addlinespace
			Momentum map constraint &
			\( J\left(q, \difpFrac{L}{\dot q}\right) =  - \difpFrac{L}{\xi} \) & 
			\( \dif_{\vec{A}} D + \varphi \diamond_\Sigma \pi = 0 \)
			\\
			\addlinespace
			\addlinespace
			\parbox[c]{3.6cm}{\vspace*{-2ex}Clebsch--Hamilton \\ equations} &
			\(
			\begin{aligned}
				\difpFrac{H}{q}(q, p, \xi) &= - (\dot p + \xi \ldot p),
				\\
				\difpFrac{H}{p}(q, p, \xi) &= \dot q + \xi \ldot q
			\end{aligned}
			\)
			&
			\(
			\begin{aligned}
				\dif_{\vec{A}} H - \partial_t^{A_0} D &= - \varphi \diamond_\Sigma \psi,
				\\
				\partial_t^{A_0}\pi + \dif_{\vec{A}} \psi &= \ell \, V' (\varphi)\vol_g
			\end{aligned}
			\)
			\\
		\bottomrule
	\end{tabular}
	\caption{Comparison of the general Clebsch--Lagrange theory and the Yang--Mills--Higgs system.}
	\label{table:gaugeTheory:comparisionWithGeneralTheory}
\end{table}


\subsection{Hamiltonian picture}


It is straightforward to spell out the results of \cref{Ham-Pic-gen} for the model under consideration.
Therefore, we limit ourselves to the main points. 
By definition, the Clebsch--Hamiltonian \( \SectionSpaceAbb{H}: \CotBundle \SectionSpaceAbb{Q} \times \GauAlgebra(\vec{P}) \to \R \) is the Legendre transform of \( \SectionSpaceAbb{L} \). 
That is, 
\begin{equation}
\label{calH}
	\SectionSpaceAbb{H}(\vec{A}, D, \varphi, \pi, A_0)
		= \dualPair{D}{\alpha} + \dualPair{\pi}{\zeta} - \SectionSpaceAbb{L}(\vec{A}, \alpha, \varphi, \zeta, A_0),
\end{equation}
where \( \alpha \in \sSectionSpace(\AdBundle(\vec{P})) \) and \( \zeta \in \sSectionSpace(\vec{F}) \) are considered as  functions of \( (\vec{A}, D, \varphi, \pi) \) via the condition (\cf,~\eqref{eq:yangMills:difLFibre})
\begin{equation}
	(D, \pi) 
		= \left(\difpFrac{\SectionSpaceAbb{L}}{\alpha}, \difpFrac{\SectionSpaceAbb{L}}{\zeta} \right) (\vec{A}, \alpha, \varphi, \zeta)
		= \left(\ell^{-1} \hodgeStar_g \alpha, \ell^{-1} \hodgeStar_g \zeta \right).
\end{equation}
Since \( \SectionMapAbb{L} \) does not depend on \( A_0 \), the Clebsch--Hamiltonian \( \SectionMapAbb{H} \) is independent of \( A_0 \), too.
By~\eqref{calL-YM}, we have
\begin{equation}\label{eq:yangMills:hamiltonian}\begin{split}
	\SectionSpaceAbb{H}(\vec{A}, D, \varphi, \pi, A_0)
		&= \dualPair{D}{\ell \hodgeStar_g D} + \dualPair{\pi}{\ell \hodgeStar_g \pi} - \SectionSpaceAbb{L}(\vec{A}, \ell \hodgeStar_g D, \varphi, \ell \hodgeStar_g \pi, A_0),
		\\
		& 
		\!\begin{multlined}[b][.55\displaywidth]
     		= \int_\Sigma \bigg( \ell \wedgeDual{D}{\hodgeStar_g D}  + \ell \wedgeDual{\pi}{\hodgeStar_g \pi} - \frac{\ell}{2} \wedgeDual{D}{\hodgeStar_g D} + \frac{\ell}{2}\wedgeDual{B}{\hodgeStar_g B} 
		\\
		- \frac{\ell}{2} \wedgeDual{\pi}{\hodgeStar_g \pi} + \frac{\ell}{2} \wedgeDual{\dif_{\vec{A}} \varphi}{\hodgeStar_g \dif_{\vec{A}} \varphi} + \ell \, V(\varphi) \vol_g \bigg)
     		\end{multlined}
		\\
		&
		\!\begin{multlined}[b][.71\displaywidth]
		= \int_\Sigma \frac{\ell}{2} \bigg( \wedgeDual{D}{\hodgeStar_g D} +  \wedgeDual{B}{\hodgeStar_g B}  
		\\
		+ \wedgeDual{\pi}{\hodgeStar_g \pi} + \wedgeDual{\dif_{\vec{A}} \varphi}{\hodgeStar_g \dif_{\vec{A}} \varphi} + 2 \, V(\varphi) \vol_g \bigg) \, .
		\end{multlined}
\end{split}\end{equation}
In summary, the general theory from \cref{Ham-Pic-gen} yields the following.
\begin{thm}
	The Clebsch--Hamilton equations for \( \SectionMapAbb{H} \) are given by the following system of equations on \( \CotBundle \SectionSpaceAbb{Q} \times \GauAlgebra(\vec{P}) \):
	\begin{equation}\label{eq:yangMills:asClebschHamilton}\begin{split}
		\partial_t^{A_0} D &= - \dif_{\vec{A}} (\ell \hodgeStar_g B) - \ell \varphi \diamond \hodgeStar_g \dif_{\vec{A}} \varphi,
		\\
		\partial_t \vec{A} &= \dif_{\vec{A}} A_0 + \ell \hodgeStar_g D,
		\\
		\partial_t^{A_0}\pi &= \dif_{\vec{A}} (\ell \hodgeStar_g \dif_{\vec{A}} \varphi) - \ell \, V'(\varphi) \vol_g,
		\\
		\partial_t^{A_0}\varphi &= \ell \hodgeStar_g \pi,
		\\
		\dif_{\vec{A}} D &+ \varphi \diamond_\Sigma \pi = 0 \, .
	\end{split}\end{equation}
	These equations are equivalent to the Yang--Mills--Higgs equations~\eqref{eq:yangMillsHiggs3dEvol}.
\end{thm}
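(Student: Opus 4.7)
The plan is to specialize the general equivalence between Clebsch--Lagrange and Clebsch--Hamilton systems (\cref{prop:clebschLagrange:legendre}) to the Yang--Mills--Higgs setting, using the ingredients already assembled in \cref{sec:yangMillls:asClebschLagrange}. Since we have already shown there that the Yang--Mills--Higgs action is a Clebsch--Lagrange action with Clebsch--Lagrangian~\eqref{calL-YM} and that this Clebsch--Lagrangian is regular by~\eqref{eq:yangMills:difLFibre}, the general theorem immediately applies: the Clebsch--Euler--Lagrange equations~\eqref{CL-YM-1}--\eqref{CL-YM-2} together with the momentum map constraint~\eqref{Gauss-Constr} are equivalent to the Clebsch--Hamilton equations for the Clebsch--Hamiltonian \( \SectionMapAbb{H} \) defined in~\eqref{calH}. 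Since the former coincide with the Yang--Mills--Higgs equations~\eqref{eq:yangMillsHiggs3dEvol} under the identification~\eqref{eq:yangMills:defConjugateMomenta}, the asserted equivalence follows once the Clebsch--Hamilton equations are brought into the explicit form~\eqref{eq:yangMills:asClebschHamilton}.

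The remaining task is therefore to compute the partial derivatives of \( \SectionMapAbb{H} \) from the closed-form expression~\eqref{eq:yangMills:hamiltonian} and to insert them into the general Clebsch--Hamilton equations~\eqref{eq:clebschLagrange:hamiltonian}. First I would read off the fiber derivatives
\begin{equation}
	\difpFrac{\SectionMapAbb{H}}{D} = \ell \hodgeStar_g D, \qquad \difpFrac{\SectionMapAbb{H}}{\pi} = \ell \hodgeStar_g \pi,
\end{equation}
which on comparison with the second equation in~\eqref{eq:clebschLagrange:hamiltonian:hamiltonian} give the second and fourth equations of~\eqref{eq:yangMills:asClebschHamilton} (using \( A_0 \ldot \vec{A} = -\dif_{\vec{A}} A_0 \) and \( A_0 \ldot \varphi \) for the two configuration components). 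Next, I would compute \( \difpFrac{\SectionMapAbb{H}}{\vec{A}} \) and \( \difpFrac{\SectionMapAbb{H}}{\varphi} \); the derivative with respect to \( \vec{A} \) picks up the two terms \( \ell \wedgeDual{B}{\hodgeStar_g B} \) and \( \ell \wedgeDual{\dif_{\vec{A}} \varphi}{\hodgeStar_g \dif_{\vec{A}} \varphi} \), which after using \( \diF B = \dif_{\vec{A}} \diF \vec{A} \), integration by parts, and the defining relation~\eqref{eq:yangMillsHiggs:defDiamond} of the diamond product yields
\begin{equation}
	\difpFrac{\SectionMapAbb{H}}{\vec{A}} = \dif_{\vec{A}}(\ell \hodgeStar_g B) + \ell\, \varphi \diamond_\Sigma \hodgeStar_g \dif_{\vec{A}} \varphi,
\end{equation}
while the \( \varphi \)-derivative gives \( \difpFrac{\SectionMapAbb{H}}{\varphi} = -\dif_{\vec{A}}(\ell \hodgeStar_g \dif_{\vec{A}} \varphi) + \ell V'(\varphi) \vol_g \). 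Substituting into the first equation of~\eqref{eq:clebschLagrange:hamiltonian:hamiltonian} with \( \xi = A_0 \) produces the first and third equations of~\eqref{eq:yangMills:asClebschHamilton}. Finally, the momentum map constraint takes its form~\eqref{Gauss-Constr} because \( \SectionMapAbb{L} \) (and hence \( \SectionMapAbb{H} \)) does not depend on \( A_0 \), so \( \difpFrac{\SectionMapAbb{H}}{A_0} = 0 \); combined with~\eqref{eq:yangMills:momentumMap}, this yields the fifth equation in~\eqref{eq:yangMills:asClebschHamilton}.

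The equivalence with the Yang--Mills--Higgs equations~\eqref{eq:yangMillsHiggs3dEvol} then follows either by direct comparison after inserting the Legendre relation~\eqref{eq:yangMills:defConjugateMomenta} into~\eqref{eq:yangMills:asClebschHamilton}, or more conceptually by invoking \cref{prop:clebschLagrange:legendre}, which guarantees that the Clebsch--Hamilton system is equivalent to the already-verified Clebsch--Euler--Lagrange system. The main obstacle is purely computational bookkeeping: one must keep track of the linear connection used in the decomposition~\eqref{eq:clebschLagrange:hamilton:decompCotBundle}, the covariant time derivative \( \partial_t^{A_0} \) versus the naive \( \partial_t \) (the difference being exactly the \( \bar{K}(\xi \ldot p) \) term in~\eqref{eq:clebschLagrange:hamiltonian:hamiltonian}), and the various equivariance and integration-by-parts identities for \( \hodgeStar_g \) and the diamond product. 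None of these steps present a real difficulty once the framework of \cref{sec:clebschLagrange} and the formulas of \cref{sec:yangMillls:asClebschLagrange} are in place.
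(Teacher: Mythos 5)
Your proposal is correct and follows essentially the same route as the paper: it first invokes the general Clebsch--Legendre equivalence of \cref{prop:clebschLagrange:legendre} (justified by the regularity visible in~\eqref{eq:yangMills:difLFibre}), and then gives the same direct verification by computing \( \difpFrac{\SectionMapAbb{H}}{\vec{A}} \), \( \difpFrac{\SectionMapAbb{H}}{\varphi} \), \( \difpFrac{\SectionMapAbb{H}}{D} \), \( \difpFrac{\SectionMapAbb{H}}{\pi} \) and \( \difpFrac{\SectionMapAbb{H}}{A_0} \) from~\eqref{eq:yangMills:hamiltonian}, inserting them into~\eqref{eq:clebschLagrange:hamiltonian}, and recovering~\eqref{eq:yangMillsHiggs3dEvol} via the conjugate momenta~\eqref{eq:yangMills:defConjugateMomenta}. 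No gaps; the computed derivatives and signs agree with the paper's.
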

\begin{proof}
	Since the Lagrangian \( \SectionMapAbb{L} \) is regular, the claim follows directly from the general equivalence of the Clebsch--Lagrange and Clebsch--Hamilton equations under the Legendre transformation, see \cref{prop:clebschLagrange:legendre}.
	For completeness, let us also give a direct proof, that is, let us write down the Clebsch--Hamilton equations~\eqref{eq:clebschLagrange:hamiltonian} for the case under consideration. 
	From~\eqref{eq:yangMills:hamiltonian}, we obtain
	\begin{equation}
		\difpFrac{\SectionMapAbb{H}}{\vec{A}} = \dif_{\vec{A}} (\ell \hodgeStar_g B) + l \varphi \diamond \hodgeStar_g \dif_{\vec{A}} \varphi
	\end{equation}
	and
	\begin{equation}
		\difpFrac{\SectionMapAbb{H}}{\varphi} = - \dif_{\vec{A}} (\ell \hodgeStar_g \dif_{\vec{A}} \varphi) + \ell \, V'(\varphi) \vol_g.
	\end{equation}
	Moreover, we obviously have
	\begin{equation}
		\difpFrac{\SectionMapAbb{H}}{D} = \ell \hodgeStar_g D,
		\qquad
		\difpFrac{\SectionMapAbb{H}}{\pi} = \ell \hodgeStar_g \pi,
		\qquad
		\difpFrac{\SectionMapAbb{H}}{A_0} = 0.
	\end{equation}
	As we have already seen, the constraint~\eqref{constr-H-ext} reads
	\begin{equation}
		\dif_{\vec{A}} D + \varphi \diamond_\Sigma \pi =0
	\end{equation}
	and corresponds to the Gauß constraint~\eqref{Gauss-Constr}.
	Thus the Clebsch--Hamilton equations~\eqref{eq:clebschLagrange:hamiltonian} yield~\eqref{eq:yangMills:asClebschHamilton}.
	Using expression~\eqref{eq:yangMills:defConjugateMomenta} for the canonical momenta, it is straightforward to see that~\eqref{eq:yangMills:asClebschHamilton} recovers the Yang--Mills--Higgs equations~\eqref{eq:yangMillsHiggs3dEvol}, indeed.
\end{proof}

\begin{remark}
	Note that the metric \( g \) on \( \Sigma \) depends on time.
	Accordingly, the Hamiltonian is explicitly time-dependent and so it is not a constant of motion for curved spacetimes.
\end{remark}


\subsection{Constraints and reduction by stages}


Recall that the Clebsch--Hamilton equations can also be obtained by passing through an extended phase space and implementing the constraints using the Dirac--Bergmann algorithm in the formulation of \parencite{GotayNesterHinds1978}.
Here, we apply the general theory of \cref{Ham-Pic-gen} to the Yang--Mills--Higgs case.

By definition, the extended configuration space is \( \SectionSpaceAbb{Q}_\ext \defeq \SectionSpaceAbb{Q} \times \GauAlgebra(\vec{P}) \).
Thus, the  natural extended phase space induced by Hodge duality is
\begin{equation}
 	\CotBundle \SectionSpaceAbb{Q}_\ext \defeq \CotBundle \SectionSpaceAbb{Q} \times \GauAlgebra(\vec{P}) \times \DiffFormSpace^3(\Sigma, \CoAdBundle \vec{P}).
\end{equation}
We will denote points in \( \CotBundle \SectionSpaceAbb{Q}_\ext \) by tuples \( (\vec{A}, D, \varphi, \pi, A_0, \nu) \).
As in the general theory, associated with the Clebsch--Lagrangian \( \SectionSpaceAbb{L}: \TBundle \SectionSpaceAbb{Q} \times \GauAlgebra(\vec{P}) \to \R \) as in~\eqref{calL-YM}, we have a degenerate Lagrangian $\SectionSpaceAbb{L}_\ext$ given by~\eqref{L-ext} and we can, thus, perform the Dirac-Bergmann constraint analysis as in \cref{Ham-Pic-gen}. 
As in the general theory, the primary constraint is \( \nu = 0 \).
Let \( \SectionSpaceAbb{M}_1 \isomorph \CotBundle \SectionSpaceAbb{Q} \times \GauAlgebra(\vec{P}) \) be the subspace cut out by this constraint.
Let $ \SectionSpaceAbb{H}_\ext: \SectionSpaceAbb{M}_1 \to \R $ be the Hamiltonian corresponding to $\SectionSpaceAbb{L}_\ext$.
Using~\eqref{eq:compositeHamiltonian} and~\eqref{eq:yangMills:momentumMap}, from~\eqref{H-ext} we read off 
\begin{equation}
	\label{calHext}
	\SectionSpaceAbb{H}_\ext (\vec{A}, D, \varphi, \pi, A_0) = \SectionSpaceAbb{H}(\vec{A}, D, \varphi, \pi, A_0) - \dualPair{\dif_{\vec{A}} D + \varphi \diamond_\Sigma \pi}{A_0}.
\end{equation}
The next step in the constraint analysis leads to the constraint~\eqref{constr-H-ext} and, thus, to the system of Hamilton equations given by~\eqref{eq:clebschLagrange:hamiltonianExt}.
We have already seen that the constraint~\eqref{constr-H-ext} coincides with the Gauß constraint.
As $\SectionSpaceAbb{H}$ is gauge invariant and does not explicitly depend on $A_0$, \cref{prop:G-inv-H} implies that the Gauß constraint is 
preserved in time.
Consequently, the constraint analysis terminates at that stage. 
The Hamiltonian \( \SectionMapAbb{H}_\ext \) transforms under local gauge transformations as follows:
\begin{equation}\begin{split}
	\SectionMapAbb{H}_\ext&(\AdAction_\lambda \vec{A} + \lambda \dif \lambda^{-1}, \CoAdAction_\lambda D, \lambda \cdot \varphi, \lambda \cdot \pi, \AdAction_\lambda A_0 - \xi)
		\\
		&= \SectionMapAbb{H}_\ext(\vec{A}, D, \varphi, \pi, A_0)
		+ \int_\Sigma \dualPair{(\dif_{\vec{A}} D + \varphi \diamond_\Sigma \pi)}{\AdAction_{\lambda}^{-1} \xi}.
\end{split}\end{equation}
In particular, \( \SectionMapAbb{H}_\ext \) is not gauge invariant unless the Gauß constraint is imposed.
Using the terminology common in physics, one could say that the extended Hamiltonian is invariant \enquote{on-shell}.

According to the discussion of the general reduction theory in \cref{sec:clebschLagrange:reductionStages}, the action of \( \TBundle \GauGroup(\vec{P}) \) on \( \CotBundle \SectionSpaceAbb{Q}_\ext \) plays an important role.
We will now show that, in the present context, this action is derived from the action of \( \GauGroup \) on \( \ConnSpace \times \SectionSpaceAbb{F} \).
Recall from~\eqref{LocGTr} the action of the group of local gauge transformations \( \GauGroup \) on the space of connections \( \ConnSpace \),
\begin{equation}
	\lambda \cdot A = \AdAction_\lambda A - \difLog^R \lambda,
\end{equation}
where \( \lambda \) is a \( G \)-equivariant map \( P \to G \) and \( \difLog^R \lambda \defeq \dif \lambda \, \lambda^{-1} \in \DiffFormSpace^1(P, \LieA{g}) \) denotes the right logarithmic derivative.
Via the $(1+3)$-decomposition, we may equivalently think of a gauge transformation \( \lambda \in \GauGroup \) as a time-dependent gauge transformation \( \lambda(t) \) in \( \vec{P} \).
Evaluating $\lambda \cdot A $ on a vector field $Y$ on $P$, decomposed as \( Y = Y^0 \difp_t + \vec{Y} \), we obtain
\begin{equation}\begin{split}
	(\lambda \cdot A) (Y^0 \difp_t + \vec{Y})
		&= Y^0 \AdAction_\lambda A_0 - Y^0 \difLog^R_t \lambda + \AdAction_\lambda \vec{A}(\vec{Y}) - \difLog^R \lambda (\vec{Y})
		\\
		&= Y^0 (\AdAction_\lambda A_0 - \difLog^R_t \lambda) + (\lambda \cdot \vec{A})(\vec{Y}),
\end{split}\end{equation}
where \( \difLog^R_t \lambda \in \sSectionSpace(\AdBundle \vec{P}) \) denotes the right logarithmic derivative of the path \( t \mapsto \lambda(t) \in \GauGroup(\vec{P}) \).
That is,
\begin{equation}
	\difLog^R_t \lambda
		= \difLog^R \lambda (\difp_t)
		= \tangent \lambda (\difp_t) \ldot \lambda^{-1}
		\equiv \dot{\lambda} \, \lambda^{-1}
		\in \GauAlgebra(\vec{P}).
\end{equation}
Thus, the action of gauge transformations on connections on \( P \) decomposes into the action \( A_0 \mapsto \AdAction_\lambda A_0 - \difLog^R_t \lambda \) and the natural action on the space  $\ConnSpace (\vec P)$ of connections on \( \vec{P} \).
The crucial observation is that the action only depends on \( \lambda(t) \in \GauGroup(\vec{P}) \) and on the first derivative \( \difLog^R_t \lambda \).
In other words, the action of the group of time-dependent gauge transformations on the pair \( (A_0, \vec{A}) \) factors through the following action of the tangent group \( \TBundle \GauGroup(\vec{P}) \):
\begin{equation}\begin{split}
	(\xi, \lambda) \cdot A_0 &= \AdAction_\lambda A_0 - \xi,
	\\
	(\xi, \lambda) \cdot \vec{A} &= \AdAction_\lambda \vec{A} - \difLog^R \lambda,
\end{split}\end{equation}
where \( (\xi, \lambda) \in \GauAlgebra(\vec{P}) \times \GauGroup(\vec{P}) \) is viewed as the element of \( \TBundle \GauGroup(\vec{P}) \) under the right trivialization \( (\xi, \lambda) \mapsto \xi \ldot \lambda \).
The action of \( \GauGroup \) on the matter field \( \varphi \in \sSectionSpace(F) \) is considerably simpler, because it does not involve derivatives in the time direction and thus comes down to an action of \( \GauGroup(\vec{P}) \) on \( \sSectionSpace(\vec{F}) \) for every moment of time.
In summary, we get a natural action of \( \TBundle \GauGroup(\vec{P}) \) on the extended configuration space of the theory.

Moreover, \( \TBundle \GauGroup(\vec{P}) = \GauAlgebra(\vec{P}) \rSemiProduct \GauGroup(\vec{P}) \).
As \( \SectionSpaceAbb{H} \) is \( \GauGroup(\vec{P}) \)-invariant and does not explicitly depend on \( A_0 \), \cref{prop:clebschLagrange:reductionStages} holds and we have the following commutative diagram of symplectic reductions:
\begin{equation}\begin{tikzcd}[column sep=5.9em, row sep=0.02em]
	\CotBundle \SectionSpaceAbb{Q}_\ext
		\ar[r, twoheadrightarrow, "{\sslash \GauAlgebra(\vec{P})}"]
		\ar[rr, twoheadrightarrow, "{\sslash \TBundle \GauGroup(\vec{P})}", swap, bend right]
	& \CotBundle \SectionSpaceAbb{Q}
		\ar[r,twoheadrightarrow, "{\sslash \GauGroup(\vec{P})}"]
	& \check{\SectionSpaceAbb{M}},
\end{tikzcd}\end{equation}
where \( \check{\SectionSpaceAbb{M}} = \SectionMapAbb{J}^{-1}(0) \slash \GauGroup(\vec{P}) \) is the reduced phase space of the theory.
In the first reduction step, we pass from the variables \( (\vec{A}, D, \varphi, \pi, A_0, \nu) \) to the variables \( (\vec{A}, D, \varphi, \pi) \).
In this language, the temporal gauge \( A_0 = 0 \) often used in physics acquires the geometric interpretation of a section in \( \CotBundle \SectionSpaceAbb{Q} \times \GauAlgebra(\vec{P}) \to \CotBundle \SectionSpaceAbb{Q} \).
Moreover, \cref{prop:clebschLagrange:equivalenceHamiltonianReductions} immediately gives the following.
\begin{thm}
	The following systems of equations are equivalent:
	\begin{thmenumerate}
		\item
			The Yang--Mills--Higgs equations~\eqref{eq:yangMillsHiggs4d}.
		\item
			The Clebsch--Hamilton equations on \( \CotBundle \SectionSpaceAbb{Q} \times \GauAlgebra(\vec{P}) \) with respect to \( \SectionMapAbb{H} \).
		\item
			The constraint Hamilton equations on \( \CotBundle \SectionSpaceAbb{Q}_\ext \) with respect to the Hamiltonian \( \SectionMapAbb{H}_\ext \).
		\item
			The Hamilton equations on \( \check{\SectionSpaceAbb{M}} \) with respect to the reduced Hamiltonian \( \check{\SectionMapAbb{H}}: \check{\SectionSpaceAbb{M}} \to \R \) defined by
			\begin{equation}
				\pi^* \check{\SectionMapAbb{H}} = \restr{\SectionMapAbb{H}}{\SectionMapAbb{J}^{-1}(0)},
			\end{equation}
			where \( \pi: \SectionMapAbb{J}^{-1}(0) \to \check{\SectionSpaceAbb{M}} \) is the natural projection.
			\qedhere 
	\end{thmenumerate}
\end{thm}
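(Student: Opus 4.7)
The theorem is largely an assembly of results already in hand. The equivalence (i)~$\Leftrightarrow$~(ii) was just established in the preceding theorem of this subsection, as a direct consequence of the Clebsch--Euler--Lagrange $\Leftrightarrow$ Clebsch--Hamilton equivalence (\cref{prop:clebschLagrange:legendre}) applied to the regular Clebsch--Lagrangian \( \SectionMapAbb{L} \). It therefore suffices to prove (ii)~$\Leftrightarrow$~(iii)~$\Leftrightarrow$~(iv), for which the plan is to invoke the general \cref{prop:clebschLagrange:equivalenceHamiltonianReductions} with the substitutions $Q \rightsquigarrow \SectionSpaceAbb{Q}$, $G \rightsquigarrow \GauGroup(\vec{P})$, $H \rightsquigarrow \SectionMapAbb{H}$, and $J \rightsquigarrow \SectionMapAbb{J}$ from~\eqref{eq:yangMills:momentumMap}.

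\textbf{Verifying the hypotheses.} \cref{prop:clebschLagrange:equivalenceHamiltonianReductions} requires that \( \SectionMapAbb{H} \) be \( \xi \)-independent and \( G \)-invariant. The first is immediate from~\eqref{eq:yangMills:hamiltonian}: \( A_0 \) does not appear, which reflects the fact that \( \SectionMapAbb{L} \) in~\eqref{calL-YM} is \( A_0 \)-independent. For gauge invariance I would check each of the five summands of~\eqref{eq:yangMills:hamiltonian} separately using the transformation laws \( B \mapsto \AdAction_\lambda B \), \( \dif_{\vec{A}} \varphi \mapsto \lambda \cdot \dif_{\vec{A}} \varphi \), \( D \mapsto \CoAdAction_\lambda D \), and \( \pi \mapsto \lambda \cdot \pi \), together with the equivariance~\eqref{equivariance:hodgeStar} of the Hodge dual, the \( \AdAction_G \)-invariance of the scalar product on \( \LieA{g} \), the \( G \)-invariance of the scalar product on \( \FibreBundleModel{F} \), and the \( G \)-invariance of the Higgs potential. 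Each pairing \( \wedgeDual{\cdot}{\hodgeStar_g \cdot} \) is then manifestly invariant, so \( \SectionMapAbb{H} \) is \( \GauGroup(\vec{P}) \)-invariant.

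\textbf{Assembling the reductions.} Once the hypotheses are in place, \cref{prop:clebschLagrange:equivalenceHamiltonianReductions} yields (ii)~$\Leftrightarrow$~(iii)~$\Leftrightarrow$~(iv), where the reduced space is identified with the two-stage quotient $\check{\SectionSpaceAbb{M}} = \SectionMapAbb{J}^{-1}(0) \slash \GauGroup(\vec{P}) \isomorph \CotBundle \SectionSpaceAbb{Q}_\ext \sslash \TBundle \GauGroup(\vec{P})$ constructed in the reduction-by-stages diagram preceding the theorem. The explicit form~\eqref{calHext} of \( \SectionMapAbb{H}_\ext \) and the identification of the momentum map constraint \( \SectionMapAbb{J} = 0 \) with the Gauß law~\eqref{Gauss-Constr} make the passage (ii)~$\Leftrightarrow$~(iii) entirely transparent, and (iii)~$\Leftrightarrow$~(iv) is the standard reduction-of-dynamics statement for a \( \TBundle \GauGroup(\vec{P}) \)-invariant Hamiltonian whose flow preserves \( \SectionMapAbb{J}_\ext^{-1}(0) \).

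\textbf{Expected obstacle.} The one genuine subtlety is functional-analytic rather than computational: \cref{prop:clebschLagrange:equivalenceHamiltonianReductions} and the underlying reduction-by-stages result are formulated for free proper actions of finite-dimensional Lie groups on finite-dimensional phase spaces, whereas here the \( \GauGroup(\vec{P}) \)-action has non-trivial isotropy at reducible connections and the setting is infinite-dimensional. As was pointed out in \cref{prop:clebschLagrange:reductionStages}, the first stage of the reduction is by the abelian translation action of \( \GauAlgebra(\vec{P}) \) in the \( A_0 \)-direction, which is free and causes no problems. At the level of equations of motion one can always interpret (iv) pointwise on the (possibly singular) level set \( \SectionMapAbb{J}^{-1}(0) \) modulo the gauge action, so no smooth structure on the quotient is required for the equivalence of PDE systems claimed here; the full symplectic singular cotangent bundle reduction is deferred to \parencite{DiezRudolphReduction}.
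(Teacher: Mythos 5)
Your proposal is correct and follows essentially the same route as the paper: the paper likewise obtains (i)\(\Leftrightarrow\)(ii) from the preceding theorem (via \cref{prop:clebschLagrange:legendre} and the \((1+3)\)-decomposition), checks that \( \SectionMapAbb{H} \) is gauge invariant and \( A_0 \)-independent, and then states that \cref{prop:clebschLagrange:equivalenceHamiltonianReductions} immediately gives (ii)\(\Leftrightarrow\)(iii)\(\Leftrightarrow\)(iv). Your remarks on the infinite-dimensional and singular subtleties match the paper's own caveats, which defer the singular reduction to \parencite{DiezRudolphReduction}.
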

Since the Clebsch--Hamiltonian \( \SectionMapAbb{H}: \CotBundle \SectionSpaceAbb{Q} \times \GauAlgebra(\vec{P}) \to \R \) is \( A_0 \)-independent, it may be viewed as an ordinary Hamiltonian on \( \CotBundle \SectionSpaceAbb{Q} \).
We emphasize that the ordinary Hamilton equations on \( \CotBundle \SectionSpaceAbb{Q} \) with respect to \( \SectionMapAbb{H} \) are \emph{not} equivalent to the Yang--Mills--Higgs equations unless the Gauß constraint is also imposed.
The geometry of the reduced phase space \( \check{\SectionSpaceAbb{M}} \) will be studied in \parencite{DiezRudolphReduction}.

\section{General relativity}
We will now consider Einstein's equation of general relativity and show how it fits into the general Clebsch--Lagrange variational framework.

Let \( M \) be a \( 4 \)-dimensional oriented manifold.
We are interested in solutions of Einstein's equation
\begin{equation}
	\Ric_\eta - \frac{1}{2} \RicScalar_\eta \, \eta = 8 \pi T,
\end{equation}
where \( \Ric_\eta \) and \( \RicScalar_\eta \) denote, respectively, the Ricci curvature and the Ricci scalar curvature of the Lorentzian metric \( \eta \) on \( M \) to be determined.
For simplicity, we will restrict attention to the vacuum setting, for which the energy-momentum tensor \( T \) vanishes and Einstein's equation is equivalent to \( \Ric_\eta = 0 \).
Equivalently, we are looking for extrema of the Einstein--Hilbert action
\begin{equation}\label{eq:generalRelativity:einsteinHilbert4D}
	\SectionSpaceAbb{S}[\eta] = \frac{1}{2} \int_M \RicScalar_\eta \, \vol_\eta,
\end{equation}
defined on the space of Lorentzian metrics on \( M \) with signature \( (- + +\, +) \).

In order to formulate~\eqref{eq:generalRelativity:einsteinHilbert4D} as a Clebsch--Lagrange variational problem, we proceed similarly to the study of the Yang--Mills--Higgs equations:
\begin{enumerate}
	\item 
		Using a splitting of spacetime, formulate Einstein's equation as a Cauchy problem in the variables \( (S, \ell, g) \), where \( S \) is the shift vector, \( \ell \) is the lapse function and \( g \) is the spatial metric.
	\item
		Realize \( S \) and \( \ell \) as elements of the Lie algebra of the diffeomorphism group of \( M \) and determine their action on the space of spatial metrics.
	\item
		Calculate the Lagrangian in the \( (1+3) \)-splitting and show that it is of the Clebsch--Lagrange form.
	\item
		Determine the momentum map constraint and compare it to the diffeomorphism and Hamiltonian constraint.
	\item
		Pass to the Hamiltonian picture using the Clebsch--Legendre transformation.
\end{enumerate}

First, we formulate Einstein's equation as a Cauchy problem, see \eg \parencite[Section~10.2]{Wald1984}.
For this purpose, choose a slicing of \( M \), that is, a diffeomorphism \( \iota: \R \times \Sigma \to M \), where \( \Sigma \) is a compact manifold that will play the role of a Cauchy hypersurface.
For simplicity, we assume that \( \Sigma \) has no boundary\footnote{We refer to \parencite[Chapter~20]{Blau2018} for a detailed discussion of boundary terms within the ADM formalism and to \parencite{Kijowski1997} for a new canonical description of the gravitational field dynamics in a finite volume with boundary. Clearly, our approach can be adapted to include boundary terms as well.}.
For every \( t \in \R \), let \( \iota_t: \Sigma \to M \) denote the smooth curve of embeddings associated to \( \iota \).
Assume that these embeddings are spacelike.
Let \( \Sigma_t = \iota_t (\Sigma) \) be the corresponding submanifold of \( M \) and denote its unit\footnote{\( \eta(\nu, \nu) = -1 \).} normal vector field by \( \nu \).
Since \( \iota \) is a diffeomorphism, we have the following splitting of the tangent space 
\begin{equation}
	\TBundle_{\iota(t, x)} M = \R \, \nu_{\iota(t, x)} \oplus \tangent_{(t,x)} \iota (\TBundle_x \Sigma)
\end{equation}
for every \( t \in \R \) and \( x \in \Sigma \).
Accordingly, every vector field on \( M \) decomposes into parts normal and tangent to \( \Sigma_t \).
In particular, for the time vector field, we obtain
\begin{equation}
	\difpFracAt{}{t}{t} \iota_t = \tangent \iota (\difp_t) = \ell \, \nu + \tangent \iota_t (S),
\end{equation} 
where \( \ell(t) \in \sFunctionSpace(\Sigma) \) is the \emphDef{lapse function} and \( S(t) \in \VectorFieldSpace(\Sigma) \) is the \emphDef{shift vector field}.
Moreover, \( \eta \) induces by pull-back a family \( g(t) = \iota^*_t \eta \) of Riemannian metrics on \( \Sigma \).
In terms of these data, the pull-back of \( \eta \) to \( \R \times \Sigma \) is given by 
\begin{equation}
	\iota^* \eta = \Matrix{- \ell^2 + g (S, S) & g(S, \cdot) \\ g(S, \cdot) & g}.
\end{equation}
In particular, the knowledge of the data \( (S, \ell, g) \) is enough to reconstruct \( \eta \) (in a neighborhood around \( \Sigma_0 \)).

\subsection{Formulation as a Clebsch--Lagrange system}
In order to make contact to the general Clebsch--Lagrange theory, we need to identify the Lie algebra-valued fields and the configuration space on which the Lie group acts.
Since the lapse function and the shift vector field are known to be related to the diffeomorphism invariance of the theory, they are natural candidates for the Lie algebra-valued fields leaving the metric \( g \) as the configuration variable.
To make this idea precise, let \( \SectionSpaceAbb{G} \) be the Lie group consisting of diffeomorphisms \( \phi \) of \( \R \times \Sigma \) that are of the form \( \phi(t, x) = (t + \bar{\phi}(x), \varphi(x)) \), where \( \bar{\phi} \) is a real-valued function on \( \Sigma \) and \( \varphi \) is a diffeomorphism of \( \Sigma \).
Using the slicing \( \iota \) we can view \( \SectionSpaceAbb{G} \) as a subgroup of the group \( \DiffGroup(M) \) of diffeomorphisms of \( M \).
If \( \R \times \Sigma \to \Sigma \) is viewed as a principal \( \R \)-bundle, then \( \SectionSpaceAbb{G} \) is identified with the group of principal bundle automorphism.
Clearly, \( \SectionSpaceAbb{G} \) has the following semidirect product structure:
\begin{equation}
	\SectionSpaceAbb{G} = \DiffGroup(\Sigma) \lSemiProduct \sFunctionSpace(\Sigma),
\end{equation}
where the group \( \DiffGroup(\Sigma) \) of diffeomorphisms of \( \Sigma \) acts on \( \sFunctionSpace(\Sigma) \) by pull-back.
Thus, the Lie algebra \( \mathrm{L}\SectionSpaceAbb{G} \) of \( \SectionSpaceAbb{G} \) is the semidirect product of \( \VectorFieldSpace(\Sigma) \) and \( \sFunctionSpace(\Sigma) \).
Hence, \( (S, \ell) \) are naturally elements of \( \mathrm{L}\SectionSpaceAbb{G} \).
Correspondingly, the configuration space \( \SectionSpaceAbb{Q} \) of the theory is the space \( \MetricSpace(\Sigma) \) of Riemannian metrics on \( \Sigma \).
There is a natural left action of \( \SectionSpaceAbb{G} \) on \( \MetricSpace(\Sigma) \) through the projection onto \( \DiffGroup(\Sigma) \):
\begin{equation}
	(\varphi, \bar{\phi}) \cdot g \defeq (\varphi^{-1})^* g .
\end{equation}
Correspondingly, \( (X, f) \in \mathrm{L}\SectionSpaceAbb{G} \) acts on \( g \in \MetricSpace(\Sigma) \) as 
\begin{equation}
	\label{eq:generalRelativity:actionLieAlgebra}
	(X, f) \ldot g = - \difLie_X g.
\end{equation}
Recall (\eg, from \parencite[Equation~232]{Giulini2015}) that the variation of \( g \) is connected to the extrinsic curvature \( k(t) \in \SymTensorFieldSpace^2(\Sigma) \) by
\begin{equation}
	\dot{g} \equiv \difFracAt{}{t}{t} g(t) = \difLie_S g + 2 \ell k,
\end{equation}
where \( \difLie_S \) is the Lie derivative along \( S \).
Thus, by~\eqref{eq:generalRelativity:actionLieAlgebra}, the effective velocity \( \dot{q} + \xi \ldot q \) of the general Clebsch--Lagrange theory is given by
\begin{equation}
	\label{eq:generalRelativity:effectiveVelocity}
	\dot{g} + (S, \ell) \ldot g = \difLie_S g + 2 \ell k - \difLie_S g = 2 \ell k.
\end{equation}

Let us recall how the Einstein--Hilbert action looks like in terms of the variables \( (S, \ell, g) \), see, \eg, \parencites[Appendix~E.2]{Wald1984}[Section~5 and~6]{Giulini2015}.
Using Gauß' formula, the Ricci scalar curvature \( \RicScalar_\eta \) of \( \eta \) can be written in terms of the Ricci scalar curvature \( \RicScalar_g \) of \( g(t) \) and the second fundamental form \( k(t) \in \SymTensorFieldSpace^2(\Sigma) \) of the embedding \( \iota_t \):
\begin{equation}
	\label{eq:generalRelativity:ricciScalar}
	\RicScalar_\eta = \RicScalar_g - 2 \Ric_\eta (\nu, \nu) - \norm{k}^2_g + (\tr_g k)^2.
\end{equation}
Moreover, we have the Ricci equation,
\begin{equation}
	\label{eq:generalRelativity:ricciEq}
	\Ric_\eta(\nu, \nu) = (\tr_g k)^2 - \norm{k}^2_g + \divergence_g v,
\end{equation}
where \( v(t) \in \VectorFieldSpace(\Sigma) \) is a certain time-dependent vector field.
We assume that \( M \) is oriented and time-oriented in such a way that \( \iota^* \vol_\eta = \ell \dif t \wedge \vol_g \).
Then, inserting~\eqref{eq:generalRelativity:ricciEq} into~\eqref{eq:generalRelativity:ricciScalar} and using the fact \( \Sigma \) has no boundary, the Einstein--Hilbert action takes the following form relative to the splitting \( \iota \):
\begin{equation}\label{eq:generalRelativity:einsteinHilbert3D}
	\SectionSpaceAbb{S}[g, S, \ell]
		= \int_0^T \dif t \, \int_\Sigma \ell \left(\RicScalar_{g} + \norm{k}^2_{g} - (\tr_{g} k)^2\right) \, \vol_{g}.
\end{equation}
The space of metrics on \( \Sigma \) is an open cone in the space \( \SymTensorFieldSpace^2(\Sigma) \) of symmetric \( 2 \)-tensors.
Thus, the tangent bundle \( \TBundle \MetricSpace(\Sigma) \) may be identified with \( \MetricSpace(\Sigma) \times \SymTensorFieldSpace^2(\Sigma) \).
We will denote elements of \( \TBundle \MetricSpace(\Sigma) \) by pairs \( (g, h) \) with \( g \in \MetricSpace(\Sigma) \) and \( h \in \SymTensorFieldSpace^2(\Sigma) \).
In terms of these variables, define the Clebsch--Lagrangian \( \SectionSpaceAbb{L}: \TBundle \MetricSpace(\Sigma) \times \mathrm{L} \SectionSpaceAbb{G} \to \R \) by
\begin{equation}
	\SectionSpaceAbb{L}(g, h, S, \ell) = \int_\Sigma \left( \ell \, \RicScalar_{g} + \frac{1}{4 \ell} \norm{h}^2_{g} - \frac{1}{4 \ell} (\tr_{g} h)^2\right) \, \vol_{g},
\end{equation}
which is the counterpart of \( L(q, \dot q, \xi) \) in the general theory.
Then, using~\eqref{eq:generalRelativity:effectiveVelocity}, the Einstein--Hilbert action~\eqref{eq:generalRelativity:einsteinHilbert3D} reads as follows:
\begin{equation}
 	\label{eq:generalRelativity:einsteinHilbert3DClebschForm}
	\SectionSpaceAbb{S}[g, S, \ell]
		= \int_0^T \dif t \, \SectionSpaceAbb{L}(g, 2 \ell k, S, \ell) 
		= \int_0^T \dif t \, \SectionSpaceAbb{L}(g, \dot{g} + (S, \ell) \ldot g, S, \ell).
\end{equation} 
Hence, we see that the Einstein--Hilbert action is of the general Clebsch--Lagrange form~\eqref{eq:clebschLagrange:action} with Clebsch--Lagrangian \( \SectionSpaceAbb{L} \).
We emphasize that, in sharp contrast to the Yang--Mills case, the Clebsch--Lagrangian of general relativity depends on the Lie algebra variable \( \xi = (S, \ell) \), more precisely, it depends on \( \ell \) but not on \( S \).

Now, consider the cotangent bundle of \( \MetricSpace(\Sigma) \).
The dual space to \( \SymTensorFieldSpace^2(\Sigma) \) with respect to the integration pairing is the space \( \SymTensorFieldSpace_2(\Sigma; \VolSpace) \) of volume-form-valued contravariant symmetric \( 2 \)-tensors.
Hence, it is natural to define \( \CotBundle \MetricSpace(\Sigma) = \MetricSpace(\Sigma) \times \SymTensorFieldSpace_2(\Sigma; \VolSpace) \) endowed with the natural symplectic structure.
We will denote the conjugate momenta by \( \pi \).
By \cref{prop:clebschLagrange:legendre}, the associated constraint is given by the momentum map \( \SectionSpaceAbb{J} \) for the lifted \( \DiffGroup(\Sigma) \)-action on \( \CotBundle \MetricSpace(\Sigma) \).
As in the general theory, denote the natural integration paring of \( \mathrm{L}\SectionSpaceAbb{G} \) with \( \mathrm{L}\SectionSpaceAbb{G}^* = \DiffFormSpace^1(\Sigma, \VolSpace) \times \VolSpace(\Sigma) \) by \( \kappa \).
Now, consider the momentum map given by~\eqref{eq:cotangentBundle:momentumMapDef}.
Using~\eqref{eq:generalRelativity:actionLieAlgebra} together with the Koszul identity and metric compatibility, we calculate
\begin{align}
	\kappa\left(\SectionSpaceAbb{J}(g, \pi), (X, f)\right)
		&= \dualPair{\pi}{(X, f) \ldot g}
		\\
		&= - \int_\Sigma \dualPair{\pi}{\difLie_X g}
		\\
		&= - 2 \int_\Sigma \dualPair{\pi}{\Sym g(\nabla_{(\cdot)} X, \cdot)}
		\\
		&= - 2 \int_\Sigma \dualPair{\pi}{\Sym (\nabla_{(\cdot)} X^\flat) (\cdot)},
\intertext{where \( X^\flat(Y) = g(X, Y) \). Since \( \Sigma \) has no boundary, integration by parts yields}
	\kappa\left(\SectionSpaceAbb{J}(g, \pi), (X, f)\right)
		&= 2 \int_\Sigma \dualPair{\divergence_g \pi}{X^\flat},
\end{align}
where the divergence of \( \pi \) is defined as the trace of \( \nabla_{(\cdot)} \pi (\cdot, \cdot) \), that is, in abstract index notation \( (\divergence_g \pi)^j = \nabla_i \pi^{ij} \).
Hence, the momentum map with respect to the natural integration pairing is given by
\begin{equation}
	\SectionSpaceAbb{J}(g, \pi) = \left( 2 (\divergence_g \pi)^\flat, 0 \right) \in \DiffFormSpace^1(\Sigma, \VolSpace) \times \VolSpace(\Sigma).
\end{equation}
The fiber derivative \( \difpFrac{\SectionSpaceAbb{L}}{h}: \TBundle \MetricSpace(\Sigma) \times \mathrm{L}\SectionSpaceAbb{G} \to \CotBundle \MetricSpace(\Sigma) \) is given by
\begin{equation}
	\label{eq:generalRelativity:legendreTransformation}
	\difpFrac{\SectionSpaceAbb{L}}{h}
		= \frac{1}{2 \ell} \bigl( \,g(h, \cdot) - \tr_g h \, \tr_g (\cdot) \bigr) \vol_g.
\end{equation}
Moreover, we have
\begin{equation}
	\difpFrac{\SectionSpaceAbb{L}}{S} = 0,
	\qquad
	\difpFrac{\SectionSpaceAbb{L}}{\ell} = \left( \RicScalar_{g} - \frac{1}{4 \ell^2} \norm{h}^2_{g} + \frac{1}{4 \ell^2} (\tr_{g} h)^2\right) \, \vol_{g}.
\end{equation}
Using these identities, a straightforward calculation shows that the momentum map constraint~\eqref{eq:clebschLagrange:hamiltonian:constraint} takes the form
\begin{subequations}\label{eq:generalRelativity:constraints}\begin{align}
	\label{eq:generalRelativity:constraints:diffeo}
	\divergence_g k - \grad_g (\tr_g k) = 0,
	\\
	\label{eq:generalRelativity:constraints:hamiltonian}
	0 = \RicScalar_{g} - \norm{k}^2_{g} + (\tr_{g} k)^2.
\end{align}\end{subequations}
Equations~\eqref{eq:generalRelativity:constraints:diffeo} and~\eqref{eq:generalRelativity:constraints:hamiltonian} are called the \emphDef{diffeomorphism constraint} and the \emphDef{Hamiltonian constraint}, respectively.
Hence, in our framework, these constraints are derived from the momentum map constraint~\eqref{eq:clebschLagrange:hamiltonian:constraint}.

\begin{remark}
	We emphasize that the Hamiltonian constraint stands on a different footing than the diffeomorphism constraint.
	This is due to the fact that the \( \sFunctionSpace(\Sigma) \)-part of \( \SectionSpaceAbb{G} \) does not act on the configuration space.
	Whereas the diffeomorphism constraint arises from setting the \( \DiffGroup(\Sigma) \)-component of \( \SectionSpaceAbb{J} \) to zero, the Hamiltonian constraint results from \( \difpFrac{\SectionSpaceAbb{L}}{\ell} = 0 \).
	As a consequence, the diffeomorphism constraint is a real momentum map constraint.
		
	Related to these facts is the observation that the Hamiltonian constraint fails to generate a Lie algebra \parencite{BergmannKomar1972} although the lapse function is of course an element of the Lie algebra of vector fields on \( M \).
	Recently, \textcite{BlohmannFernandesWeinstein2013} argued that the correct algebraic setting incorporating the full symmetry of Einstein's equation is that of groupoids and algebroids.
	It would hence be interesting to extend the Clebsch--Lagrange variational principle from Lie group actions to actions of Lie groupoids.
\end{remark}

\subsection{Hamiltonian picture}
Finally, let us briefly discuss the Hamiltonian picture.
According to~\eqref{eq:hamiltonian} the Clebsch--Hamiltonian \( \SectionSpaceAbb{H}: \CotBundle \MetricSpace(\Sigma) \times \mathrm{L} \SectionSpaceAbb{G} \to \R \) is given as the Legendre transform
\begin{equation}
	\SectionSpaceAbb{H}(g, \pi, S, \ell) = \dualPair{\pi}{h} - \SectionSpaceAbb{L}(g, h, S, \ell),
\end{equation}
where \( h \in \SymTensorFieldSpace^2(\Sigma) \) is defined as a function of \( (g, \pi, S, \ell) \) via the relation \( \pi = \difpFrac{\SectionSpaceAbb{L}}{h} (g, h, S, \ell) \).
By~\eqref{eq:generalRelativity:legendreTransformation}, we have
\begin{equation}
	h^\flat = 2 \ell \bar{\pi} - \ell \tr_g \bar{\pi} \, \tr_g (\cdot),
\end{equation}
where \( \bar{\pi} \cdot \vol_g = \pi \).
Hence, we obtain
\begin{equation}
	\SectionSpaceAbb{H}(g, \pi, S, \ell)
		= \int_\Sigma \ell \left( \norm{\bar{\pi}}^2_{g} - \RicScalar_{g} - \frac{1}{2} (\tr_{g} \bar{\pi})^2\right) \, \vol_{g}.
\end{equation}
A straightforward calculation shows that the Clebsch--Hamilton equations~\eqref{eq:clebschLagrange:hamiltonian} relative to this Hamiltonian yield the Einstein equation in its dynamical \( (1+3) \)-form.
More precisely, the dynamical equations~\eqref{eq:clebschLagrange:hamiltonian:hamiltonian} coincide with standard dynamical equations of general relativity, see \eg \parencite[Equation~E.2.35 and~E.2.36]{Wald1984}, and the constraint equations~\eqref{eq:clebschLagrange:hamiltonian:constraint} yield the momentum and Hamiltonian constraints as in the standard ADM formalism, see \eg \parencite[Equation~E.2.34 and~E.2.33]{Wald1984}.

To establish the connection with the ADM formalism, we follow the discussion in \cref{Ham-Pic-gen} and introduce the extended configuration space by 
\begin{equation}
	\SectionSpaceAbb{Q}_\ext \defeq \MetricSpace(\Sigma) \times \VectorFieldSpace(\Sigma) \times \sFunctionSpace(\Sigma),
\end{equation}
whose elements are denoted by \( (g, S, l) \).
The moment map constraints~\eqref{eq:Constr} cut out the following subset of the extended phase space \( \CotBundle \SectionSpaceAbb{Q}_\ext \):
\begin{equation}
	\SectionSpaceAbb{M}_1 \equiv \CotBundle \MetricSpace(\Sigma) \times \VectorFieldSpace(\Sigma) \times \sFunctionSpace(\Sigma) \times \set{0} \times \set{0}.
\end{equation}
We will denote elements of \( \SectionSpaceAbb{M}_1 \) by \( (g, \pi, S, l) \).
By~\eqref{eq:compositeHamiltonian}, the extended Hamiltonian on \( \SectionSpaceAbb{M}_1 \) is given by
\begin{equation}\begin{split}
	\SectionSpaceAbb{H}_\ext(g, \pi, S, l) 
		&= \SectionSpaceAbb{H}(g, \pi, S, l) - \kappa\bigl(\SectionSpaceAbb{J}(g, \pi), (S, l)\bigr)\,
		\\
		&= \int_\Sigma \ell \left( \norm{\bar{\pi}}^2_{g} - \RicScalar_{g} - \frac{1}{2} (\tr_{g} \bar{\pi})^2\right) \vol_g - 2 g(\divergence_g \bar{\pi}, S) \vol_g.
\end{split}\end{equation}
In the presence of a boundary this Hamiltonian will be modified by boundary terms, see \parencite[Equation~20.173]{Blau2018}.
Note that \( \SectionSpaceAbb{H}_\ext \) is the usual Hamiltonian in the ADM formalism (\eg, \parencites{ArnowittDeserMisner1959}[Equation~241]{Giulini2015}).
According to \cref{Ham-Pic-gen}, Hamilton's equations for \( \SectionSpaceAbb{H}_\ext \) coincide with the Clebsch--Hamilton equations for \( \SectionSpaceAbb{H} \) and thus with Einstein's equation (in the \( (1+3) \)-splitting).
Finally, we note that both Hamiltonians \( \SectionSpaceAbb{H} \) and \( \SectionSpaceAbb{H}_\ext \) vanish on the subset cut out by the constraints~\eqref{eq:generalRelativity:constraints}.
This is in accordance with the fact that there is no absolute time in general relativity.
A non-vanishing Hamiltonian would generate time evolution with respect to an external time parameter leading to a violation of the general covariance of the theory.
Instead, dynamics is completely governed by the constraints and evolution appears as a \enquote{gauge flow}. 

\begin{remark}
	Finally, let us comment on the reduction by stages procedure, see \cref{sec:clebschLagrange:reductionStages}.
	We have seen that the physical action~\eqref{eq:generalRelativity:einsteinHilbert3D} exhibits the symmetry group \( \SectionSpaceAbb{G} = \DiffGroup(\Sigma) \lSemiProduct \sFunctionSpace(\Sigma) \).
	According to the discussion in \cref{sec:clebschLagrange:reductionStages}, one would thus like to reduce the symmetry of the system using a symplectic reduction with respect to \( \SectionSpaceAbb{G} \).
	We should note, however, that in \cref{sec:clebschLagrange:reductionStages} we made the assumption that the Lagrangian does not explicitly depend on the \( \xi \)-variables.
	This assumption does not hold for general relativity as \( \SectionMapAbb{L} \) depends on \( \ell \).
	Nonetheless, as a first step, one can restrict attention to the \( \DiffGroup(\Sigma) \)-subgroup and pass to the symplectic quotient \( \CotBundle \SectionSpaceAbb{Q} \sslash \DiffGroup(\Sigma) \).
Up to a delicate discussion of the singular strata of the $\DiffGroup(\Sigma)$-action, the latter quotient coincides with the cotangent bundle of the superspace of Wheeler.
	This reduction procedure corresponds to implementing the momentum map constraint.
	It appears that the Hamiltonian constraint does not admit a similar interpretation in terms of a symplectic reduction.
\end{remark}

\begin{remark}
	In contrast to our approach, \textcite{FischerMarsden1972} have taken the extended configuration space \( \MetricSpace(\Sigma) \times \SectionSpaceAbb{G} \) as the starting point of a geometric analysis of the Einstein equation as a Lagrangian system.
	Note that passing from \( \mathrm{L} \SectionSpaceAbb{G} \) to \( \SectionSpaceAbb{G} \) completely changes the role of the shift vector field and the lapse function: in our approach they are configuration variables whereas in \parencite{FischerMarsden1972} they are generalized velocities.
\end{remark}

\appendix
\section{Appendix: \texorpdfstring{$(1+3)$}{(1+3)}-decomposition of the Yang--Mills--Higgs system}
\label{sec:yangMillsHiggs:decomposition}

In this section, we determine the \( (1+3) \)-decomposition of the Yang--Mills--Higgs system and of the objects involved using a coordinate-independent, geometric language.

\paragraph[Slicing]{Slicing: }
Let us assume that the spacetime \( (M, \eta) \) is globally hyperbolic and time-oriented.
Then, there exists a $3$-dimensional manifold $\Sigma$ and  a diffeomorphism \( \iota: \R \times \Sigma \to M \) such that
\begin{equation}
	\label{eq:splitting:metric}
	\iota^* \eta = - \ell(t)^2 \dif t^2 + g(t),
\end{equation}
where \( \ell \) is a smooth time-dependent positive function on \( \Sigma \) and \( g \) is a time-dependent Riemannian metric on \( \Sigma \), see \parencite[Theorem~1.3.10]{BarGinouxEtAl2007}.
For every \( t \in \R \), we denote by \( \iota_t: \Sigma \to M \) the induced embedding and by \( \Sigma_t \) the image of \( \Sigma \) under \( \iota_t \).
The submanifold $\Sigma_t$ is a Cauchy hypersurface at $t$. 
In what follows, we assume $\Sigma$ to be compact and without boundary.

Following \parencite[Section~6A]{GotayIsenbergMarsden2004}, a \emphDef{slicing} of \( P \) over \( \iota \) is a principal \( G \)-bundle \( \vec{\pi}: \vec{P} \to \Sigma \) and a principal bundle isomorphism \( \hat{\iota} \) fitting into the following diagram:
\begin{equationcd}
	\R \times \vec{P} \to[r, "\hat{\iota}"] \to[d, "\id_\R \times \vec{\pi}", swap]
		& P \to[d, "\pi"]
	\\
	\R \times \Sigma \to[r, "\iota"]
		& M.
\end{equationcd}
By \parencite[Corollary~4.9.7]{Husemoller1966} such a slicing always exists for small times; which suffices for the study of the initial value problem.
In order to simplify the presentation, let us assume that the slicing exists for all \( t \in \R \).

\paragraph[Decomposition of the objects]{Decomposition of the objects: }
Since the Yang--Mills--Higgs equations are functorial, we may suppress the bundle isomorphism \( \hat{\iota} \) and restrict attention to the case \( P = \R \times \vec{P} \) and \( M = \R \times \Sigma \).
Let \( \difp_t \) be the canonical timelike vector field on \( M \) determined by the splitting.
With a slight abuse of notation, we will denote the time vector field on \( P \) by \( \difp_t \), too.
Using this notation, the tangent bundles to \( M \) and \( P \) split into timelike and spacelike directions as follows:
\begin{align}
	\TBundle_{t,x} M &\isomorph \R \, \difp_t \oplus \TBundle_x \Sigma, \\
	\TBundle_{t,p} P &\isomorph \R \, \difp_t \oplus \TBundle_p \vec{P}. \label{eq:splitting:P}
\end{align}
These decompositions yield decompositions of the geometric objects living on \( \Sigma \) and \( P \), respectively.
For example, a vector field on \( M \) decomposes into \( X = X^0 \, \difp_t + \vec{X} \), where \( X^0(t) \in \sFunctionSpace(\Sigma) \) and \( \vec{X}(t) \in \sSectionSpace(\TBundle \Sigma ) \) for every \( t \).
The complex of differential forms on \( M \) refines to a bigraded complex \( \DiffFormSpace^{\cdot, \cdot}(M) \).
Indeed, we can write an arbitrary \( k \)-form \( \alpha \) as
\begin{equation}\label{eq:splitting:diffForm}
	\alpha = \alpha_0 \wedge \dif t + \vec{\alpha},
\end{equation}
with \( \alpha_0(t) \in \DiffFormSpace^{k-1}(\Sigma) \) and \( \vec{\alpha}(t) \in \DiffFormSpace^k(\Sigma) \).
A similar decomposition holds for differential forms with values in an associated bundle \( F = P \times_G \FibreBundleModel{F} \).
In this case, \( \alpha_0 \) and \( \vec{\alpha} \) are time-dependent differential forms on \( \Sigma \) with values in the bundle \( \vec{F} = \vec{P} \times_G \FibreBundleModel{F} \).
In the sequel, the restriction of the differential \( \dif \) on \( M \) to \( \Sigma \) will be denoted by \( \dif_\Sigma \).
A similar notation is used for the submanifold \( \vec{P} \subset P \).
A straightforward calculation shows that the exterior differential of \( \alpha \in \DiffFormSpace^k(\Sigma) \) decomposes as
\begin{equation}\label{eq:splitting:decomposeExtDiff}
	\dif \alpha = \left(\dif_\Sigma \alpha_0 + (-1)^k \, \dot{\vec{\alpha}}\right) \wedge \dif t + \dif_\Sigma \vec{\alpha},
\end{equation}
where the time-dependent \( k \)-form \( \dot{\vec{\alpha}} \) on \( \Sigma \) is defined by 
\begin{equation}
	\dot{\vec{\alpha}} (X_1, \ldots, X_k) \defeq \difp_t (\vec{\alpha}(X_1, \ldots, X_k))
\end{equation}
for \( X_i \in \TBundle \Sigma \).
A connection \( A \) in \( P \) can be seen as a \( G \)-equivariant \( 1 \)-form on \( P \) with values in \( \LieA{g} \) and thus, by~\eqref{eq:splitting:P}, decomposes as
\begin{equation}
	A = A_0 \dif t + \vec{A}
\end{equation}
into \( A_0(t) \in \sSectionSpace(\AdBundle \vec{P}) \) and a time-dependent connection \( \vec{A}(t) \) in \( \vec{P} \).
Clearly, \( A_0 = A(\difp_t) \) has the geometric meaning of measuring how much \( \difp_t \) fails to be horizontal with respect to \( A \).
The horizontal lift \( X^A_{t,p} \) of a vector \( X = X^0(t) \difp_t + \vec{X}(t) \in \TBundle_{t,x} M \) to the point \( (t,p) \in P \) has the form
\begin{equation}
	X^A_{t,p} = X^0(t) \, \difp_t + \bigl(\vec{X}^{\vec{A}}_p(t) - X^0(t) \, A_0(t,p) \ldot p\bigr),
\end{equation}
where \( \vec{X}^{\vec{A}}(t) \) denotes the horizontal lift of \( \vec{X}(t) \) with respect to the induced connection \( \vec{A}(t) \) in \( \vec{P} \).
Indeed, \( X^A_{t,p} \) is clearly a lift of \( X_{t,x} \) and, since \( A(X^A) = X^0 A_0 + 0 - X^0 A_0 = 0 \), it is horizontal.

Let us compute the decomposition of the covariant exterior differential.
Recall that, for every \( \alpha \in \DiffFormSpace^k(M, F) \), there exists a unique horizontal \( k \)-form \( \tilde{\alpha} \in \DiffFormSpace^k(P, \FibreBundleModel{F})^G \) that projects onto \( \alpha \).
Similarly to the reasoning above, every \( G \)-equivariant form \( \tilde{\alpha} \in \DiffFormSpace^k(P, \FibreBundleModel{F})^G \) can be written as a sum,
\begin{equation}
	\tilde{\alpha} = \tilde{\alpha}_0 \wedge \dif t + \vec{\tilde{\alpha}}, 
\end{equation}
where \( \tilde{\alpha}_0(t) \in \DiffFormSpace^{k-1}(\vec{P}, \FibreBundleModel{F})^G \) and \( \vec{\tilde{\alpha}}(t) \in \DiffFormSpace^k(\vec{P}, \FibreBundleModel{F})^G \).
Decomposing \( \dif \tilde{\alpha} \) similarly to~\eqref{eq:splitting:decomposeExtDiff}, we calculate
\begin{equation}\begin{split}
	\dif_A \tilde{\alpha}
		&= \dif \tilde{\alpha} + A \wedgeldot \tilde{\alpha}
		\\
		&= \left(\dif_{\vec{P}} \, \tilde{\alpha}_0 + (-1)^k \, \dot{\vec{\tilde{\alpha}}}\right) \wedge \dif t + \dif_{\vec{P}} \, \vec{\tilde{\alpha}} + (A_0 \dif t + \vec{A}) \wedgeldot (\tilde{\alpha}_0 \wedge \dif t + \vec{\tilde{\alpha}})
		\\
		&= \left(\dif_{\vec{A}} \tilde{\alpha}_0 + (-1)^k \, \dot{\vec{\tilde{\alpha}}} + (-1)^{k} A_0 \ldot \vec{\tilde{\alpha}}\right) \wedge \dif t + \dif_{\vec{A}} \vec{\tilde{\alpha}}.
\end{split}\end{equation}
Introducing the covariant time-derivative \( \partial_t^{A_0} \beta \defeq \dot{\beta} + A_0 \ldot \beta \) for a time-dependent \( \beta(t) \in \DiffFormSpace^k(\Sigma, \vec{F}) \), we thus find
\begin{equation}\label{eq:splitting:covExtDiff}
	\dif_A \alpha =
		\left(\dif_{\vec{A}} \alpha_0 + (-1)^k \partial_t^{A_0}\vec{\alpha}\right) \wedge \dif t + \dif_{\vec{A}} \vec{\alpha}.
\end{equation}
In particular, in degree \( 0 \), a section \( \varphi \in \sSectionSpace(F) \) is the same as a time-dependent section of \( \vec{F} \), which we continue to denote by \( \varphi \).
We then have
\begin{equation}
	\label{eq:splitting:covExtDiff:section}
	\dif_A \varphi
		= \left(\partial_t^{A_0}\varphi\right) \dif t + \dif_{\vec{A}} \varphi.
\end{equation}

A similar strategy also works for the decomposition of the curvature form.
Indeed, the structure equation for the connection \( A = A_0 \dif t + \vec{A} \) combined with the decomposition~\eqref{eq:splitting:decomposeExtDiff} of the exterior differential yields
\begin{equation}\begin{split}
	F_A  
		&= \dif A + \frac{1}{2} \wedgeLie{A}{A}
		\\
		&= \left(\dif_{\vec{P}} \, A_0 - \dot{\vec{A}}\right) \wedge \dif t + \dif_{\vec{P}} \, \vec{A} 
			+ \frac{1}{2} \wedgeLie*{(A_0 \dif t + \vec{A})}{(A_0 \dif t + \vec{A})}
		\\
		&= \left(\dif_{\vec{P}} \, A_0 - \dot{\vec{A}} + \LieBracket{\vec{A}}{A_0}\right) \wedge \dif t + \dif_{\vec{P}} \, \vec{A} + \frac{1}{2} \wedgeLie{\vec{A}}{\vec{A}}
		\\
		&= \left(\dif_{\vec{A}} \, A_0 - \dot{\vec{A}}\right) \wedge \dif t + F_{\vec{A}}.
\end{split}\end{equation}
As usual, we introduce the time-dependent color-electric field \( E \defeq \dif_{\vec{A}} \, A_0 - \dot{\vec{A}} \) and the color-magnetic field \( B \defeq F_{\vec{A}} \).
Then, the decomposition of the curvature takes the form
\begin{equation}
	\label{eq:splitting:curvature}
 	F_A = E \wedge \dif t + B \,.
\end{equation}

We also need to decompose the Hodge-star operator $\hodgeStar$ corresponding to $\eta$.
From now on, we denote it by $\hodgeStar_\eta $ as opposed to the Hodge star corresponding to the time-dependent Riemannian metric $g$, which we denote by $\hodgeStar_g $.
In the usual convention (see \eg \parencite[Definition~4.5.1]{RudolphSchmidt2012}), the Hodge star operator contains a minus sign depending on the signature of the metric:
\begin{alignat}{2}
	\hodgeStar_\eta \alpha 
		&= - \alpha^{\sharp_\eta} \contr \vol_\eta,
		\qquad
		&&\alpha \in \DiffFormSpace^k (M)\, ,
	\\
	\hodgeStar_g \beta 
		&= \beta^{\sharp_g} \contr \vol_g,
		\qquad
		&&\beta \in \DiffFormSpace^l (\Sigma)\, ,
\end{alignat}
where \( \alpha^{\sharp_\eta} \) is the \( k \)-vector field associated to \( \alpha \) via the metric \( \eta \).
Furthermore, we assume that \( M \) be time-oriented in such a way that 
\begin{equation}
	\vol_\eta = \ell \dif t \wedge \vol_{g},
\end{equation}
where the lapse function \( l(t) \) was introduced in~\eqref{eq:splitting:metric}.
Let \( \beta(t) \in \DiffFormSpace^k(\Sigma) \) be a time-dependent \( k \)-form on \( \Sigma \).
Since the metric \( \eta \) has positive signature in the spacelike directions, we have \( \beta^{\sharp_\eta} = \beta^{\sharp_{g}} \).
Thus, we obtain
\begin{equation}\begin{split}
	\hodgeStar_\eta \beta
		&= - \beta^{\sharp_\eta} \contr \vol_\eta
		\\
		&= - \beta^{\sharp_g} \contr (\ell \dif t \wedge \vol_g)
		\\
		&= (-1)^{k+1} \ell \dif t \wedge (\beta^{\sharp_g} \contr \vol_g)
		\\
		&= (-1)^{k+1} \ell \dif t \wedge \hodgeStar_g \beta
		\\
		&= \ell (\hodgeStar_g \beta) \wedge \dif t .
\end{split}\end{equation}
Take \( \alpha = \alpha_0 \wedge \dif t + 
\vec{\alpha} \in \DiffFormSpace^k(M, F) \). 
Using the identity \( \hodgeStar_\eta \, (\beta \wedge \dif t) = (\dif t)^{\sharp_\eta} \contr \hodgeStar_\eta \beta \) and \( (\dif t)^{\sharp_\eta} = - \ell^{-2} \difp_t \), we find 
\begin{equation}\label{eq:splitting:hodgeStar}\begin{split}
	\hodgeStar_\eta \alpha
		&= - \ell^{-2}\difp_t \contr (\hodgeStar_{\eta} \alpha_0) + \hodgeStar_\eta \vec{\alpha}
		\\
		&= - \ell^{-2}\difp_t \contr (\ell (\hodgeStar_g \alpha_0) \wedge \dif t ) + \ell (\hodgeStar_g \vec{\alpha}) \wedge \dif t
		\\
		&= (-1)^{k+1} \ell^{-1} \hodgeStar_g \alpha_0 + \ell (\hodgeStar_g \vec{\alpha}) \wedge \dif t .
\end{split}\end{equation}

Finally, let us determine the decomposition of the diamond product defined in~\eqref{eq:yangMillsHiggs:defDiamond}.
Let \( \alpha \in \DiffFormSpace^k(M, F) \), \( \beta \in \DiffFormSpace^{4-r-k}(M, F^*) \) and \( \xi \in \DiffFormSpace^r(M, \AdBundle P) \).
Decomposing these objects with respect to~\eqref{eq:splitting:diffForm}, we obtain
\begin{equation}\begin{split}
	&\wedgeDual{\xi}{(\alpha \diamond \beta)}
		\\
		&= \wedgeDual{(\xi \wedgeldot \alpha)}{\beta}
		\\
		&= \wedgeDual{\left((\xi_0 \dif t + \vec{\xi}) \wedgeldot (\alpha_0 \wedge \dif t + \vec{\alpha})\right)}{\left(\beta_0 \wedge \dif t + \vec{\beta}\right)}
		\\
		&= \wedgeDual{(\xi_0 \dif t \wedgeldot \vec{\alpha})}{\vec{\beta}}
			+ (-1)^{k+r} \wedgeDual{(\vec{\xi} \wedgeldot \alpha_0)}{\vec{\beta}} \wedge \dif t
			+ \wedgeDual{(\vec{\xi} \wedgeldot \vec{\alpha})}{\beta_0} \wedge \dif t
		\\
		&= \wedgeDual{\xi_0 \dif t}{(\vec{\alpha} \diamond_\Sigma \vec{\beta})} 
			+ \wedgeDual{\vec{\xi}}{\left((-1)^{k+r} \, \alpha_0 \diamond_\Sigma \vec{\beta} + \vec{\alpha} \diamond_\Sigma \beta_0\right)} \wedge \dif t
		\\
		&= \wedgeDual{\xi}{\left(\vec{\alpha} \diamond_\Sigma \vec{\beta} + ((-1)^{k+r} \, \alpha_0 \diamond_\Sigma \vec{\beta} + \vec{\alpha} \diamond_\Sigma \beta_0) \wedge \dif t \right)},
\end{split}\end{equation}
where \( \diamond_\Sigma \) denotes the diamond operation on \( \Sigma \).
Hence, for \( \alpha \in \DiffFormSpace^k(M, F) \) and \( \beta \in \DiffFormSpace^{4-r-k}(M, F^*) \),
\begin{equation}\label{eq:splitting:diamondProd}
	\alpha \diamond \beta = \left((-1)^{k+r} \, \alpha_0 \diamond_\Sigma \vec{\beta} + \vec{\alpha} \diamond_\Sigma \beta_0\right) \wedge \dif t  + \vec{\alpha} \diamond_\Sigma \vec{\beta}.
\end{equation}

\paragraph[Decomposition of the equations of motion]{Decomposition of the equations of motion:}
We now turn our attention to the Yang--Mills--Higgs equations.
Using the decomposition of the Hodge operator~\eqref{eq:splitting:hodgeStar}, of the curvature~\eqref{eq:splitting:curvature} and of the exterior differential~\eqref{eq:splitting:covExtDiff:section}, the Lagrangian~\eqref{eq:yangMillsHiggs:action} can be written as:
\begin{equation}\label{eq:splitting:lagrangian}\begin{split}
	L(A, \varphi) 
		&= \frac{1}{2} \wedgeDual{F_A}{\hodgeStar_\eta F_A} + \frac{1}{2} \wedgeDual{\dif_A \varphi}{\hodgeStar_\eta \dif_A \varphi} - V(\varphi) \vol_\eta
		\\
		&= \frac{1}{2} \wedgeDual{\bigl(E \wedge \dif t + B\bigr)}{\bigl(-\ell^{-1}\hodgeStar_g E + \ell (\hodgeStar_g B) \wedge \dif t\bigr)} 
		\\
		&\quad + \frac{1}{2} \wedgeDual{\bigl((\partial_t^{A_0} \varphi) \, \dif t + \dif_{\vec{A}} \varphi\bigr)}{\bigl(\ell^{-1} \hodgeStar_g (\partial_t^{A_0} \varphi) + \ell (\hodgeStar_g \dif_{\vec{A}} \varphi) \wedge \dif t\bigr)} \\
		&\quad - \ell \, V(\varphi) \dif t \wedge \vol_g
		\\
		&= \dif t \wedge \bigg(\frac{1}{2 \ell} \wedgeDual{E}{\hodgeStar_{g} E} - \frac{\ell}{2}\wedgeDual{B}{\hodgeStar_g B}
		\\
		&\hphantom{{}=\dif t \wedge } \quad + \frac{1}{2 \ell} \wedgeDual{\partial_t^{A_0} \varphi}{\hodgeStar_g \partial_t^{A_0} \varphi} - \frac{\ell}{2} \wedgeDual{\dif_{\vec{A}} \varphi}{\hodgeStar_g \dif_{\vec{A}} \varphi} - \ell \, V(\varphi) \vol_g \bigg).
\end{split}\end{equation}

Now, the decomposition of the field equations may be accomplished either by taking the variation of this action or by directly inserting the above decompositions of the fields into the covariant field equations~\eqref{eq:yangMillsHiggs4d}.
The first approach is the subject of \cref{sec:yangMillls:asClebschLagrange} and, here, we take the second, more direct option.
\begin{prop}[Decomposition of the Yang--Mills--Higgs equations]
	\label{prop:splitting:yangMillsHiggs}
	The Yang--Mills--Higgs equations~\eqref{eq:yangMillsHiggs4d} decompose into
	\begin{subequations}\label{eq:splitting:yangMillsHiggs3dEvol}\begin{align+}
		\dif_{\vec{A}} (\ell \hodgeStar_g B) - \partial_t^{A_0} (\ell^{-1} \hodgeStar_{g} E) &= - \ell \, \varphi \diamond_\Sigma (\hodgeStar_g \dif_{\vec{A}} \varphi),
		\label{eq:splitting:yangMillsHiggs3d:evol:ampere}
		\\
		\dif_{\vec{A}} (\ell^{-1} \hodgeStar_{g} E) &= \ell^{-1} \, \varphi \diamond_\Sigma (\hodgeStar_g \partial_t^{A_0} \varphi),
		\label{eq:splitting:yangMillsHiggs3dConstraint}
		\\
		- \partial_t^{A_0} (\ell^{-1} \hodgeStar_g \partial_t^{A_0} \varphi) + \dif_{\vec{A}} (\ell \hodgeStar_g \dif_{\vec{A}} \varphi) &= \ell \, V' (\varphi)\vol_g .
		\label{eq:splitting:yangMillsHiggs3d:evol:higgs}
	\end{align+}\qedhere\end{subequations}
\end{prop}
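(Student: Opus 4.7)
The proof is a direct calculation: the objects appearing in the four-dimensional Yang--Mills--Higgs equations have already been decomposed in the preceding paragraphs, so the plan is simply to assemble these decompositions, project onto the spatial and temporal components of the resulting forms, and read off the three equations in~\eqref{eq:splitting:yangMillsHiggs3dEvol}.

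Concretely, I would proceed as follows. For the first equation~\eqref{eq:yangMillsHiggs:4d:ym}, namely \( \dif_A \hodgeStar F_A + \varphi \diamond \hodgeStar \dif_A \varphi = 0 \), I would first feed the curvature decomposition \( F_A = E \wedge \dif t + B \) from~\eqref{eq:splitting:curvature} into the Hodge-star formula~\eqref{eq:splitting:hodgeStar} (with \( k=2 \)) to obtain \( \hodgeStar_\eta F_A = -\ell^{-1} \hodgeStar_g E + \ell (\hodgeStar_g B) \wedge \dif t \). Then I would apply the covariant exterior differential via~\eqref{eq:splitting:covExtDiff}, yielding
\begin{equation*}
	\dif_A \hodgeStar_\eta F_A = \bigl(\dif_{\vec{A}}(\ell \hodgeStar_g B) - \partial_t^{A_0}(\ell^{-1} \hodgeStar_g E)\bigr) \wedge \dif t - \dif_{\vec{A}}(\ell^{-1} \hodgeStar_g E).
\end{equation*}
For the matter term, I would apply~\eqref{eq:splitting:hodgeStar} to \( \dif_A \varphi \) decomposed as in~\eqref{eq:splitting:covExtDiff:section} and then feed the result, together with \( \varphi \), into the diamond-product decomposition~\eqref{eq:splitting:diamondProd} (with \( k=0, r=1 \)). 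Splitting the resulting equation into its \( \wedge \dif t \)-component and its purely spatial component produces exactly~\eqref{eq:splitting:yangMillsHiggs3d:evol:ampere} and the Gauß-type constraint~\eqref{eq:splitting:yangMillsHiggs3dConstraint}, respectively.

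For the second equation~\eqref{eq:yangMillsHiggs:4d:higgs}, I would reuse the decomposition of \( \hodgeStar_\eta \dif_A \varphi \) obtained above and apply~\eqref{eq:splitting:covExtDiff} with \( k=3 \). Since \( \Sigma \) is three-dimensional, the purely spatial part of \( \dif_A \hodgeStar_\eta \dif_A \varphi \) vanishes for degree reasons, so the whole four-form equation reduces to its coefficient of \( \dif t \). The remaining bookkeeping is to rewrite \( V'(\varphi) \vol_\eta = \ell \, V'(\varphi) \, \dif t \wedge \vol_g \) in the standard form \( \alpha_0 \wedge \dif t \) (picking up a sign \( (-1)^3 \) when commuting \( \dif t \) past the three-form \( \vol_g \)), after which matching coefficients gives exactly~\eqref{eq:splitting:yangMillsHiggs3d:evol:higgs}.

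The computation is structurally routine once the decomposition formulas of the preceding paragraphs are in place; the only genuinely delicate point is sign bookkeeping. In particular, the signs \( (-1)^{k+1} \) from~\eqref{eq:splitting:hodgeStar}, the \( (-1)^k \) from~\eqref{eq:splitting:covExtDiff}, the \( (-1)^{k+r} \) from~\eqref{eq:splitting:diamondProd}, and the sign picked up from commuting \( \dif t \) past a top form on \( \Sigma \) all enter simultaneously, and they must combine to give the prefactors stated in~\eqref{eq:splitting:yangMillsHiggs3dEvol}. This is the step where one has to be most careful; everything else is substitution and projection onto bidegrees.
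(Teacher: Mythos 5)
Your proposal is correct and follows essentially the same route as the paper's proof: substitute the decompositions~\eqref{eq:splitting:curvature}, \eqref{eq:splitting:hodgeStar}, \eqref{eq:splitting:covExtDiff}, \eqref{eq:splitting:covExtDiff:section} and~\eqref{eq:splitting:diamondProd} into the covariant field equations~\eqref{eq:yangMillsHiggs4d} and match the \( \dif t \)-components and purely spatial components, with the sign from commuting \( \dif t \) past \( \vol_g \) handled exactly as in the paper. No gaps.
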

\begin{proof}
Using the decomposition formulae for the covariant exterior derivative~\eqref{eq:splitting:covExtDiff} and for the Hodge operator~\eqref{eq:splitting:hodgeStar}, we calculate
\begin{subequations}\begin{align}
	\begin{split}
	\dif_A \hodgeStar_\eta F_A 
		&= \dif_A \hodgeStar_\eta (E \wedge \dif t + B)
		\\
		&= \dif_A \left(\ell (\hodgeStar_g B) \wedge \dif t - \ell^{-1} \hodgeStar_g E \right)
		\\
		&=  \left(\dif_{\vec{A}} (\ell \hodgeStar_g B) - \partial_t^{A_0} (\ell^{-1} \hodgeStar_{g} E)\right) \wedge \dif t - \dif_{\vec{A}} (\ell^{-1} \hodgeStar_{g} E),
	\end{split}
\intertext{Similarly, using~\eqref{eq:splitting:diamondProd}, we find}
	\begin{split}
	\varphi \diamond \hodgeStar_\eta \dif_A \varphi 
		&= \varphi \diamond \hodgeStar_\eta \left((\partial_t^{A_0} \varphi) \, \dif t + \dif_{\vec{A}} \varphi\right)
		\\
		&= \varphi \diamond \bigl(\ell^{-1} \hodgeStar_g \partial_t^{A_0} \varphi + \ell (\hodgeStar_g \dif_{\vec{A}} \varphi) \wedge \dif t\bigr)
		\\
		&= \ell^{-1} \, \varphi \diamond_\Sigma (\hodgeStar_g \partial_t^{A_0} \varphi) + \ell \bigl(\varphi \diamond_\Sigma (\hodgeStar_g \dif_{\vec{A}} \varphi)\bigr) \wedge \dif t .
	\end{split}
\end{align}\end{subequations}
Comparing these equations with~\eqref{eq:yangMillsHiggs:4d:ym}, we obtain~\eqref{eq:splitting:yangMillsHiggs3d:evol:ampere} and~\eqref{eq:splitting:yangMillsHiggs3dConstraint}.
Finally,
\begin{equation}
	\begin{split}
	\dif_A \hodgeStar_\eta \dif_A \varphi 
		&= \dif_A (\ell^{-1} \hodgeStar_g \partial_t^{A_0} \varphi + \ell \hodgeStar_g \dif_{\vec{A}} \varphi \wedge \dif t)
		\\
		&= \left(- \partial_t^{A_0} (\ell^{-1} \hodgeStar_g \partial_t^{A_0} \varphi) + \dif_{\vec{A}} (\ell \hodgeStar_g \dif_{\vec{A}} \varphi) \right) \wedge \dif t
	\end{split}
\end{equation}
shows that~\eqref{eq:yangMillsHiggs:4d:higgs} is equivalent to the evolution equation~\eqref{eq:splitting:yangMillsHiggs3d:evol:higgs} for the Higgs field.
\end{proof}
\begin{remark}
	Using~\eqref{eq:splitting:covExtDiff} and~\eqref{eq:splitting:curvature}, the Bianchi identity for \( A \) reads:
	\begin{equation}
		0 = \dif_A F_A 
			= \left(\dif_{\vec{A}} E + \partial_t^{A_0}B\right) \wedge \dif t + \dif_{\vec{A}} B.
	\end{equation}
	This implies Faraday's law and the Bianchi identity for \( \vec{A} \):
	\begin{align}
			\dif_{\vec{A}} E + \partial_t^{A_0} B &= 0,
			\label{eq:yangMillsHiggs3d:evol:faraday}
			\\
			\dif_{\vec{A}} B &= 0.
			\label{eq:yangMillsHiggs3d:evol:bianchi}
	\end{align}
	Note that this set of equations together with~\eqref{eq:splitting:yangMillsHiggs3d:evol:ampere} and~\eqref{eq:splitting:yangMillsHiggs3dConstraint} constitute the non-abelian counterpart of Maxwell's equations (coupled to matter).
\end{remark}

\begin{refcontext}[sorting=nyt]{}
	\printbibliography
\end{refcontext}

\end{document}